\newtheorem{theorem}{Theorem}
\newtheorem{lemma}[theorem]{Lemma}
\newtheorem{corollary}[theorem]{Corollary}
\newtheorem{observation}[theorem]{Observation}
\newcommand*{\Na}{\mathbb{N}}
\newcommand{\set}[2]{\{\, #1 \mid #2 \,\}}
\newcommand*{\bigO}{\mathcal{O}}
\newcommand*{\bigOmega}{\Omega}
\newcommand*{\cost}{\text{cost}}
\newcommand*{\s}{\mathbf{s}}
\date{}
\begin{document}

\title{Network Creation Games with 2-Neighborhood Maximization}

\renewcommand\Authfont{\normalsize}  
\renewcommand\Affilfont{\small}  
\renewcommand{\Authands}{, }  
\makeatletter
\renewcommand{\AB@authnote}[1]{}  
\makeatother

\author{Merlin de la Haye\textsuperscript{1}\orcidlink{0009-0000-9517-4609}, 
Pascal Lenzner\textsuperscript{2}\orcidlink{0000-0002-3010-1019}, 
Daniel Schmand\textsuperscript{3}\orcidlink{0000-0001-7776-3426}, and 
Nicole Schröder\textsuperscript{3}\orcidlink{0009-0008-2589-3809}}

\affil[1]{Hasso Plattner Institute, University of Potsdam, Potsdam, Germany\\ \texttt{merlin.delahaye@hpi.de}}
\affil[2]{Institute of Computer Science, University of Augsburg, Augsburg, Germany\\ \texttt{pascal.lenzner@uni-a.de}}
\affil[3]{University of Bremen, Bremen, Germany\\ \texttt{\{schmand,nic\_sch\}@uni-bremen.de}}

\maketitle

\begin{abstract}
\noindent Network creation games are well-established for investigating the decentralized formation of communication networks, like the Internet or social networks. In these games, selfish agents that correspond to network nodes strategically create costly edges to maximize their centrality in the formed network. We depart from this by focusing on the simpler objective of maximizing the 2-neighborhood. This seems natural for social networks, as an agent's connection benefit is typically provided by her neighbors and their neighbors but not by strangers further away. 

For this natural model, we study the existence, the structure and the quality both of Nash equilibria (NE) and greedy equilibria (GE). 
We give structural results on the existence of degree-2 paths and cycles, and we provide tight constant bounds on the diameter. In contrast to most previous network creation game research, our bounds on the diameter are independent of edge cost $\alpha$ and the number of agents $n$. Also, bounding the diameter does not imply bounding the price of anarchy, which calls for other methods. Using them, we obtain non-trivial bounds on the price of anarchy, including a $\Omega\left(\log\left(\frac{n}{\alpha}\right)\right)$ lower bound for NE, and a tight linear bound for~GE for~low~$\alpha$.
\smallskip

\noindent \textbf{Keywords:} Network Creation Games, Social Networks, Price of Anarchy, Network Density, Diameter,  Structure of Equilibria
\end{abstract}

\section{Introduction} \label{sec:Intro}
Networks are essential in many domains, ranging from the shareholder value of multi-national corporations to career prospects of individuals. Their core feature is that they allow individual firms or people to profit from the abilities, knowledge, products, or results of their network neighbors. However, the benefit of network neighbors is not restricted to those neighbors, but for many applications, also the neighbors' neighbors, or the friends of our friends, are crucial. This holds true for business interactions, job or housing search and even for dating. In all these settings, a common acquaintance, friend, or business partner can serve as a broker for valuable information. This explains why real-world networks often feature high-degree nodes~\cite{nr-aaai15,Bar16}, i.e., very well-connected individuals.
Thus, from the perspective of an individual in such a network, it is crucial to connect to the right set of other individuals to maximize the personal benefit obtained from the network~\cite{burt1995structural,KleinbergSTW08}. This benefit consists of direct benefit from the network neighbors but also indirect benefit from the neighbors of those neighbors. Interestingly, typically this benefit does not transcend much further, i.e., to friends of friends of friends, since for such interactions a joint neighbor that can serve as a trusted information broker is lacking and the communication overhead increases drastically. There is thus a huge qualitative difference between interacting with friends or friends of friends compared to interacting with random strangers.

Another key observation is that many important networks, like the Internet and all kinds of social networks, ranging from business partner networks over friendship networks to online social networks, are not centrally designed or operated. Instead, they emerge from the strategic connection decisions of the individual network participants. Thus, for studying networks, an agent-based game-theoretic approach is natural~\cite{Papadimitriou01}.

In this paper, we study a simple and elegant model for the formation of networks that features agents aiming for a trade-off between the costly establishment and maintenance of network links and the derived direct and indirect benefit of those links. This is measured with the size of their $2$-neighborhood in the created network, i.e.\ the number of network nodes in hop-distance at most~$2$. Our model arises as a natural variant of Jackson and Wolinsky's seminal connections model~\cite{JW96}, where the mutual benefit of individuals in a network decreases with their hop-distance and it can also be seen as a modification of the well-known network creation game by Fabrikant, Luthra, Maneva, Papadimitriou, and Shenker~\cite{fabrikantNetwork2003}, where instead of minimizing the distance to all network nodes, maximizing the size of the 2-neighborhood is used as a simplified proxy. 

Our model is a special case of the star celebrity game setting introduced by {\`A}lvarez, Blesa, Duch, Messegu{\'e}, and Serna~\cite{ABDMS16celebrity}. We extend their work by showing that even the considered special case admits a structurally rich set of Nash equilibria. Also, we establish additional structural properties and we improve the diameter bound. Moreover, we show that the game does not have the finite improvement property and we also consider greedy equilibria as solution concept. Finally, since for our setting the tight price of anarchy bound presented in~\cite{ABDMS16celebrity} crucially relies on inconsistent tie-breaking, we provide a new non-trivial lower bound that also holds for consistent tie-breaking.

\subsection{Model and Preliminaries}

Let $G = (V,E)$ be any undirected graph, where $V$ is the set of nodes and $E$ is the set of edges. Let $d_G(u,v)$ denote the hop-distance between nodes $u,v\in V$, i.e., the number of edges on a shortest $u$-$v$-path in $G$.
For any node $u\in V$, let $N_2^G(u) = \{v\in V\mid d_G(u,v)\leq 2\}$ denote the \emph{2-neighborhood} of node $u$ in $G$, i.e., the set of nodes in $G$ that are in hop-distance at most $2$ from~$u$. 

We consider a strategic game, called the \emph{2-neighborhood maximization game (2-NMG)}, where $n$ agents $V = \{1,\dots,n\}$ that correspond to nodes in a graph can strategically create costly edges to other nodes to maximize their 2-neighborhood in the arising graph. Formally, every agent $u\in V$ selects a \emph{strategy} $S_u \subseteq V\setminus ~\{u\}$, which is any subset of the other agents. Let $\s = (S_1,\dots,S_n)$ denote the corresponding \emph{strategy profile}. For convenience, we use $\s = (S_i,\s_{-i})$, where $\s_{-i}$ denotes the strategy profile $\s$ without agent $i$'s strategy~$S_i$.

Every strategy profile $\s$ determines a \emph{created network} $G(\s)$ that consists of all the created edges, i.e., $G(\s) = (V,E(\s))$, where $E(\s) = \{\{u,v\} \mid u\in S_v \vee v\in S_u\}$. Note that we use the terms \emph{graph} and \emph{network} interchangeably. Given a created network $G(\s) = (V,E(\s))$, the \emph{cost} of agent $u\in V$ is defined as 
$$\cost_u(\s) = \alpha\cdot|S_u| + n-\left|N_2^{G(\s)}(u)\right|\,,$$ 
where $\alpha\in \mathbb{R}_{>0}$ is a fixed constant that is a parameter of our game. This parameter essentially determines the agents' trade-off between the cost of creating edges and the size of the 2-neighborhood. We mostly focus on $\alpha > 1$, since this models that no agent alone is important enough to justify the creation of an edge, which seems like a realistic assumption in large social networks.

We often express strategy profiles in form of a \emph{directed network} where we have a directed edge $(u,v)$ in the network if and only if $u$ created an edge to $v$, i.e., $v \in S_u$. The \emph{social cost} of any given strategy profile $\s$ is the sum of the agents' costs in $\s$, i.e., $\cost(\s) = \sum_{u\in V}\cost_u(\s)$. Moreover, for a given number of agents $n$ and parameter $\alpha$, let $\s^*_{n,\alpha}$ denote a \emph{socially optimal} strategy profile, i.e., we have that $\s^*_{n,\alpha}$ is a strategy profile that minimizes the social cost for given $n$ and $\alpha$.

We say that strategy $S_u'$ is an \emph{improving response} for agent $u$ in strategy profile $\s=(S_u,\s_{-u})$, if $\cost_u((S_u',\s_{-u})) < \cost_u((S_u,\s_{-u}))$. Strategy $S_u'$ is called a \emph{best response} to $\s_{-u}$, if $\cost_u((S_u',\s_{-u})) \leq \cost_u((S_u'',\s_{-u}))$, for any strategy $S_u'' \subseteq V\setminus\{u\}$, i.e., if no improving response of agent $u$ exists in strategy profile $(S_u',\s_{-u})$. Let $\mathcal{C} = \s_0,\s_1,\dots,\s_k,\s_0$ be a cyclic sequence of strategy profiles. Then $\mathcal{C}$ is called an \emph{improving response cycle (IRC)} if for any two neighboring strategy profiles $\s_i,\s_j$ we have that $\s_j$ is a strategy profile resulting from some agent in $\s_i$ changing her strategy to an improving response. It is well-known that the existence of an IRC is equivalent to the game not having the \emph{finite improvement property}, which is equivalent to the non-existence of an ordinal potential function~\cite{monderer1996potential}.

We say that $\s = (S_1,\dots,S_n)$ is a \emph{pure Nash equilibrium (NE)} if $S_i$ is a best response for all $1\leq i\leq n$, i.e., if no agent has an improving response in strategy profile $\s$. Given a NE $\s$, we call the corresponding created network $G(\s)$ \emph{stable} or \emph{Nash stable} and we say that $G(\s)$ is \emph{in equilibrium}. 
We also consider a weaker solution concept, where only certain strategy changes are allowed. For this, we say that for strategy profile $\s = (S_u,\s_{-u})$ agent $u\in V$ has a \emph{greedy improving response} $S_u'$, if $\cost_u((S_u',\s_{-u})) < \cost_u((S_u,\s_{-u}))$ and if either $S_u' = S_u \cup \{x\}$, or if $S_u' = S_u\setminus\{y\}$, or if $S_u' = (S_u\setminus\{y\}) \cup \{x\}$, for some $x\notin S_u$ and $y\in S_u$. A greedy improving response is an improving strategy change that consists of either adding, deleting, or swapping a single element in the agent's current strategy. We say that $\s$ is a \emph{greedy equilibrium (GE)}~\cite{lenznerGreedy2012} if no agent has a greedy improving response with respect to $\s$. By definition, we have that any NE is a GE, and we will show that the converse does not hold.

We want to investigate the impact of selfish behavior on the social cost. For this, let NE$(n,\alpha)$ denote the set of all strategy profiles that are in NE for a given number of nodes $n$ and parameter $\alpha$. We define the \emph{Price of Anarchy (PoA)} as the worst-case ratio of the social cost of any NE and the corresponding social optimum strategy profile. We follow standard literature in network creation games and analyze the PoA in dependence on the number of agents $n$ and edge cost $\alpha$~\cite{CP05}. Formally, $\text{PoA}(n,\alpha) = \max_{\s\in \text{NE}(n,\alpha)}  \frac{\cost(\s)}{\cost(\s^*_{n,\alpha})}$. In a similar vein, the \emph{Price of Stability (PoS)} considers the social cost of the best NE, i.e., $\text{PoS}(n,\alpha) = \min_{\s\in \text{NE}(n,\alpha)}  \frac{\cost(\s)}{\cost(\s^*_{n,\alpha})}$. The \emph{greedy Price of Anarchy} and \emph{greedy Price of Stability} are defined analogously for GE, i.e., $\text{gPoA}(n, \alpha) = \max_{\s\in \text{GE}(n,\alpha)}  \frac{\cost(\s)}{\cost(\s^*_{n,\alpha})}$ and $\text{gPoS}(n, \alpha) = \min_{\s\in \text{GE}(n,\alpha)}  \frac{\cost(\s)}{\cost(\s^*_{n,\alpha})}$.

\subsection{Related Work}
Game-theoretic models for the formation of networks have been a core topic in algorithmic game theory for almost three decades. Starting with the seminal work by Jackson and Wolinsky~\cite{JW96}, many variants of such models have been proposed and rigorously analyzed. In~\cite{JW96} $n$ selfish agents establish links among each other to maximize their individual benefit. The mutual benefit of a node pair $u,v$ in the formed network decreases with their distance. Our model can be seen as a simplified binary variant, where only nodes in distance~1 and 2 give a benefit of $1$ and all the others provide benefit~$0$. The PoA for certain parameter ranges of the connections model was analyzed by Baumann and Stiller~\cite{BS08}.

Later, the network creation game (NCG)~\cite{fabrikantNetwork2003} was proposed and it became the basis of almost all newer models, including our model. In the NCG, the agents buy incident edges for a price of $\alpha$ per edge and their goal is to minimize their closeness centrality, i.e., their sum of shortest-path distances to all other nodes. It is shown that NE always exist, being either a spanning star or a clique, that computing a best response strategy is NP-hard, and that the price of anarchy is bounded by $O(\sqrt{\alpha})$. Later, in a series of works~\cite{albersNash2014,demainePrice2007,MS10,MMM13,GHLL16,AM19,BL20,AB23} this bound has been shown to be constant for almost any value of $\alpha$, with the best general upper bound being $2^{O(\sqrt{\log n})}$~\cite{demainePrice2007}.

Many variants of the NCG have been studied, e.g., versions that involve cooperation~\cite{CP05,AFM09,FGLZ23}, non-uniform edge cost~\cite{MeiromMO14,CLMH14,CLMM17,BiloFLLM21}, robustness~\cite{MeiromMO15,CLMM16,Goyal16,Echzell0LM20}, geometric aspects~\cite{MoscibrodaSW06a,EidenbenzKZ06,bilo2019}, social networks~\cite{BiloFLLM21}, social distancing~\cite{social_distancing}, temporal networks~\cite{temporal_NCG,temporal_non_local}, or variants featuring locality~\cite{CL15,BiloGLP16,bilo_traceroute}. 
Close to our model is the bounded distance NCG by Bilò, Gualà, and Proietti~\cite{BiloGP15}. 
There, the agents have to ensure that all other nodes are in a given maximum or average distance $D$. For $D=2$ this is similar to our model with the main difference, that in our model the agents face a trade-off between their total cost for buying edges and the number of agents that are not within distance $D$. This can be seen as a soft constraint. For $D=2$ the authors show a PoA in $\Omega(\sqrt{n})$ and $O(\sqrt{n\log n})$.

The closest related NCG variant is the celebrity game by {\`A}lvarez, Blesa, Duch, Messegu{\'e} and Serna~\cite{ABDMS16celebrity}.
In this framework, each agent has an individual weight, and there is a general distance threshold denoted as $\beta$. The cost function for an agent is the sum of the edge cost and the total weight of agents located at a distance greater than $\beta$. Consequently, our model represents a specific instance of their star celebrity games with $\beta=2$ where all agents have a uniform weight of $w_{min}=w_{max}=1$. However, unlike the star celebrity games in \cite{ABDMS16celebrity}, we mostly focus on $\alpha > 1$, which implies the absence of "celebrities" in our model, i.e.\ nodes with weight larger than $\alpha$.
The authors of~\cite{ABDMS16celebrity} show the existence of stable networks for any $\alpha$. Further, they provide a bound on the diameter of NE of at most $2\beta+1$. For $\beta=2$ we improve this to $3$ and we show that this is tight. Moreover, it is shown that the PoS is 1 and that the PoA is upper bounded by $\bigO(\min\{\frac{n}{2},\frac{n}{\alpha}\})$. Both bounds carry over to our model. 
For $\beta=2$, they show a lower bound on the Price of Anarchy (PoA) that is linear in $n$. However, this result crucially relies on inconsistent tie-breaking when agents are indifferent about purchasing an edge. 
Besides bounds on the PoA, in~\cite{ABDMS16celebrity} it is also shown that computing a best response is NP-hard. This proof uses agents having non-uniform weights, which means that it does not apply to our setting, either.
Lastly, the focus on the 2-neighborhood is also central to the model by Anshelevich, Bhardwaj, and Usher~\cite{AnshelevichBU15}. In their model, agents strategically allocate effort shares to edges in a given network, with utility depending on the invested effort in their 2-neighborhood. This contrasts strongly with our model.

\subsection{Our Contribution}
We propose a 2-neighborhood maximization game, where selfish agents aim to have as many other agents as possible in their 2-neighborhood while keeping incurred costs low. Motivated by the setting of~\cite{ABDMS16celebrity}, we analyze a well-motivated and natural special case of the star celebrity game which allows us to extend their work by providing intriguing and more detailed results.

In previous works, it has been evident that the diameter of an equilibrium, i.e., the longest distance between any two agents in an equilibrium, increases in the parameter $\alpha$. 
We show that in our model the diameter has a constant and tight upper bound of 3 for NE and 4 for GE even for large $\alpha$, thereby improving the previous upper bound of 5 for NE established in~\cite{ABDMS16celebrity}.
We complete our findings on the properties of equilibria by giving insight into structural results, including a complete characterization of the existence of paths and cycles as substructures. We show, in particular, that Nash stable networks cannot contain a path of 3 or more consecutive nodes with a degree of 2, except for the $4$- and $5$-cycle (optionally with an additional leaf node) for certain small values of $\alpha$.

Moreover, we analyze the inefficiency and derive a bound on the PoA of $\Omega\left(\log\left(\frac{n}{\alpha}\right)\right)$ for NE. To do so, we construct a stable scalable network with nodes having arbitrarily large outdegrees. 
For GE with edge cost $1 \leq \alpha \leq 2$, we show an asymptotically tight $\Omega\left(n\right)$ lower bound on the gPoA by constructing a network consisting of three sets of agents that connect in a schematic way. Further, we show NP-completeness of calculating a best response and the existence of improving response cycles.
Finally, regarding future work, we generalize our diameter bounds to the general star celebrity setting with equal weights.

\section{Existence and Structure of Equilibria}\label{sec:Existence&Structure}
In this section, we give an insight into the richness of equilibria in 2-NMG, providing necessary properties that equilibria must satisfy, including a constant upper bound on the diameter. Formally, the \emph{diameter} of $G$, denoted as $\mathrm{diam}(G)$, is the maximum hop-distance on a shortest path between any two nodes of $G$. Among these equilibria, we highlight unique types with specific structures, such as paths consisting of degree two nodes. Moreover, as shown in \cite{ABDMS16celebrity}, Nash stable networks exist for any value of $\alpha$.
While equilibria with a diameter of $2$ can be found fairly easily for any $\alpha$, the analysis is much more complex for equilibria with larger diameters.

\subsection{Variety and Structure of Stable Networks}
First, we present a series of simple stable networks that will reveal a special structure in the end of this section. We prove their stability in the appendix.

\begin{restatable}{observation}{Zweierketten}
\label{obs:zweierketten}
    Let $G(\mathbf{s})=(V, E(\mathbf{s}))$ be a graph with \begin{enumerate}[label={(\alph*)}]
        \item \label{obs:eine_zweierkette} \mbox{$V=\{v_1,\dots, v_6\}$} and $\mathbf{s}=\left(\{v_2\},\{v_3\},\{v_4\}, \emptyset,\{v_1,v_4\},\{v_1,v_4\}\right)$ as depicted in Fig.~\ref{fig:smallNE}(a). Then, $G(\s)$ is stable for~$\alpha~\leq~2$.
        \item \label{obs:zwei_zweierketten} $V=\{v_1,\dots,v_8\}$ and $\s=(\{v_2,v_8\},\{v_3\},\emptyset, \{v_3,v_5\}, \{v_1,v_6\},\{v_7\},\emptyset,\{v_4,v_7\})$ as depicted in Fig.~\ref{fig:smallNE}(f). Then, $G(\s)=(V,E(\s))$ is stable for $\alpha=2$. 
    \end{enumerate}
 
\end{restatable}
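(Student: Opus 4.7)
The plan is to verify Nash stability directly: for each agent $u$, argue that no alternative strategy $S'_u \subseteq V\setminus\{u\}$ strictly decreases $\cost_u$ given the other agents' strategies. The groundwork is to compute each agent's current 2-neighborhood and cost. In part~\ref{obs:eine_zweierkette}, every agent has $N_2^{G(\s)}(u) = V$, so the current cost is simply $\alpha \cdot |S_u|$; in part~\ref{obs:zwei_zweierketten}, the four ``inner'' agents $v_1, v_4, v_5, v_8$ achieve $N_2 = V$ at cost $2\alpha$, while the four ``outer'' agents $v_2, v_3, v_6, v_7$ miss exactly the two nodes of the opposite degree-$2$ path, giving costs $\alpha + 2$ or $2$.

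From here, the analysis splits by how many edges the deviation buys. Since buying $|S'_u|$ edges costs $|S'_u| \cdot \alpha$, any agent already reaching $N_2 = V$ can only benefit by buying strictly fewer edges. For agents with $|S_u| \le 1$, the only candidate deviations are $S'_u = \emptyset$ and single-edge swaps; one checks by direct inspection of the (small) graphs that each drop leaves at least two nodes outside the 2-neighborhood, and each swap either preserves the cost or makes it worse. For agents with $|S_u| = 2$ (the apex nodes $v_5, v_6$ in (a) and the inner agents in (b)), the crucial calculation is that no strategy with at most one bought edge can achieve $N_2 = V$: a short case analysis over the few possible single-edge targets $x$ shows that the 2-neighborhood resulting from $S'_u = \{x\}$ is missing at least two nodes, so the alternative cost is at least $\alpha + 2$, which meets or exceeds $2\alpha$ exactly when $\alpha \le 2$. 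For the outer agents in (b), already at cost $\alpha + 2$, we analogously verify that no $S'_u$ reduces this (using that $\alpha = 2$).

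The main obstacle is bookkeeping rather than any subtle argument: each graph has enough small-scale symmetry (the reflection $v_5 \leftrightarrow v_6$ in (a), and the reflective symmetry $v_i \leftrightarrow v_{9-i}$ together with the exchange of the two attached degree-$2$ paths in (b)) to halve or quarter the verification. The role of the bound $\alpha \le 2$ (resp.\ $\alpha = 2$ in (b)) is precisely that it makes the ``drop one edge, absorb two missed nodes'' deviation tie with the current strategy; for $\alpha > 2$ this deviation would become strictly improving, which is consistent with the bound on $\alpha$ being tight in this construction.
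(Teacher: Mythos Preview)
Your proposal is correct and follows essentially the same approach as the paper's proof: direct verification of Nash stability by case analysis over possible deviations, using the small symmetries of each graph to reduce the number of agents that must be checked. The paper organizes the case analysis slightly differently (by move type---add, delete, swap---rather than by the number of edges in the deviation) and in part~(b) uses the symmetry groups $\{v_1,v_4,v_5,v_8\}$, $\{v_2,v_6\}$, $\{v_3,v_7\}$ rather than your $v_i\leftrightarrow v_{9-i}$ reflection, but the substance and the key cost comparisons (in particular that dropping an edge always loses at least two nodes from the $2$-neighborhood, so $\alpha\le 2$ makes this non-improving) are identical.
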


\begin{figure}
    \centering
    \includegraphics[width=0.9\linewidth]{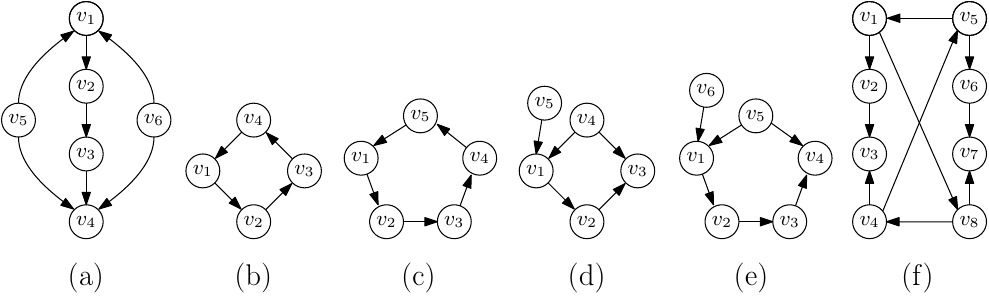}
    \caption{Illustration of NE for various $\alpha \leq 2$ with consecutive nodes of a degree of~$2$. The detailed descriptions can be found in Observations~\ref{obs:zweierketten} and~\ref{obs:NEs}.}
    \label{fig:smallNE}
\end{figure}
\noindent It is worth noting that in Fig.~\ref{fig:smallNE}(a), the identical nodes $v_5$ and $v_6$ can be replicated without affecting the stability of the graph. Additionally, we observe that both a 4-cycle and 5-cycle with and without a leaf (a node with degree $1$) can form stable networks for small values of $\alpha$ (see Fig.~\ref{fig:smallNE}(b) to (e)).

\begin{restatable}{observation}{Cycles}\label{obs:NEs}
    Let $G(\mathbf{s})=(V,E(\mathbf{s}))$ be a \begin{enumerate}[label={(\alph*)}]
        \item \label{obs:4cycle} 4-cycle with $V=\{v_1,\dots,v_4\}$ and \mbox{$\mathbf{s}=\left(\{v_2\},\{v_3\},\{v_4\},\{v_1\}\right)$} as depicted in Fig.~\ref{fig:smallNE}(b). Then, $G(\mathbf{s})$ is stable for $\alpha \leq 1$.
        \item \label{obs:5cycle} 5-cycle with $V=\{v_1,\dots,v_5\}$ and $\mathbf{s}=\left(\{v_2\},\{v_3\},\{v_4\},\{v_5\},\{v_1\}\right)$ as depicted in Fig.~\ref{fig:smallNE}(c). Then, $G(\mathbf{s})$ is stable for $\alpha \leq 2$.
        \item \label{obs:4cycleleaf} 4-cycle with a leaf with $V = \{v_1,\dots,v_4,v_5\}$ and $\mathbf{s}=\left(\{v_2\},\{v_3\},\emptyset,\{v_1,v_3\},\{v_1\}\right)$ as depicted in Fig.~\ref{fig:smallNE}(d). Then, $G(\mathbf{s})$ is stable for $\alpha=1$.
        \item \label{obs:5cycleleaf} 5-cycle with a leaf with $V=\{v_1,\dots,v_5,v_6\}$ and $\mathbf{s}=\left(\{v_2\},\{v_3\},\{v_4\},\emptyset,\{v_1,v_4\},\{v_1\}\right)$ as depicted in Fig.~\ref{fig:smallNE}(e). Then, $G(\mathbf{s})$ is stable for $\alpha=2$.
    \end{enumerate}
    
\end{restatable}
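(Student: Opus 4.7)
The plan is, for each of the four graphs listed in Observation~\ref{obs:NEs}, to verify that no agent has an improving response by enumerating all alternative strategies. Because each graph has at most $6$ nodes, every agent has only finitely many candidate strategies $S_u' \subseteq V \setminus \{u\}$, so the proof reduces to a bounded case analysis. I would first compute, for every agent $u$, the outdegree $|S_u|$, the size $|N_2^{G(\s)}(u)|$, and thus the current cost $\alpha|S_u| + n - |N_2^{G(\s)}(u)|$. Then I would rule out deviations in three families: \emph{pure additions} (net change $\alpha$ minus the number of new vertices brought into $N_2$), \emph{pure deletions} (net change lost vertices in $N_2$ minus $\alpha$), and \emph{swaps} (a single addition combined with a single deletion).

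For the cycles in parts~\ref{obs:4cycle} and~\ref{obs:5cycle}, every agent buys one edge and already has $|N_2^{G(\s)}(u)| = n$, so additions are strictly harmful. A deletion turns $C_n$ into $P_n$, making the deviating agent an endpoint with $|N_2| = 3$ and hence cost $n - 3$; this equals $1$ for $n = 4$ and $2$ for $n = 5$, matching the thresholds $\alpha \leq 1$ and $\alpha \leq 2$ exactly. The only non-trivial case is a swap: for $C_4$ every swap either reproduces another spanning $4$-cycle or yields a graph with $|N_2| = 4$ still, so the cost is unchanged; for $C_5$ every swap either preserves $|N_2| = 5$ (when the new endpoint is at distance $2$ from the agent in the original cycle) or reduces $|N_2|$ by one (cost $\alpha + 1 \geq \alpha$), and neither is strictly improving.

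For the leaf variants in parts~\ref{obs:4cycleleaf} and~\ref{obs:5cycleleaf} I would apply the same template, grouping the agents by role: the cycle buyers, the non-buying cycle vertex ($v_3$ in~\ref{obs:4cycleleaf}, $v_4$ in~\ref{obs:5cycleleaf}), the double-buyer ($v_4$ in~\ref{obs:4cycleleaf}, $v_5$ in~\ref{obs:5cycleleaf}), the host $v_1$ of the leaf, and the leaf itself ($v_5$ in~\ref{obs:4cycleleaf}, $v_6$ in~\ref{obs:5cycleleaf}). For each role I would list the current $|N_2|$ and then check that any addition costs $\alpha$ while adding at most a matching number of new vertices to $N_2$, and that any deletion loses at least $\alpha$ vertices from $N_2$, with equality precisely at the stated threshold. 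The main obstacle is the swap case for the double-buyer and for the non-buying cycle vertex, because these agents have the richest neighborhoods and thus the most alternative rewirings; here I would exploit that any swap that rebuilds the broken cycle restores $|N_2|$ to its previous value, while any swap that does not rebuild the cycle creates a pair of nodes at distance~$3$ whose loss precisely matches the $\alpha$ savings at $\alpha = 1$ for part~\ref{obs:4cycleleaf} and at $\alpha = 2$ for part~\ref{obs:5cycleleaf}. This yields stability at exactly the stated tight thresholds.
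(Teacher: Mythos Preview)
Your approach is essentially the same as the paper's: for each configuration, verify that no agent has an improving response by checking additions, deletions, and swaps, exploiting the symmetry of the cycle and the fact that the diameters are small. The paper's proof is considerably terser (it dispatches the swap case for parts~\ref{obs:4cycleleaf} and~\ref{obs:5cycleleaf} in one sentence by appealing to cycle symmetry and noting that each of the double-buyer's two edges exclusively secures at least $1$ resp.\ $2$ vertices), but the underlying reasoning is identical.

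Two small points to tighten. First, your three deviation families do not exhaust all Nash deviations: for the double-buyer ($v_4$ in~\ref{obs:4cycleleaf}, $v_5$ in~\ref{obs:5cycleleaf}) you must also rule out ``delete both, add one elsewhere'' and ``delete both, add two elsewhere''. These are easy (at $\alpha=1$ resp.\ $\alpha=2$ they never strictly beat the current cost), but they are not single-swap moves and should be mentioned. Second, the ``non-buying cycle vertex'' has $S=\emptyset$, so there is no swap case for it; the only deviations are pure additions. Finally, your claim that in $C_4$ every swap keeps $|N_2|=4$ is not quite right---swapping to the already-incident neighbor (e.g.\ $v_1$ swapping to $v_4$) drops $|N_2|$ to $3$---but since this raises cost the conclusion is unaffected.
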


\noindent These equilibria have very special shapes and very few edges. Additionally, the outdegree, i.e., the number of bought edges is a small constant, even for $n \rightarrow \infty$ in Fig.~\ref{fig:smallNE}(a). Following, we show that the class of equilibria is much richer and we provide equilibria with superlinearly many edges and non-constant outdegree.

\begin{restatable}{theorem}{regularNE}
\label{thm:regular-NE}
    For all $\alpha$, there exists a stable network that has $n$ agents, a maximum outdegree in $\Omega(\log(\frac{n}{\alpha}))$, and $\Omega\left(n\log\left(\frac{n}{\alpha}\right)\right)$ edges.
\end{restatable}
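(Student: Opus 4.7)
The plan is to exhibit, for every $\alpha > 0$ and every sufficiently large $n$, an explicit strategy profile $\mathbf{s}$ whose created network has $n$ agents, maximum outdegree $\Omega(\log(n/\alpha))$, and $\Omega(n\log(n/\alpha))$ edges, and then to verify that no agent in $\mathbf{s}$ has an improving response.

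For the construction I would use a hierarchical gadget of diameter at most $3$ built around a distinguished ``root'' $r$ that buys edges to $k = \Theta(\log(n/\alpha))$ hubs $v_1,\dots,v_k$. To each hub $v_i$ I would attach a private pool $B_i$ of roughly $\alpha$ additional agents, wired so that deleting the edge $rv_i$ removes exactly the $\alpha$ agents of $B_i$ from $r$'s $2$-neighborhood, thus making $rv_i$ essential. To keep the overall diameter at most $3$---and, more importantly, to prevent any leaf in some pool $B_i$ from profitably adding an edge into a foreign pool $B_j$---I would stitch the pools together via a balanced incidence structure (for instance the incidence graph of a combinatorial design of girth at least $6$), so that every leaf already has every hub within distance $2$.

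From the construction the three numerical claims follow at once: $r$ has outdegree $k = \Omega(\log(n/\alpha))$; the vertex count is $n$ by design; and the total edge count, obtained by summing the $k$ root--hub edges, the $k\alpha$ hub--pool edges, and the $\Omega(n\log(n/\alpha))$ edges of the cross-cluster design, sums to $\Omega(n\log(n/\alpha))$.

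The technical heart is Nash stability, which I would verify by ruling out every improving response---not only single-edge deviations---for each agent type. The root and the hubs are clean: each of their bought edges is essential by construction, and since $|N_2(r)| = n$ already, every non-edge gains $0$ new nodes. The delicate case is a leaf $\ell \in B_i$ that might consider swapping, adding, or making an arbitrary larger change of strategy to reach hubs in foreign pools; here one has to show that adding any new edge brings in at most $\alpha$ further $2$-neighbors, for which the incidence-design properties of the cross-cluster wiring are crucial. The main obstacle is exactly this calibration: essentiality of existing edges requires the graph to be sparse and locally tree-like, while non-profitability of additions requires every $N_2$ to already be close to all of $V$. Balancing the cluster sizes and the design parameters so that both inequalities hold simultaneously---at the threshold $\alpha$---is the core technical difficulty, and it is precisely what pins the maximum outdegree to the logarithmic scale claimed in the theorem.
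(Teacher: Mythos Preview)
Your proposal is a plan, not a proof, and it leaves the one genuinely hard step unfinished. You acknowledge this yourself: ``balancing the cluster sizes and the design parameters so that both inequalities hold simultaneously \dots\ is the core technical difficulty.'' The construction is never made precise enough to check. Phrases such as ``a balanced incidence structure (for instance the incidence graph of a combinatorial design of girth at least $6$)'' do not specify a graph; no parameters are given, no existence argument is supplied, and no property of such a design is actually used. Until the cross-cluster wiring is written down explicitly, none of the stability claims for the leaves can be verified, and in particular you have not ruled out the full Nash deviation where a leaf discards its entire strategy and buys a completely different edge set. Your sketch only discusses single-edge additions (``adding any new edge brings in at most $\alpha$ further $2$-neighbors''), which is a greedy-stability argument, not Nash stability.

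The paper's construction is quite different and sidesteps exactly the calibration problem you flag. It takes a $k$-regular graph on $2k+1$ base nodes and, for every $(k+1)$-subset $\kappa$ of the base, attaches $\lceil\alpha\rceil$ satellite nodes each buying to all of $\kappa$. Two pigeonhole facts make stability automatic: any two $(k+1)$-subsets of a $(2k+1)$-set intersect (so any two satellites share a neighbor), and any $(k+1)$-subset dominates a $k$-regular graph on $2k+1$ vertices (so every satellite sees every base node within distance $2$). Hence the diameter is $2$, not $3$, and no add or swap can help anyone. For deletions and arbitrary multi-edge deviations, a satellite that connects to fewer than $k+1$ base nodes necessarily misses the $\lceil\alpha\rceil$ satellites attached to the complementary $(k+1)$-set, so every smaller strategy is strictly worse. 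The numerical claims then drop out of $n=\lceil\alpha\rceil\binom{2k+1}{k+1}+2k+1$ via Stirling. This intersection/domination trick is the missing idea in your proposal.
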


\begin{proof}We start the proof by describing the construction of the network. An example can be seen in Fig.~\ref{fig:NE-k-regular-satellites}. For an arbitrary $k \in \mathbb{N}$, $k$ even, let $G'(\mathbf{s},k)=(V_B(k),E'(\mathbf{s},k))$ be a $k$-regular graph with $|V_B(k)|=2k +1$ base nodes. The orientation of the edges is arbitrary. Further, let
    \begin{equation*}
        \mathcal{K} := \left\{\{v_{z_1},v_{z_2},\dots,v_{z_{k+1}}\}\subset V_B(k): 1\leq z_1 < z_2 < \dots < z_{k} < z_{k+1} \leq |V_B(k)|\right\}
    \end{equation*} be the set of $(k+1)$-combinations of the nodes in $V_B$ and
        for every combination $\kappa \in \mathcal{K}$ we add new nodes $v_1^{\kappa},v_2^{\kappa},\dots, v_{\left\lceil \alpha \right\rceil}^{\kappa}$ to a set $V_S(k)$. This results in $\binom{2k+1}{k+1}$ different types of nodes added to $V_S(k)$ and $\left\lceil \alpha \right\rceil$ many of each type. We will refer to these as \textit{satellite} nodes resp. agents. Finally, with \begin{equation*}
            E(\mathbf{s},k)=E'(\mathbf{s},k) \cup \bigcup_{\kappa \in \mathcal{K}} \left\{ (v_i^{\kappa}, v): \text{ for all \,} i\in \left[1,\left\lceil \alpha \right\rceil \right] \text{ and } v \in \kappa\right\}
        \end{equation*} 
        and $V(k)=V_B(k)\cup V_S(k)$, the network $G(\mathbf{s},k)=(V(k), E(\mathbf{s},k))$ is fully constructed and consists of
         \begin{equation}
             |V(k)| = \left\lceil \alpha \right\rceil \cdot \binom{2k+1}{k+1} + 2k+1\label{eq:nodesRegularGraph}
         \end{equation}
         many nodes and
         \begin{equation}
             |E(\s,k)| = \left\lceil \alpha \right\rceil \cdot \binom{2k+1}{k+1} \cdot (k+1) + \frac{(2k+1)\cdot k}{2}\label{eq:edgesRegularGraph}
         \end{equation}
         many edges. The outdegree of the satellite nodes is $k+1$. 
    It remains to show that the $k$-regular graph with $2k+1$ nodes exists for even $k$, that the described network is indeed stable, and that $k \in \Omega\left(\log\left(\frac{n}{\alpha}\right)\right)$. 
    \paragraph{Existence of the base graph.} To prove the existence of a $k$-regular graph with $2k+1$ nodes, we begin with two cliques of size $k$. Next, we add a bipartite perfect matching between the nodes of these cliques, resulting in a connected graph with $2k$ nodes, where each node has degree $k$. By removing $\frac{k}{2}$ edges from the matching and connecting each node that loses an edge to a new node, we obtain a graph with $2k+1$ nodes, each of degree $k$.
 
\paragraph{Stability.} 
We prove stability by showing that (1.) the  network diameter is 2, and neither the base nodes (2.) nor the satellite nodes (3.) can improve.

\begin{enumerate}
    \item[1.] Since each $\kappa, \kappa' \in \mathcal{K}$ contains $k+1$ base nodes and there are $2k+1$ base nodes in the base graph, we have $\kappa \cap \kappa' \neq \emptyset$, ensuring that any two satellite nodes share at least one direct neighbor. Consequently, every base node has every satellite node in her 2-neighborhood. Moreover, any $k+1$-subset forms a dominating set in a $k$-regular graph with $2k+1$ many nodes. If this were not the case, there would exist a node $u$ such that $N(u) \cap \kappa = \emptyset$, implying $\mathrm{deg}(u) \leq 2k - |\kappa| = k-1$, and contradicting the construction. Thus, the 2-neighborhood of every node is complete and $\mathrm{diam}(G(\mathbf{s},k))=2$.

    \item[2.] Let $v \in V_B(k)$ be a base node with $\mathrm{deg}(v)=k$. Because of $\mathrm{diam}(G(\mathbf{s},k))=2$, there is no incentive for additional connections or swaps. The only candidates for beneficial changes are strategies with fewer bought edges. However, since each $k+1$-subset of $V_B(k)$ has $\left\lceil \alpha \right\rceil$ satellite nodes, exclusively connected to these base nodes, no connection between base nodes can be omitted.
    Therefore, no base node has an incentive to cancel edges, and no better strategy exists with fewer than $k$ edges, as each base node must be connected to $k$ other base nodes to ensure all satellite groups are in her neighborhood.
    
    \item[3.] By similar reasoning as in the last two paragraphs, satellite agents cannot improve their strategies. If an agent connects to fewer than $k+1$ base nodes, there will be at least one group of satellite nodes missing from her 2-neighborhood, since in a set of $2k+1$ nodes, two disjoint $k-$ and $k+1$-subsets always exist. No agent will accept missing any group of satellite nodes, as there are $\left\lceil \alpha \right\rceil$ of each type. Similarly, connecting to other satellite nodes instead of base nodes is not beneficial, as it leaves some satellite groups uncovered.
    \end{enumerate}

\paragraph{A lower bound on the outdegree.} 
The satellites have outdegree $k+1$ by construction. It remains to express this bound in terms of $n$. We have
\begin{align*}
    n=|V(k)|&=\left\lceil \alpha \right\rceil \cdot \binom{2k+1}{k+1} + 2k+1
    \leq \left\lceil \alpha  \right\rceil \cdot \binom{2k+2}{k+1}\,.
\end{align*}
By Stirling's approximation $x!\in\Theta\left(\sqrt{2\pi x}\left(\frac{x}{e}\right)^{x}\right)$, we obtain
\begin{align*}
    (2x)!&\in\Theta\left(\sqrt{2\pi(2x)}\left(\frac{2x}{e}\right)^{2x}\right)=\Theta\left(\sqrt{2}\sqrt{2\pi x} \cdot 2^{2x}\left(\frac{x}{e}\right)^{2x}\right)\,,
\end{align*}
and, by applying this to the binomial coefficient, we get
\begin{align*}
    \binom{2x}{x}&=\frac{(2x)!}{(x!)^2}
    \in \Theta\left(\frac{\sqrt{2} \cdot 2^{2x}\cdot\sqrt{2\pi x}\left(\frac{x}{e}\right)^{2x}}{\left(\sqrt{2\pi x}\cdot\left(\frac{x}{e}\right)^{x}\right)^2}\right)\,.
    \end{align*}
After extracting and canceling common terms, we derive an upper bound
\begin{align*}
    \frac{\sqrt{2} \cdot 2^{2x}\cdot\sqrt{2\pi x}\left(\frac{x}{e}\right)^{2x}}{\left(\sqrt{2\pi x}\cdot\left(\frac{x}{e}\right)^{x}\right)^2} &= \cfrac{\sqrt{2} \cdot 2^{2x}}{\sqrt{2\pi x}}\cdot\frac{\sqrt{2\pi x}\cdot\left(\sqrt{2\pi x}\left(\frac{x}{e}\right)^{2x}\right)}{\left(\sqrt{2\pi x}\cdot\left(\frac{x}{e}\right)^{x}\right)^2} \leq 2^{2x}\,.
\end{align*}
Finally, we use this bound on the binomial coefficient to express $k$ in terms of $n$, 
\begin{align*}
    \frac{n}{\left\lceil \alpha \right\rceil}&\leq \binom{2k+2}{k+1} \in \Omega\left(2^{2k+2}\right)\,,
\end{align*}which leads to
\begin{align}
       k&\in\Omega\left(\log_4\left(\cfrac{n}{\left\lceil \alpha \right\rceil}\right)\right)  = \Omega\left(\log\left(\frac{n}{\alpha}\right)\right)\label{eq:regularGraphOutDegree}
\end{align}
as lower bound for $k$. As seen in Equations~(\ref{eq:nodesRegularGraph}) and (\ref{eq:edgesRegularGraph}), the number of edges is $\Theta(k\cdot V(k))$, which also means it is in $\Omega(n \cdot k) = \Omega(n \cdot \log\left(\frac{n}{\alpha}\right))$.
\begin{figure}[t]
    \centering
    \includegraphics[scale=0.39]{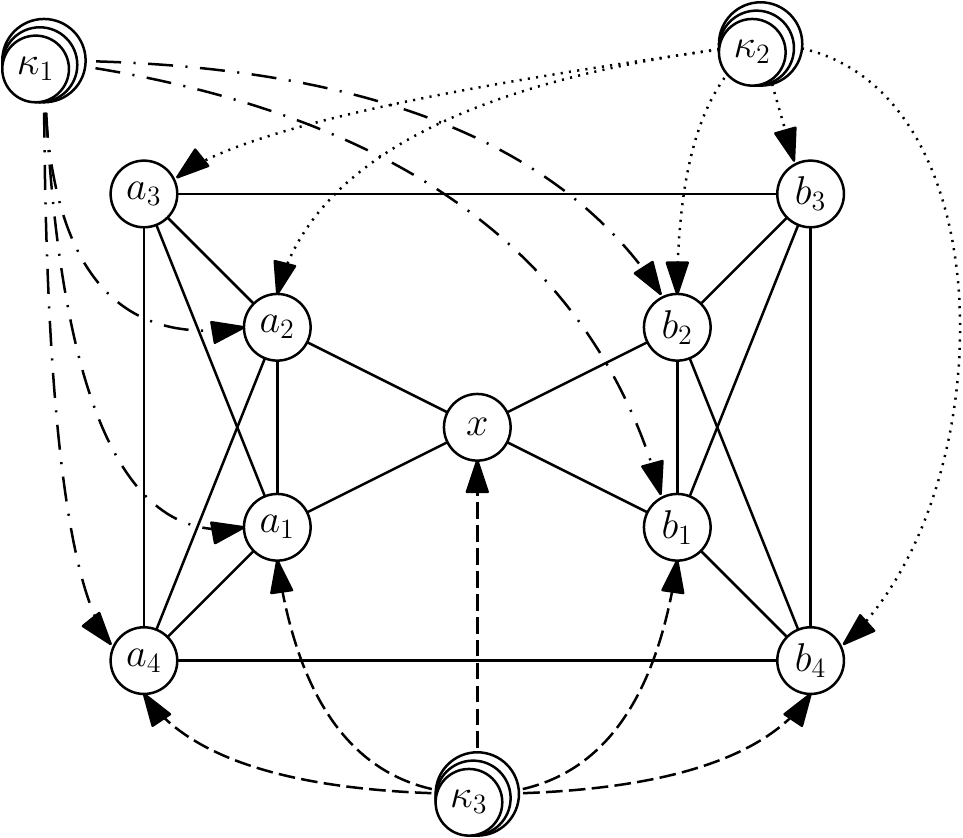}
    \caption{Illustration of a NE with $k=4$ (see Theorem~\ref{thm:regular-NE}). Out of all $\binom{9}{5}$ sets of satellites, three are drawn exemplarily.}
    \label{fig:NE-k-regular-satellites}
\end{figure}
\end{proof}

\noindent Clearly, a Nash stable network is also greedy stable. If we focus on greedy stable equilibria, we can even create networks with edges in the order of $\Theta(n^2)$.

\begin{restatable}{theorem}{GEdiamthreequadraticedges}
\label{thm:GE-diam-3-quadratic-edges}
    For $1\leq\alpha\leq 2$, there is a series of greedy stable networks with \mbox{$|E| \in \Theta(n^2)$}. 
    
\end{restatable}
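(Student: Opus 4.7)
The plan is to exhibit, for every integer $k \geq 3$, a strategy profile $\mathbf{s}$ on $n = 3k$ agents whose created network is greedy stable and has $\Theta(n^2)$ edges. I partition the agents into three sets $A = \{a_1, \dots, a_k\}$, $B = \{b_1, \dots, b_k\}$, $C = \{c_1, \dots, c_k\}$ and set
\[
S_{a_i} = (B \setminus \{b_i\}) \cup \{c_i\}, \qquad S_{b_j} = \emptyset, \qquad S_{c_j} = \{b_j\}.
\]
The created graph contains the $k(k-1)$ edges $a_ib_j$ with $i\neq j$, together with the $k$ edges $a_ic_i$ and the $k$ edges $b_jc_j$, so $|E(\mathbf{s})| = k^2 + k \in \Theta(n^2)$. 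A short structural computation establishes the key fact that $N_2(a_i) = N_2(b_j) = V$ for all $i, j$ --- for instance, $a_i$ reaches every other $a_m$ via any $b_\ell$ with $\ell \notin \{i, m\}$, reaches $b_i$ via $c_i$, and reaches every $c_\ell$ with $\ell \neq i$ via $b_\ell$ --- whereas $N_2(c_j) = \{c_j\} \cup A \cup B$, missing exactly the vertices of $C \setminus \{c_j\}$.

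Given these two-neighborhoods, greedy stability splits into three cases. For $a_i$, since $N_2(a_i) = V$ no addition is profitable; deleting a bought edge ($a_ib_j$ or $a_ic_i$) loses exactly two vertices --- the direct endpoint together with the unique vertex reachable only through it ($c_j$ or $b_i$, respectively) --- which is no strict improvement when $\alpha \leq 2$; every swap either preserves $N_2 = V$ at the same cost or yields the same two-vertex loss. For $b_j$ the strategy is empty, so only additions are possible, and $N_2(b_j) = V$ precludes them.

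The main obstacle is the satellite agent $c_j$. Additions are ruled out by the observation that the incoming edges $a_jc_j$ and $b_jc_j$ already place all of $A \cup B$ in her 2-neighborhood, so any new edge she buys brings in at most one vertex (some $c_\ell$), which cannot beat $\alpha \geq 1$. Removing the edge $c_jb_j$ would disconnect $c_j$ from $A \setminus \{a_j\}$ and from $b_j$ itself, a loss of $k$ vertices far exceeding the $\alpha$ saving. The delicate step is to rule out every swap $c_jb_j \to c_jx$: here the incoming edge $a_jc_j$ still keeps $B \setminus \{b_j\}$ in her 2-neighborhood, but access to $A \setminus \{a_j\}$ via $b_j$ is lost, and a case analysis on $x \in A$, $x \in B$, or $x \in C$ shows that the vertices regained through $x$ --- at most one $c_\ell$ plus neighbors already in $N_2(c_j)$ --- never recover the loss for $k \geq 3$. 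For example, swapping to $x = b_\ell$ yields the net change of $\{c_\ell\}$ gained minus $\{a_\ell, b_j\}$ lost, i.e.\ $-1$; swaps to $x \in A$ or $x \in C$ are strictly worse. Combining the three cases gives greedy stability, and the construction furnishes the claimed $\Theta(n^2)$ edges.
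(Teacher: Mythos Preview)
Your construction is correct and is essentially the paper's own: identifying $A\leftrightarrow L$, $B\leftrightarrow R$, $C\leftrightarrow M$ gives the same undirected graph on $3k$ vertices with $k^2+k$ edges, the only difference being that in the paper the edge $\{l_i,m_i\}$ is bought by $m_i$ rather than by $l_i$. This shifts a bit of the casework (your $a_i$ owns one more edge, your $c_j$ one fewer) but the stability argument and the $\Theta(n^2)$ conclusion are the same.
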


\begin{figure}[t]
\centering\includegraphics[scale=0.4]{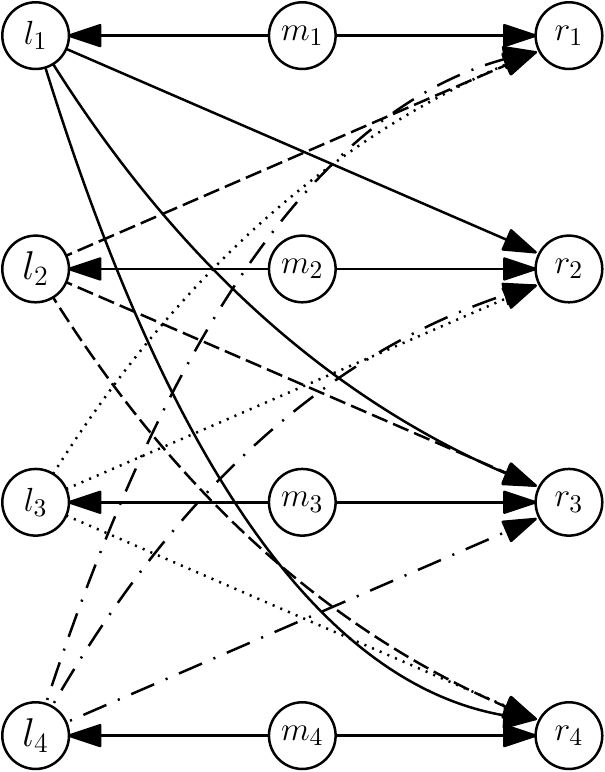}
\caption{A network used to show that GE with $|E| \in \Theta(n^2)$ exist (see Theorem~\ref{thm:GE-diam-3-quadratic-edges}).
}
\label{fig-GE-Diam3-Example}
\end{figure}
\begin{proof}
For a given $k\in\Na$, $k\geq3$, we will construct a greedy stable network $G(\s,k)$ with $|V(k)| = 3k$ nodes and $|E(k)| \in \Theta(k^2)$ edges. The graph  $G(\mathbf{s},k)$ is composed of three disjoint sets of nodes\textemdash$L$, $M$, and $R$, representing the \textit{left}, \textit{middle}, and \textit{right} sets, respectively\textemdash with each set containing $k$ nodes.
For $1\leq i \leq k$, the nodes $l_i\in L, m_i\in M, r_i \in R$ form triples. Each node $m_i$ builds two edges to $l_i$ and $r_i$, while $l_i$ builds $k-1$ edges to all $r_j\in R\setminus \{r_i\}$. Thus, for a given $k$, the graph has $3k$ nodes and $k\cdot(k+1)$ edges. An example for $k=4$ is depicted in Fig.~\ref{fig-GE-Diam3-Example}.
This construction has a diameter of 3 and we show in the appendix that it is indeed greedy stable. 
\end{proof}
\noindent To this point, we have provided a diverse range of equilibria, ranging from those with very few edges to Nash stable networks featuring a superlinear number of edges, as well as greedy stable networks characterized by a quadratic number of edges in $n$. Despite their distinct characteristics, these equilibria share underlying structural properties that we will explore in the remainder of this section.
Naturally, the question arises whether stable networks that are not empty may consist of several different connected components. {\`A}lvarez, Blesa, Duch, Messegu{\'e} and Serna~\cite{ABDMS16celebrity} answer this question negatively and their proof transfers also to greedy equilibria. Since the graph without edges is the only non-connected stable graph, it is sufficient to limit the analysis to graphs that consist of only one component.

As explained in Section~\ref{sec:Intro}, intermediaries hold significant importance in social networks. However, in sequences of nodes with a degree of 2, these intermediary nodes are limited in their capacity to fulfill this role since they only connect two of their neighbors. Consequently, it raises an important and intriguing question of whether such substructures can actually emerge in equilibria.

\begin{restatable}{theorem}{noLongPaths}
\label{thm:noLongPaths}
 Let $G(\mathbf{s})=(V,E(\mathbf{s}))$ be a Nash stable network for any $\alpha$. Then there is either no path of three or more consecutive nodes of degree 2 or \begin{itemize}
    \item $G(\mathbf{s})$ is a 4-cycle for $\alpha\leq 1$ (see Observation~\ref*{obs:NEs}\ref{obs:4cycle}),
    \item $G(\mathbf{s})$ is a 4-cycle with leaf for $\alpha=1$ (see Observation~\ref*{obs:NEs}\ref{obs:4cycleleaf}),
    \item $G(\mathbf{s})$ is a 5-cycle for $\alpha\leq 2$ (see Observation~\ref*{obs:NEs}\ref{obs:5cycle}),
    \item $G(\mathbf{s})$ is a 5-cycle with leaf for $\alpha=2$ (see Observation~\ref*{obs:NEs}\ref{obs:5cycleleaf}).
 \end{itemize}

\end{restatable}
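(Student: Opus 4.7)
My plan is to fix a Nash stable network $G(\mathbf{s})$ containing three consecutive degree-$2$ nodes $p_1 - p_2 - p_3$, and let $q_1, q_2$ denote the other neighbors of $p_1$ and $p_3$ (these exist since $p_1, p_3$ have degree $2$). I split the analysis into three geometric cases: (A) $q_1 = q_2$, so $p_1, p_2, p_3, q_1$ close into a $4$-cycle; (B) $q_1 \neq q_2$ with $\{q_1, q_2\} \in E$, closing into a $5$-cycle; and (C) $q_1 \neq q_2$ with $\{q_1, q_2\} \notin E$. Cases A and B will yield exactly the four listed structures after further analysis, while Case C is ruled out entirely.

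For Case C, the key leverage is that $p_2$'s $2$-neighborhood is precisely $\{q_1, p_1, p_2, p_3, q_2\}$, of size $5$. I consider $p_2$'s swap of the edge to $p_1$ for an edge to $q_1$: if $p_2$ owns $p_1 p_2$, then the swap leaves $p_2$'s bought-edge count unchanged while adding every node in $N(q_1) \setminus \{p_1\}$ to $p_2$'s $2$-neighborhood and losing nothing (since $p_1$ remains $2$-reachable via $q_1$), which strictly improves unless $q_1$ is a leaf. If instead $p_1$ owns $p_1 p_2$ (and $p_2$ does not), then $p_1$'s own swap $p_2 \mapsto p_3$ gains the node $q_2$ in its $2$-neighborhood, new precisely because $\{q_1, q_2\} \notin E$, again contradicting Nash stability. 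Hence $p_2$ owns both incident edges and both $q_1, q_2$ are leaves. A similar swap argument at $q_1$ shows that $p_1 q_1$ must be owned by $p_1$ alone, after which $p_1$'s drop deviation of $q_1$ forces $\alpha \leq 1$. On the other hand, the leaf $q_1$'s deviation of buying an additional edge to $p_3$ grows its $2$-neighborhood by the two new nodes $p_3, q_2$ at cost $\alpha$, which strictly improves whenever $\alpha < 2$; this conflicts with $\alpha \leq 1$ and closes Case C.

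For Cases A and B, I argue that the cycle can carry almost no additional structure. In Case A, suppose $\deg(q_1) \geq 3$ with an extra neighbor $z$. If $\deg(z) \geq 2$, then $p_2$'s add-deviation of buying edge $p_2 z$ gains at least two new $2$-neighbors at cost $\alpha$, which is strictly improving for $\alpha < 2$ and conflicts with the bound $\alpha \leq 1$ that is forced by the stability of the degree-$2$ cycle nodes themselves. Hence $z$ must be a leaf; an analogous deviation analysis rules out more than one such leaf and pins the threshold to $\alpha = 1$, recovering the $4$-cycle with a single leaf alongside the pure $4$-cycle. Case B proceeds identically with all thresholds shifted up by one, yielding the pure $5$-cycle (stable for $\alpha \leq 2$) and the $5$-cycle with a single leaf (stable only at $\alpha = 2$).

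The main obstacle will be the careful bookkeeping of edge ownership throughout Cases A and B. The underlying graph does not determine which endpoint bought which edge, yet different ownership patterns enable different deviations, and near the boundary values $\alpha \in \{1, 2\}$ many deviations become indifferent rather than strictly improving. In particular, in Case A with $\deg(q_1) \geq 3$, add-edge deviations of the degree-$2$ cycle nodes saturate their $2$-neighborhood quickly, so the working deviation must shift to either $q_1$ dropping a redundant edge or an external node swapping to a position closer to the cycle. Coordinating these chained deviations consistently, without double counting gains and losses, is the principal technical challenge.
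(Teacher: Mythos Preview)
Your decomposition differs from the paper's: the paper splits by the length of the maximal degree-$2$ chain (exactly three in one lemma, four or more in another), while you fix any three consecutive degree-$2$ nodes and case on whether $q_1=q_2$, $\{q_1,q_2\}\in E$, or neither. Your Case~C argument is correct and in fact quite clean: the ownership analysis forces $q_1,q_2$ to be leaves, then the drop at $p_1$ gives $\alpha\le 1$, and the leaf's add gives the contradiction. This automatically rules out all cycles of length $\ge 6$ and all non-cyclic long chains, which the paper handles separately.

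The genuine gap is Case~B. The claim that it ``proceeds identically with all thresholds shifted up by one'' does not survive inspection. First, the constraint $\alpha\ge 2$ pinning the $5$-cycle-with-leaf to $\alpha=2$ does \emph{not} come from $p_2$'s add-to-$q_1$ deviation (that still gains only the single leaf $z$, giving $\alpha\ge 1$); it comes from the \emph{leaf}'s add deviation toward $p_2$ or $p_3$, which gains two new nodes. Second, your Case~A argument that $z$ must be a leaf used $p_2$'s add-to-$z$ gaining at least two nodes; in Case~B this fails when $z$ is adjacent to both $q_1$ and $q_2$, since then $N(z)\setminus N_2(p_2)=\{z\}$ only. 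You need a swap deviation instead (e.g.\ $p_2$ swapping $p_1\mapsto q_1$ when $p_2$ owns that edge), and this forces a sub-case analysis on edge ownership that is genuinely more intricate than in Case~A. Third, you do not address the possibility that \emph{both} $q_1$ and $q_2$ carry extra structure; the paper ultimately relies on Lemma~\ref{lem-no-two-leaves-at-different-nodes} (for $\alpha>1$ all leaves attach to a single node) to exclude this, and your sketch has no analogue. The last paragraph of your proposal correctly anticipates that ownership bookkeeping is the obstacle, but it is precisely Case~B---not~A---where this bites hardest.
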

 
\noindent We prove the theorem in the appendix by showing that there cannot be three or more consecutive nodes of degree 2, in particular no cycles, except for the few cases described in the theorem. Note that the stability of these were already shown in Observations~\ref*{obs:NEs}\ref{obs:5cycle} and \ref*{obs:NEs}\ref{obs:5cycleleaf}.
\noindent Next, we show that although equilibria with two consecutive nodes with a degree of 2 can indeed exist (see Observation~\ref*{obs:zweierketten}\ref{obs:eine_zweierkette}), for $\alpha \neq 2$ a stable network cannot contain two such pairs simultaneously.
\begin{restatable}{observation}{keinezweidegzweiketten}
\label{obs:keine2deg2Ketten}
    For a stable network $G(\mathbf{s})=(V,E(\mathbf{s}))$ and all $\alpha$, it holds 
    \begin{align*}
        \exists\,u, v_1,v_2,w\in V:  &\{u,v_1\},\{v_1,v_2\},\{v_2,w\} \in E(\s) \\
        &\wedge \mathrm{deg}(v_1),\mathrm{deg}(v_2)=2 \wedge \mathrm{deg}(u),\mathrm{deg}(w)\neq 2\\
        \Longrightarrow \nexists\,u', v_1',v_2', w'\in V\setminus\{v_1,v_2\}:& 
        \{u',v_1'\},\{v_1',v_2'\},\{v_2',w'\} \in E(\s) \\
           &\wedge \mathrm{deg}(v_1'),\mathrm{deg}(v_2')=2 \wedge \mathrm{deg}(u'),\mathrm{deg}(w')\neq 2\,,
    \end{align*}
    or $\alpha=2$ and $G(\s)$ has the structure described in Observation~\ref*{obs:zweierketten}\ref{obs:zwei_zweierketten}.
\end{restatable}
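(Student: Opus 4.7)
The plan is to argue by contradiction: assume $G(\s)$ is Nash stable and contains two disjoint degree-2 pairs, namely $u, v_1, v_2, w$ and $u', v_1', v_2', w'$ as in the statement, and derive both $\alpha = 2$ and the structure of Observation~\ref{obs:zweierketten}\ref{obs:zwei_zweierketten}.

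The first step is to understand the middle edge $\{v_1, v_2\}$. It must be bought by exactly one endpoint; by the symmetry $(v_1, u) \leftrightarrow (v_2, w)$ assume that $v_1$ owns it. The 2-neighborhood of $v_1$ consists of $v_1, v_2, w, u$, and all neighbors of $u$. Hence dropping $\{v_1, v_2\}$ removes from her 2-neighborhood exactly those nodes in $\{v_2, w\}$ that are not already neighbors of $u$. Nash stability forces $\alpha$ to be at most this loss, yielding $\alpha \leq 2$; and if $\alpha < 2$ it additionally forces $w$ to be a neighbor of $u$. The same reasoning applied to the second pair yields $\alpha \leq 2$ and, when $\alpha < 2$, that $w'$ is a neighbor of $u'$.

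The second step is a swap and add analysis. Consider $v_1$ exchanging her ownership of $\{v_1, v_2\}$ for a new edge to $v_1'$ or $v_2'$: this keeps her outdegree (hence her edge cost) unchanged, loses at most two nodes from her 2-neighborhood, and gains up to three new ones, namely the chosen endpoint of the new edge and its two neighbors in the second pair. Carrying this out, together with the symmetric analyses for $v_2$, $v_1'$, $v_2'$ and add-moves by any degree-2 node owning only one edge, pins down which of the four boundary nodes $u, w, u', w'$ must be pairwise adjacent and which of $v_1, v_2, v_1', v_2'$ must already lie in which neighborhoods, purely from the requirement that none of these local deviations is strictly improving.

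Finally, in step three I would verify that the only configuration consistent with the forced adjacencies of step two and the structural result Theorem~\ref{thm:noLongPaths} is the graph of Observation~\ref{obs:zweierketten}\ref{obs:zwei_zweierketten}, with $\alpha = 2$. The principal obstacle is the size of the case analysis: the two-fold symmetry inside each pair, the ownership pattern of the six edges of the two paths, and possible coincidences such as $u = u'$, $w = u'$, or $\{u, w\} = \{u', w'\}$ produce many sub-cases. The coincidence cases in particular do not follow the generic swap template—for instance, when $u = u'$ the same neighborhood $N(u)$ feeds both $v_1$ and $v_1'$ and the gain from a swap must be recomputed—and therefore require their own individual arguments before one can conclude that the only surviving configuration is exactly the special 4-cycle-with-two-pendant-degree-2-paths of Observation~\ref{obs:zweierketten}\ref{obs:zwei_zweierketten}.
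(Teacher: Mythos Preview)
Your overall strategy (bound $\alpha$ via the middle edge, then use swap/add deviations to force structure) is the paper's strategy too, but there is one concrete error and one genuine gap.

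\textbf{Step 1 is wrong.} From ``deleting $\{v_1,v_2\}$ loses at most the nodes in $\{v_2,w\}\setminus N(u)$'' you correctly get $\alpha\le 2$. But the further claim ``if $\alpha<2$ then $w\in N(u)$'' does not follow: stability only says $\alpha\le\text{loss}$, and $\alpha<2$ places no upper bound on the loss. The implication actually goes the other way: if $\alpha>1$ then the loss must be at least $2$, hence $w\notin N(u)$. This is exactly the observation the paper uses to start the $\alpha=2$ analysis.

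\textbf{Step 2 misses a case.} Your proposed swaps of $(v_1,v_2)$ to $(v_1,v_1')$ or $(v_1,v_2')$, together with add-moves, do not rule out the coincidence $u=u'$, $w=w'$ when $1<\alpha<2$. In that case $v_1'\in N(u)\subseteq N_2^G(v_1)$ and $w'=w\in N_2^G(v_1)$, so the swap to $v_1'$ loses $\{v_2,w\}$ and gains only $\{v_2'\}$, while the swap to $v_2'$ loses $\{v_2\}$ and gains $\{v_2'\}$; neither is strictly improving. Adding an edge to either primed vertex gains only $v_2'$, which does not beat $\alpha\ge 1$. So your toolkit leaves this sub-case standing. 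The paper closes it with a different swap you did not list: $v_1$ swaps $(v_1,v_2)$ to $(v_1,w)$; since $w=w'$ is adjacent to $v_2'$, the new $2$-neighborhood still contains $v_2\in N(w)$ and additionally contains $v_2'$, a strict gain. More generally, for $\alpha<2$ the paper avoids the constraint-accumulation approach entirely: if $u\ne u'$ (or symmetrically $w\ne w'$) then $v_1',v_2'\notin N_2^G(v_1)$ and a single \emph{add} to $v_1'$ already gains two nodes at cost $\alpha<2$; only the case $u=u',w=w'$ needs the swap-to-$w$. For $\alpha=2$ the paper does do a constraint analysis, but the key moves are adds (forcing all four cross-edges $\{u,u'\},\{u,w'\},\{w,u'\},\{w,w'\}$ to exist) and swaps of $(v_1,v_2)$ to $(v_1,w)$ or $(v_1,u')$ (forcing the outside neighborhoods $\tilde N_1(u),\tilde N_1(w),\tilde N_1(u'),\tilde N_1(w')$ to be empty), not swaps into the other degree-$2$ pair.
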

\noindent We defer the proof to the appendix. Note that the networks presented in Observations~\ref{obs:zweierketten} and \ref{obs:NEs} provide a complete picture on NE with consecutive nodes of degree 2. 
Theorem~\ref{thm:noLongPaths} and Observation~\ref{obs:keine2deg2Ketten} show that the only candidates for these NE are those shown in Fig.~\ref{fig:smallNE}. 
When analyzing the structural properties of equilibria, it makes sense to explore the occurrence of leaf nodes as their existence appears inconsistent with the analogy of social networks, too. These leaf nodes rely entirely on a single neighbor for connectivity, as well as for accessing and sharing information.
For the particularly intriguing case of $\alpha>1$, which closely resembles real-world scenarios, it can be observed easily that if leaf nodes are present, they may only be connected to the same node. The formal proof is deferred to the appendix.
\begin{restatable}{lemma}{blaetterlemma}\label{lem-no-two-leaves-at-different-nodes}
    In a stable network $G(\mathbf{s})=(V,E(\mathbf{s}))$ for $\alpha>1$, there is only one node to which all leaf nodes $L:=\set{v\in V}{\mathrm{deg}(v)=1}$ connect.
\end{restatable}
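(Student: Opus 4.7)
The plan is to argue by contradiction. Suppose there exist two leaves $\ell_1,\ell_2\in L$ whose unique neighbors $u_1$ and $u_2$ satisfy $u_1\neq u_2$. The strategy is to first pin down who bought each leaf-edge, and then to derive a pair of mutually incompatible degree inequalities via a swap-deviation argument.

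First I would show that for $\alpha>1$ each leaf $\ell_i$ must itself have purchased the edge $\{\ell_i,u_i\}$. For if $u_i$ had bought this edge, consider the deviation in which $u_i$ drops it. Since $\ell_i$ is a leaf with no other neighbor, $u_i$'s $2$-neighborhood shrinks by exactly one node ($\ell_i$), while $u_i$ saves $\alpha>1$ in edge cost, a strict improvement that contradicts stability. A symmetric remark rules out both of them buying the edge simultaneously, so indeed $u_i\in S_{\ell_i}$ and $\ell_i\notin S_{u_i}$.

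Second, I would analyze the swap deviation $S_{\ell_1}=\{u_1\}\mapsto S_{\ell_1}'=\{u_2\}$. Because this leaves $|S_{\ell_1}|$ unchanged, stability reduces to a comparison of $2$-neighborhood sizes. Before the swap, $N_2(\ell_1)=\{\ell_1,u_1\}\cup N(u_1)$ has size $\mathrm{deg}(u_1)+1$ (since $\ell_1\in N(u_1)$ and $u_1\notin N(u_1)$); after the swap, $N_2(\ell_1)=\{\ell_1,u_2\}\cup N(u_2)$ has size $\mathrm{deg}(u_2)+2$, the union being disjoint because $\ell_1\notin N(u_2)$ (its only old neighbor was $u_1\neq u_2$). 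Non-improvement then forces $\mathrm{deg}(u_1)\geq\mathrm{deg}(u_2)+1$. Applying the symmetric swap to $\ell_2$ yields $\mathrm{deg}(u_2)\geq\mathrm{deg}(u_1)+1$, and these two inequalities together deliver the desired contradiction.

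The main subtle point is the preliminary step ensuring the leaf itself bought the edge: without it, the swap would carry an additional $\alpha$ of edge cost and the size comparison would collapse. A minor edge case is the degenerate configuration where $\ell_1,u_1,\ell_2,u_2$ are not all distinct; the only way this can happen consistently with $\ell_i$ being a leaf of $u_i$ is $u_1=\ell_2$ and $u_2=\ell_1$, which, given that non-empty stable networks are connected, forces the whole graph to be a single edge on two nodes; but that network is itself not stable for $\alpha>1$ by the same dropping argument used in the first step. Hence the assumption $u_1\neq u_2$ is untenable and all leaves share a common neighbor.
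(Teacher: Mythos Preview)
Your argument is correct and follows essentially the same approach as the paper: first use $\alpha>1$ to conclude that each leaf buys its own edge, then exploit a swap deviation of a leaf to the other leaf's anchor. The only cosmetic difference is that the paper assumes w.l.o.g.\ $|N_2(\ell_1)|\geq|N_2(\ell_2)|$ and exhibits a single improving swap, whereas you derive the two symmetric degree inequalities and combine them; the content is the same.
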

\begin{corollary}[Follows from Lemma~\ref{lem-no-two-leaves-at-different-nodes}]\label{cor:stable-trees-are-always-stars}
    For $\alpha>1$, every Nash equilibrium graph that is a tree is also a star. 
\end{corollary}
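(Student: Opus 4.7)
The plan is to apply Lemma~\ref{lem-no-two-leaves-at-different-nodes} and then use a standard "peel off the leaves" argument to show that the tree must collapse to a star.

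First, I would handle the trivial cases $n\leq 2$ directly: a single node or a single edge is already a star $K_{1,0}$ or $K_{1,1}$. So I may assume $n\geq 3$, in which case every tree $T = G(\mathbf{s})$ has at least one internal (non-leaf) node. By Lemma~\ref{lem-no-two-leaves-at-different-nodes}, there is a unique node $c\in V$ such that every leaf of $T$ is adjacent to $c$. The goal is to show $T$ equals the star centered at $c$, i.e., every non-leaf node coincides with $c$.

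Let $T'$ be the subgraph induced by the internal nodes of $T$ (i.e., $T$ minus all its leaves). Since $T$ is a tree with $n\geq 3$, $T'$ is a nonempty tree on at least one node, and $c\in V(T')$. I claim $c$ is the only possible leaf of $T'$: if $u\in V(T')$ has $\deg_{T'}(u) = 1$ but $u$ is not a leaf of $T$, then $u$ must have at least one neighbor in $T$ that got removed, i.e., at least one leaf neighbor in $T$; by the lemma that leaf is adjacent to $c$, forcing $u=c$. Now any tree on two or more nodes has at least two leaves; hence $T'$ has exactly one node, namely $c$. This means every node other than $c$ is a leaf of $T$, and since by the lemma all leaves are adjacent to $c$, the graph $T$ is precisely the star centered at $c$.

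The only real content lies in the peeling argument, which is routine; the main subtlety I would double-check is the case analysis for very small $n$, where the "unique $c$" provided by the lemma could be ambiguous (for $n=2$, both endpoints are leaves of each other). Apart from that, the corollary is an immediate combinatorial consequence of the lemma together with the fact that every tree has at least two leaves.
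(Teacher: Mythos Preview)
Your argument is correct and is precisely the kind of routine deduction the paper has in mind when it labels the statement as an immediate corollary of Lemma~\ref{lem-no-two-leaves-at-different-nodes}; the paper gives no explicit proof beyond that attribution. A marginally shorter variant you could use is to take a longest path $v_0,\ldots,v_k$ in $T$ (so $v_0,v_k$ are leaves) and note that $k\geq 3$ would force two leaves at distinct neighbors $v_1\neq v_{k-1}$, contradicting the lemma---but your peeling argument is equally valid.
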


\subsection{Tight Bounds on the Diameter}
Although we have seen in the previous section that large networks with a particularly large number of edges exist, we derive tight constant bounds on the diameter of stable and greedy stable networks. Interestingly, the constant bound remains true even for large values of $n$ and $\alpha$, which contrasts with the results of Fabrikant, Luthra, Maneva, Papadimitriou, and Shenker~\cite{fabrikantNetwork2003}.

\begin{theorem}\label{thm:NE-diam-3}
    For $\alpha \geq 1$, all connected Nash stable networks $G(\mathbf{s})$ fulfill \linebreak \mbox{$\mathrm{diam}(G(\mathbf{s}))\leq 3$}. This bound is tight.
    
\end{theorem}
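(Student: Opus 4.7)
Tightness is immediate from Observation~\ref{obs:NEs}\ref{obs:5cycleleaf}: the $5$-cycle with a leaf is Nash stable for $\alpha=2$ and has diameter exactly $3$.

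For the upper bound, suppose for contradiction that some connected NE $G(\s)$ has $\mathrm{diam}(G(\s))\geq 4$. Any shortest path of length at least $4$ contains a length-$4$ subpath, whose endpoints $u$, $v$ then satisfy $d(u,v)=4$; fix such a shortest $u$-$v$-path $u=x_0,x_1,x_2,x_3,x_4=v$. The entire argument hinges on a disjointness observation: because $d(u,v)=4$, every node in $\{v\}\cup N(v)$ lies at distance at least $3$ from $u$, hence $N_2(u)\cap(\{v\}\cup N(v))=\emptyset$, and symmetrically for $v$.

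This makes ``jumping to the opposite endpoint'' cleanly quantifiable. If $u$ adds the edge $\{u,v\}$, then $N_2(u)$ grows by exactly $|N(v)|+1$ entirely new nodes, so the Nash add-condition forces $\alpha \geq |N(v)|+1$, and symmetrically $\alpha\geq |N(u)|+1$. For any bought edge $y\in S_u$, the Nash swap-condition for $\{u,y\}\mapsto\{u,v\}$ implies $L_u(y)\geq |N(v)|+1$, where $L_u(y)$ denotes the loss in $|N_2(u)|$ caused by deleting the edge $\{u,y\}$ from $G(\s)$. Since the per-edge loss sets for distinct $y$ are pairwise disjoint subsets of $N_2(u)$, summing over $y\in S_u$ yields $|S_u|\cdot(|N(v)|+1)\leq |N_2(u)|$; analogous inequalities hold with the roles of $u$ and $v$ interchanged.

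I would then finish by a case analysis on which endpoint purchased each of the four path edges, aiming in every case to exhibit a strictly improving swap for an interior node (typically $x_1$ or $x_3$): the swap loses only the nodes whose unique length-$2$ access from the swapping agent goes through the removed path edge, while gaining the fully disjoint set $\{v\}\cup N(v)$ or $\{u\}\cup N(u)$. The main obstacle is bounding these per-edge losses tightly, since Nash stability already forces every individual loss to be at least $\alpha$. The crux is that the midpoint $x_2 \in N_2(u)\cap N_2(v)$ provides two-hop redundancy on both sides of the path, so the aggregate losses on the four path edges cannot all simultaneously meet the swap lower bound $|N(v)|+1$; combined with the degree constraints $\deg(u),\deg(v)\leq \alpha-1$ and the leaf restriction from Lemma~\ref{lem-no-two-leaves-at-different-nodes} (which prevents $u$ or $v$ from being leaves outside a star), this produces the desired contradiction.
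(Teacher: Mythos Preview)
Your tightness example is fine. The upper-bound argument, however, is not a proof but a plan, and the plan does not close. You correctly derive the add- and swap-inequalities at the endpoints $u,v$ using the disjointness $N_2(u)\cap(\{v\}\cup N(v))=\emptyset$, but then you write ``I would then finish by a case analysis\ldots'' and never carry it out. The difficulty you yourself flag is real: once you move to interior nodes $x_1,x_3$, the clean disjointness collapses (e.g.\ $x_3\in N(v)\cap N_2(x_1)$), so the gain from a swap to $v$ is no longer $|N(v)|+1$, and you have no handle on the per-edge losses beyond the Nash lower bound $\alpha$, which is exactly what you need to beat. The remark that ``$x_2$ provides two-hop redundancy on both sides'' is suggestive but does not translate into an inequality that forces a contradiction, and the appeal to Lemma~\ref{lem-no-two-leaves-at-different-nodes} does not help: that lemma only says all leaves hang off a single node, it does not prevent one of $u,v$ from being a leaf.

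The paper bypasses all of this with a two-line idea you are missing. First, neither endpoint can have an \emph{incoming} edge: if some $w$ buys an edge to $v_1$, then that edge is worth at least $\alpha$ to $w$, hence (by the disjointness you already noted) buying an edge to $v_1$ is worth at least $\alpha+1$ to $v_5$, contradicting stability. So both $v_1$ and $v_5$ own all their incident edges. Second, let $v_1$ be the endpoint with higher cost and have it \emph{copy $v_5$'s entire strategy}. Since $v_5$ has no incoming edges, $v_1$ now has exactly the neighbours $v_5$ had, so its new $2$-neighbourhood is $N_2(v_5)\cup\{v_1\}$, strictly larger by one, while the edge count matches $|S_{v_5}|$; hence $\cost_{v_1}(\s')=\cost_{v_5}(\s)-1<\cost_{v_1}(\s)$. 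This global strategy-replacement move is the key step, and no local swap analysis on path edges is needed.
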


\noindent We prove the theorem by providing a proof for the upper bound in Lemma~\ref{lem:NE-at-most-diam-3} and a lower bound example in Lemma~\ref{lem-NE-diam3-1alpha3}.

\begin{lemma}\label{lem:NE-at-most-diam-3}
    For $\alpha\geq1$, all connected Nash stable networks $G(\mathbf{s})$ fulfill \linebreak \mbox{$\mathrm{diam}(G(\mathbf{s}))\leq 3$}.
\end{lemma}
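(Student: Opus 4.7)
The plan is to prove the lemma by contradiction: assume $G(\s)$ is Nash stable with $\mathrm{diam}(G(\s)) \geq 4$, fix vertices $u, v$ with $d_G(u,v) = 4$ (choosing consecutive vertices on a longer shortest path if $\mathrm{diam}(G(\s)) > 4$), and denote the shortest $u$--$v$ path by $u = p_0, p_1, p_2, p_3, p_4 = v$.

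The first step is an endpoint degree bound obtained from a single-edge addition deviation. Since $d_G(u,v) = 4$, every vertex in $\{u\} \cup N_G(u)$ sits at distance at least $3$ from $v$ and hence outside $N_2^{G(\s)}(v)$. If $v$ deviates to $S_v \cup \{u\}$, the new edge $\{u,v\}$ brings exactly $1 + \deg_G(u)$ new vertices into $v$'s $2$-neighborhood at extra cost $\alpha$, so the NE condition forces $\deg_G(u) \leq \alpha - 1$, and symmetrically $\deg_G(v) \leq \alpha - 1$.

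This handles the regime $\alpha < 3$ cleanly. Integrality of degrees together with $\deg \leq \alpha - 1$ gives $\deg_G(u), \deg_G(v) \leq 1$, so both endpoints are leaves (their degrees are at least $1$ by connectedness of $G(\s)$). For $\alpha \leq 1$ this already contradicts connectedness; for $1 < \alpha < 3$, Lemma~\ref{lem-no-two-leaves-at-different-nodes} states that all leaves attach to a common vertex, whence $d_G(u,v) \leq 2$, contradicting $d_G(u,v) = 4$.

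The principal obstacle is the regime $\alpha \geq 3$, where the direct bound $\deg \leq \alpha - 1$ is no longer restrictive. Here I would enrich the analysis with further NE conditions. First, applying the add-edge deviation to $p_3$ buying an edge to $u$ (noting that $p_1 \in N_G(u)$ lies in $N_2^{G(\s)}(p_3)$ via $p_2$ and is not ``new'') forces $|N_G(u) \setminus N_2^{G(\s)}(p_3)| \leq \alpha - 1$, i.e., all but at most $\alpha - 1$ neighbors of $u$ must share a common neighbor with $p_3$; a symmetric bound constrains $N_G(v)$ from $p_1$. Second, the swap deviation of $u$ replacing any fully owned edge $y \in S_u \setminus B_u$ by the edge to $v$ gives gain $1 + \deg_G(v)$ (all new by the distance argument), which must be dominated by the loss $\leq 1 + \deg_G(y)$, yielding $\deg_G(y) \geq \deg_G(v)$, and symmetrically for $v$. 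Combining these with Theorem~\ref{thm:noLongPaths} and Lemma~\ref{lem-no-two-leaves-at-different-nodes} and performing a case analysis on the ownership of the edges $\{u,p_1\}$ and $\{v,p_3\}$ should either produce a contradictory degree inequality or exhibit an explicit improving deviation for one of $\{u, p_1, p_3, v\}$. The hardest step is extracting a strict improvement for large $\alpha$, since every individual deviation carries substantial slack and the constraints have to be fused very carefully.
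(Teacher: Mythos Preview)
Your argument for $1 \le \alpha < 3$ is correct and essentially coincides with the paper's opening move: the add-an-edge deviation of $v$ toward $u$ yields $\deg_G(u)\le \alpha-1$ (and symmetrically), which together with integrality and Lemma~\ref{lem-no-two-leaves-at-different-nodes} finishes that range.

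The genuine gap is the case $\alpha \ge 3$, which you flag as the principal obstacle but do not close. The tools you list---further single-edge additions from $p_1,p_3$, single swaps at the endpoints, and appeals to Theorem~\ref{thm:noLongPaths}---are all \emph{local} deviations, and for large $\alpha$ each one carries $\Theta(\alpha)$ slack; there is no evident way to fuse finitely many such inequalities into a strict contradiction, and you yourself say only that this ``should'' work. The paper bypasses this entirely with a two-step argument you are missing. First, it strengthens your degree bound to the statement that $u$ and $v$ have \emph{no incoming edges}: if some $w$ owns an edge to $u$, that edge must be worth at least $\alpha$ to $w$, but its value is at most $\deg_G(u)$, contradicting your own bound $\deg_G(u)\le \alpha-1$. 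Second---and this is the key non-local idea---once every edge at $u$ and at $v$ is self-owned, the endpoint with the larger cost (say $u$) can \emph{replace its entire strategy by $S_v$}. Because $u$ has no incoming edges and $d_G(u,v)=4$, in the new profile $u$'s $2$-neighborhood becomes exactly $N_2^{G(\s)}(v)\cup\{u\}$, so $\cost_u$ drops to $\cost_v(\s)-1 < \cost_v(\s) \le \cost_u(\s)$, a strict improvement that works uniformly for all $\alpha>1$. Your local-deviation toolkit cannot generate this move; the missing ingredient is exactly this global ``copy the far endpoint'' deviation.
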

\begin{proof}
    Suppose there is a stable network $G(\mathbf{s})=(V,E(\mathbf{s}))$ with \mbox{$\mathrm{diam}(G(\mathbf{s}))\geq 4$}. 
    Thus, $\exists\, v_1,v_5\in V$ with $d_G(v_1,v_5)=4$.
     Let $v_1$---$v_2$---$v_3$---$v_4$---$v_5$ denote a shortest path from $v_1$ to $v_5$ in $G$. In particular, $d_G(v_1,v_3)=d_G(v_3,v_5)=2$, $d_G(v_1,v_2)=1$, $d_G(v_2,v_5)=3$, $d_G(v_1,v_4)=3$, and $d_G(v_4,v_5)=1$. For $\alpha=1$, $v_1$ has an improving response in connecting to $v_4$ or $v_5$, since $v_4,v_5\notin N_2^G(v_1)$. It remains to prove the statement for $\alpha>1$.
     If $v_1$ had an incoming edge from a node $v\in V$, which means that the benefit to the player $v$ from this edge is at least $\alpha$, then it would be beneficial for $v_5$ to also connect to $v_1$, since $N_1^G(v_1)\cap N_1^G(v_5)=\emptyset$. As a result, player $v_1$ cannot have any incoming edges, and by symmetry, the same holds true for $v_5$. 
     Without loss of generality let \mbox{$\cost_{v_1}(\s)\geq \cost_{v_5}(\s)$}. The agent $v_1$ has an improving response by copying the strategy of agent $v_5$. This would add $v_1$ and $v_5$ to each other's 2-neighborhoods and agent~$v_1$ would possess new costs of $\cost_{v_1}(\mathbf{s'})=\cost_{v_5}(\mathbf{s})-1 < \cost_{v_5}(\mathbf{s}) \leq \cost_{v_1}(\mathbf{s})$.  
\end{proof}

\begin{restatable}{lemma}{lemNEdiamthreeonealphathree}
\label{lem-NE-diam3-1alpha3}
    There is a Nash stable network with diameter 3.
\end{restatable}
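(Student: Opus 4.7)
The plan is to exhibit a concrete Nash stable network whose diameter is exactly $3$, which, together with the upper bound from Lemma~\ref{lem:NE-at-most-diam-3}, establishes the tightness claim in Theorem~\ref{thm:NE-diam-3}. A natural candidate that is already available is the $5$-cycle with an attached leaf from Observation~\ref*{obs:NEs}\ref{obs:5cycleleaf}, whose Nash stability for $\alpha=2$ is certified there. So the only task remaining is a diameter computation for a graph whose stability is already known.

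Concretely, I would fix the network with nodes $v_1,\dots,v_5$ forming a $5$-cycle plus a leaf $v_6$ attached to $v_1$, using the strategy profile from Observation~\ref*{obs:NEs}\ref{obs:5cycleleaf}. The $5$-cycle alone has diameter $2$. Any path from the leaf $v_6$ to another cycle node must enter the cycle through $v_1$, so $v_6$ lies at distance $1$ from $v_1$, distance $2$ from the two cycle-neighbors $v_2$ and $v_5$ of $v_1$, and distance $3$ from the two cycle-antipodal nodes $v_3$ and $v_4$. Hence $\mathrm{diam}(G(\mathbf{s}))=3$ exactly, which is the required lower bound matching Lemma~\ref{lem:NE-at-most-diam-3}.

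There is essentially no obstacle here: the nontrivial part — ruling out improving responses in the $5$-cycle-with-leaf graph — has already been dispatched in Observation~\ref{obs:NEs}, so the remaining verification is the elementary distance check above. If one wishes to widen the parameter range covered by an explicit diameter-$3$ witness, the $4$-cycle with a leaf from Observation~\ref*{obs:NEs}\ref{obs:4cycleleaf} (Nash stable for $\alpha=1$) serves equally well: the leaf is again at distance $3$ from the unique antipodal cycle node, while all other pairs are at distance at most $2$.
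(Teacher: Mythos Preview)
Your proof is correct: the $5$-cycle with a leaf is Nash stable at $\alpha=2$ by Observation~\ref*{obs:NEs}\ref{obs:5cycleleaf}, and the distance check yielding diameter exactly~$3$ is immediate. This suffices for Lemma~\ref{lem-NE-diam3-1alpha3} as stated and hence for the tightness claim in Theorem~\ref{thm:NE-diam-3}.

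The paper, however, takes a considerably more elaborate route: it constructs a new network from two $5$-cycles $A$ and $B$ interconnected through five groups $V_1,\dots,V_5$ of $\lceil\alpha\rceil$ satellite nodes each, and verifies Nash stability for the entire range $1\leq\alpha<3$. What this buys the paper is twofold. First, it produces a diameter-$3$ witness for every $\alpha$ in $[1,3)$ rather than at isolated parameter values; your candidates (the $4$-cycle-with-leaf at $\alpha=1$ and the $5$-cycle-with-leaf at $\alpha=2$) are each stable only at a single point. Second, and more consequentially for the paper's global structure, the paper's construction is also shown to be \emph{greedy} stable for all $\alpha\geq 1$, and is then reused verbatim as the lower-bound instance in Theorem~\ref{thm:GE-diam-3}. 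Your approach is cleaner and fully self-contained for the present lemma, but it would not serve that later purpose, since neither of your small examples remains stable outside its single value of~$\alpha$.
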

\begin{proof}
\begin{figure}
\centering
\includegraphics[scale=0.4]{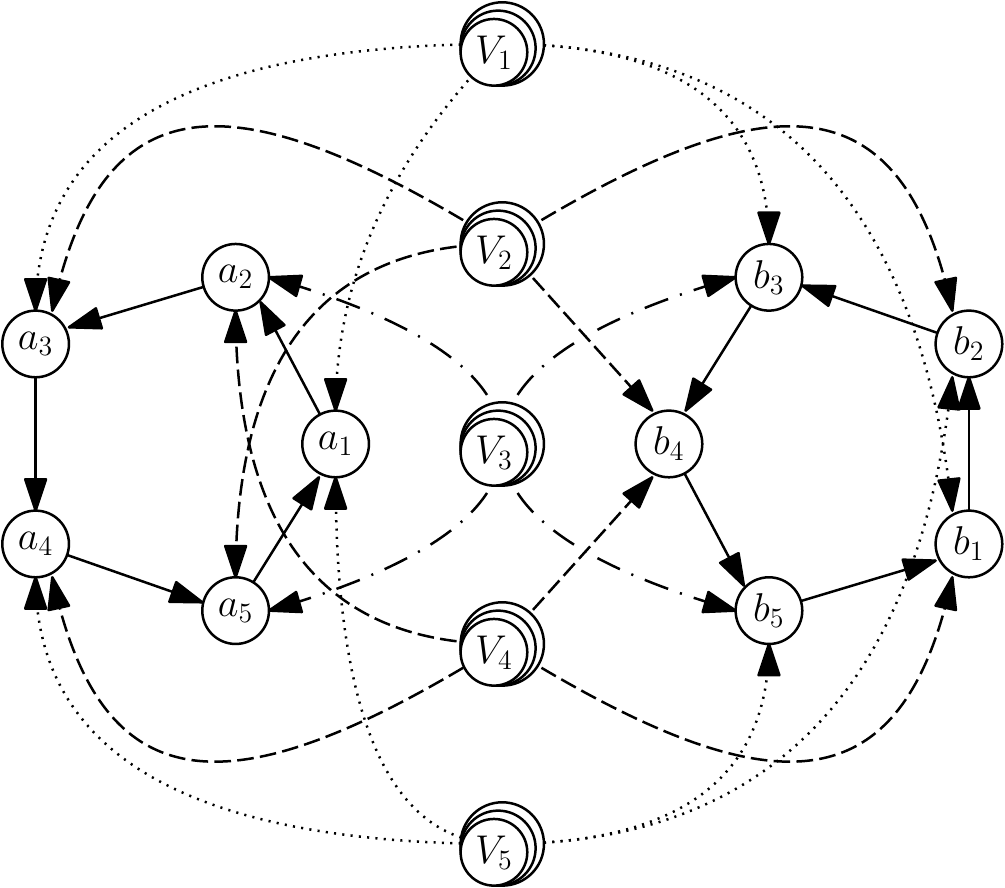}
\caption{Illustration of the NE used in Lemma~\ref{lem-NE-diam3-1alpha3}. Nodes with multiple borders represent multiple identical nodes.}
\label{fig-NE-Diam3-Example}
\end{figure}
We choose any $1 \leq \alpha<3$. 
First, we explain how to construct the network, then we show its Nash stability.
The stability of the network relies heavily on the properties of integer groups modulo 5. The network consists of the following set of nodes (see Fig.~\ref{fig-NE-Diam3-Example}): 
\begin{itemize}
    \item Two sets $A$ and $B$ of five nodes $a_1,\ldots,a_5$ resp. $b_1,\ldots,b_5$, forming two 5-cycles
    \item Five sets $V_1,\ldots,V_5$ of $\lceil\alpha\rceil$ nodes each
\end{itemize}
The network contains the following edges for $1 \leq i \leq 5$, $i\in \mathbb{N}$:
\begin{itemize}
    \item $a_i$ and $b_i$ have both one edge to $a_{i+1 \bmod 5}$ and $b_{i+1 \bmod 5}$ (the node on the 5-cycle with the next higher index).
    \item Every node in $V_i$ has edges to $a_{2i-1\bmod5}$ and $a_{2i+1\bmod5}$.
    \item Every node in $V_i$ has edges to $b_{i\bmod5}$ and $b_{i+2\bmod5}$.
\end{itemize}
The resulting graph has diameter 3, e.g., because $a_1$ and $b_4$ have no common neighbor. A detailed stability proof can be found in the appendix, where we also demonstrate that this construction is greedy stable for any $\alpha \geq 1$.  
\end{proof}
\begin{observation}\label{obs:alpha<2_diam<=2}
    For $\alpha<1$, all connected (greedy) stable networks $G(\s)$ fulfill $\mathrm{diam}(G(\s))\leq 2$ as it is beneficial to connect to any node that is not in a respective 2-neighborhood.
\end{observation}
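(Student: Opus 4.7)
The plan is a short contradiction argument directly along the lines hinted at in the statement. Suppose for contradiction that there is a connected greedy stable network $G(\s)=(V,E(\s))$ with $\mathrm{diam}(G(\s))\geq 3$. Then there exist nodes $u,v\in V$ with $d_G(u,v)\geq 3$, which by definition means $v\notin N_2^{G(\s)}(u)$.

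I would then consider the greedy improving response in which agent $u$ simply adds $v$ to her strategy, i.e.\ $S_u'=S_u\cup\{v\}$. Let $\s'=(S_u',\s_{-u})$. On the one hand, the edge-cost contribution of $u$ grows by exactly $\alpha$. On the other hand, after adding the edge $\{u,v\}$, the node $v$ lies in $N_2^{G(\s')}(u)$, so the 2-neighborhood of $u$ grows by at least $1$. Thus
\[
\cost_u(\s')-\cost_u(\s)\;\leq\;\alpha-1\;<\;0,
\]
since $\alpha<1$. This contradicts greedy stability of $\s$, and therefore also contradicts Nash stability, since every NE is a GE.

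There is essentially no obstacle here beyond writing down the one-step deviation cleanly: the whole argument is a single addition move whose cost change is bounded by $\alpha-1$, which is negative precisely in the regime $\alpha<1$. Hence every connected (greedy) stable network satisfies $\mathrm{diam}(G(\s))\leq 2$.
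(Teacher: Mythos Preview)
Your argument is correct and matches exactly the one-line justification the paper gives in the statement itself: the paper does not provide a separate proof, only the parenthetical explanation that adding an edge to any node outside the 2-neighborhood is beneficial when $\alpha<1$. You have simply spelled this out formally as a single greedy addition move with cost change at most $\alpha-1<0$, which is precisely what is intended.
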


\noindent As the diameter cannot exceed the bound of $3$, even for large values of $\alpha$, it is worth examining whether similar results apply to greedy equilibria. Indeed, we establish a tight bound of $4$ for greedy stable networks when $\alpha\geq 3$.

\begin{restatable}{theorem}{GEdiamFour}\label{thm:GE-diam-4}
    For $\alpha\geq 3$, all connected greedy stable networks $G(\mathbf{s})$ fulfill \linebreak\mbox{$\mathrm{diam}(G(\mathbf{s}))\leq 4$.} This bound is tight.
\end{restatable}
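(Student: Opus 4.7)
I would split the proof into an upper-bound argument and a matching construction.

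\textbf{Upper bound, by contradiction.} Assume there exists a greedy stable network $G(\mathbf{s})$ with $\mathrm{diam}(G(\mathbf{s})) \geq 5$, and fix a shortest path $v_1\text{--}v_2\text{--}v_3\text{--}v_4\text{--}v_5\text{--}v_6$ of length $5$. Since this path is shortest, for any two $v_i, v_j$ with $|i-j|\geq 3$ we have $N_1^G(v_i)\cap N_2^G(v_j)=\emptyset$: otherwise a common node would witness $d_G(v_i,v_j)\leq 3$. I would use this observation as the workhorse to bound the gain of several greedy moves exactly. For instance, $v_1$ adding the single edge to $v_5$ yields a gain of exactly $1+\mathrm{deg}(v_5)$ in $v_1$'s 2-neighborhood, so greedy stability forces $\mathrm{deg}(v_5)\leq \alpha -1$; symmetric single-add moves by $v_1,v_2,v_5,v_6$ (towards $v_6,v_5,v_1,v_2$) force all four endpoint-neighborhood degrees to be at most $\alpha-1$.

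\textbf{From degree bounds to contradiction.} With these degree bounds, I would then bring in swap and deletion moves to squeeze out a contradiction. For any outgoing edge of $v_1$, say to some $u$, swapping $(v_1,u)$ for $(v_1,v_5)$ brings in the whole block $\{v_5\}\cup N_1(v_5)$, none of which lies in $N_2(v_1)$, and loses at most the set of nodes reachable from $v_1$ only via $u$. Greedy stability then forces $\mathrm{deg}(u)\geq \mathrm{deg}(v_5)$, and, iterating with target $v_6$, $\mathrm{deg}(u)\geq \mathrm{deg}(v_6)$. Mimicking the NE proof of Lemma~\ref{lem:NE-at-most-diam-3}, I would further split on whether $v_1$ (respectively $v_6$) has an incoming edge: if agent $w$ bought $(w,v_1)$, the benefit of that purchase is $\geq \alpha$, yielding many nodes in $N_1(v_1)\cup\{v_1\}$ not in $N_2(w)$. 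Combined with the disjointness $N_1(v_1)\cap N_2(v_6)=\emptyset$, the single-addition by $v_6$ of the edge $(v_6,v_1)$ would then gain strictly more than $\alpha$, contradicting greedy stability. The remaining case (both $v_1$ and $v_6$ have no incoming edges, so all their adjacent edges are outgoing) is handled by combining the swap constraints above with the endpoint degree bounds and the fact that $\mathrm{deg}(v_5),\mathrm{deg}(v_2)\geq 2$, pushing $\alpha < 3$.

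\textbf{Main obstacle.} The delicate part is the swap analysis, because losses from removing an edge are not simply the full open neighborhood of the removed endpoint: any node still reachable within two hops after the swap does not count. I expect to handle this by branching on the orientation of the path edges $(v_1,v_2)$, $(v_5,v_6)$ and on the in/out status of the endpoints, using the endpoint-degree bound $\alpha-1\geq 2$ to ensure a concrete constant-sized case split. The analog of the ``copy strategy'' trick from the NE proof is replaced by repeated single swaps/additions, each of which must be individually improving.

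\textbf{Tightness.} For the matching lower bound I would exhibit, for every $\alpha\geq 3$, an explicit greedy stable network of diameter exactly $4$. A natural candidate extends the double $5$-cycle plus satellites construction of Lemma~\ref{lem-NE-diam3-1alpha3} by placing two such gadgets at distance $4$ (e.g.\ via an intermediate layer chosen so that no pair of nodes on opposite sides shares a common neighbor) and duplicating each satellite $\lceil\alpha\rceil$ times. The $\lceil\alpha\rceil$ multiplicities ensure no agent can drop an edge (losing $\lceil\alpha\rceil > \alpha-1$ satellites outweighs the saved cost $\alpha$), and the modular structure of the $5$-cycle, as in Lemma~\ref{lem-NE-diam3-1alpha3}, rules out beneficial additions and swaps. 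Stability would be verified by a routine per-type case check (cycle node, relay, satellite) on all three greedy move types.
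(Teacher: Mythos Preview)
Your route is more complicated than needed, and the ``remaining case'' as you state it does not close. The paper's argument is one paragraph and works for \emph{all} $\alpha$, not only $\alpha\ge 3$: take the edge $e=\{v_1,v_2\}$ on the length-$5$ shortest path, let $x$ be its buyer and $y$ the other endpoint. Greedy deletion-stability of $e$ for $x$ forces the gain of $e$ to be at least $\lceil\alpha\rceil$; this gain is contained in $\{y\}\cup N_1(y)\setminus\{x\}$, so $\mathrm{deg}(y)\geq\lceil\alpha\rceil$. Since $d_G(v_6,y)\geq 4$, the single addition $(v_6,y)$ gains $1+\mathrm{deg}(y)\geq 1+\lceil\alpha\rceil>\alpha$ nodes, an improving greedy move. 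No split on incoming/outgoing edges is needed; your incoming-edge case is exactly this argument applied to $(w,v_1)$ instead of $\{v_1,v_2\}$, and the same deletion argument applied to the outgoing edge $(v_1,v_2)$ would have eliminated your ``remaining case'' entirely.

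The concrete gap is your claim ``pushing $\alpha<3$'' in the remaining case. From the swap $(v_1,v_2)\to(v_1,v_5)$ you get $\mathrm{deg}(v_2)\geq 1+\mathrm{deg}(v_5)\geq 3$, which combined with $\mathrm{deg}(v_2)\leq\alpha-1$ gives $\alpha\geq 4$, not $\alpha<3$; the inequality goes the wrong way. A contradiction \emph{does} follow if you also invoke the symmetric swap $(v_6,v_5)\to(v_6,v_2)$, yielding $\mathrm{deg}(v_5)\geq 1+\mathrm{deg}(v_2)$ and hence $2\leq 0$, but you neither state this nor is ``$\alpha<3$'' the right summary of it.

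\textbf{Tightness.} The paper's construction is entirely different from your sketch. It uses two blocks $A,B$ of $\lfloor\alpha\rfloor$ nodes each (with distinguished nodes $a_t,a_b,b_t,b_b$ and the remaining $a_i,b_i$ each buying to both distinguished nodes in their block), connected by four sets $M_{tt},M_{tb},M_{bt},M_{bb}$ of $\lfloor\alpha\rfloor$ nodes, where every $m\in M_{xy}$ buys exactly the two edges to $a_x$ and $b_y$. Diameter $4$ is realized between $a_i$ and $b_j$. Your proposal of gluing two diameter-$3$ gadgets via an ``intermediate layer'' is too vague to be assessed: the layer creates new swap targets for satellite and cycle nodes on either side, and you give no mechanism preventing a satellite from swapping one of its four cycle-edges to a node in the other gadget (which would bring in $\geq\lceil\alpha\rceil$ fresh satellites). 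The paper's construction sidesteps this by making every bridging node already adjacent to both sides.
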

\noindent We again prove the theorem by providing an upper and a lower bound in the appendix.
Furthermore, the network used in the proof is only stable for $\alpha\geq3$; in fact, we show in the appendix that for $\alpha<3$, no greedy equilibrium with diameter $4$ exists, and prove a diameter of $3$ as a tight bound.

\begin{restatable}{theorem}{GEdiamThree}\label{thm:GE-diam-3}
    For $1\leq\alpha<3$, all connected greedy stable networks $G(\mathbf{s})$ fulfill \linebreak \mbox{$\mathrm{diam}(G(\mathbf{s}))\leq 3$}. This bound is tight.
\end{restatable}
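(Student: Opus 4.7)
The tightness direction follows immediately from Lemma~\ref{lem-NE-diam3-1alpha3}: its construction is a Nash stable (hence greedy stable) network of diameter $3$ for every $1\leq\alpha<3$. For the upper bound I would argue by contradiction. Suppose $G(\s)$ is a connected GE with $\mathrm{diam}(G(\s))\geq 4$, and fix a shortest path $v_1,v_2,v_3,v_4,v_5$ of length $4$. The range $1\leq\alpha<2$ is immediate: $v_1$'s greedy addition of the single edge to $v_4$ brings at least $\{v_4,v_5\}$ into $N_2(v_1)$ (both at distance $\geq 3$ from $v_1$), a gain of at least $2>\alpha$, so this greedy improving response contradicts the GE property.

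For the harder range $2\leq\alpha<3$, I would first show $\deg(v_1)=\deg(v_5)=1$. Otherwise, WLOG $\deg(v_5)\geq 2$: then $v_1$'s addition of the edge to $v_5$ brings $v_5$ together with all of its neighbors (each at distance $\geq 3$ from $v_1$) into $N_2(v_1)$, a gain of at least $3>\alpha$. Given that both endpoints are leaves, I would case-split on ownership of the two leaf edges, using the following swap analysis as the key tool: if $v_1$ bought $(v_1,v_2)$, then a greedy swap from $v_2$ to any $u\in V\setminus\{v_1,v_2\}$ changes $|N_2(v_1)|$ from $1+\deg(v_2)$ to $2+\deg(u)$, so no-improvement forces $\deg(v_2)>\deg(u)$ for every such $u$; in particular $\deg(v_2)>\deg(v_4)$. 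Case~A: if both $v_1$ and $v_5$ bought their edges, the two applications of this argument yield $\deg(v_2)>\deg(v_4)$ and $\deg(v_4)>\deg(v_2)$ simultaneously, contradiction. Case~B: if neither bought, then $v_2$ must have bought $(v_1,v_2)$ and can greedily delete $v_1$ from $S_{v_2}$; exactly $v_1$ leaves $N_2(v_2)$ (because its only edge was $(v_1,v_2)$) while $\alpha\geq 2$ is saved in edge cost, a strict improvement.

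The remaining Case~C (WLOG) is that $v_1$ bought $(v_1,v_2)$ but $v_5$ did not. The swap analysis yields $\deg(v_2)>\deg(v_4)$. Since $S_{v_5}=\emptyset$, $v_5$'s only greedy moves are additions; consider her add of the edge to $v_2$. Because $N_2(v_5)=\{v_4\}\cup N(v_4)$ and $v_2\notin N(v_4)$, the new members of $N_2(v_5)$ are exactly $\{v_2\}\cup(N(v_2)\setminus N(v_4))$, so no-improvement forces $1+|N(v_2)\setminus N(v_4)|\leq\alpha<3$ and hence $|N(v_2)\setminus N(v_4)|\leq 1$. Since $v_1\in N(v_2)\setminus N(v_4)$, every other neighbor of $v_2$ lies in $N(v_4)$; together with $v_3\in N(v_4)$ and $v_5\notin N(v_2)$ this gives $N(v_2)\setminus\{v_1,v_3\}\subseteq N(v_4)\setminus\{v_3,v_5\}$, hence $\deg(v_2)\leq\deg(v_4)$, contradicting the swap inequality. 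All three cases yield contradictions, establishing the upper bound.

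The main obstacle is the range $2\leq\alpha<3$: a single greedy move changes only one edge, so the NE-style ``copy $v_5$'s whole strategy'' argument of Lemma~\ref{lem:NE-at-most-diam-3} is unavailable. One must combine three distinct greedy moves (swap, delete, single-edge add) and track the ownership of the leaf edges carefully; the final contradiction rests on the swap constraint on $v_1$ and the add constraint on $v_5$ producing opposite degree inequalities between $v_2$ and $v_4$.
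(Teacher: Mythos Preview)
Your proof is correct, but it takes a markedly different route from the paper for the range $2\leq\alpha<3$. The paper observes that Lemma~\ref{lem-no-two-leaves-at-different-nodes} (whose proof uses only a single swap and hence applies verbatim to greedy equilibria) already forbids $v_1$ and $v_5$ from both being leaves when $\alpha>1$. So without loss of generality $\deg(v_1)\geq 2$, and then $v_5$'s single-edge add to $v_1$ gains $v_1$ together with all of $N(v_1)$\textemdash at least three nodes, all at distance $\geq 3$ from $v_5$\textemdash which is strictly improving for $\alpha<3$. That is the entire argument.

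Your approach instead first argues (by the same add move) that both endpoints \emph{must} be leaves, and then disposes of that configuration via a three-case split on edge ownership, combining a swap bound ($\deg(v_2)>\deg(v_4)$), a delete, and a cross add from $v_5$ to $v_2$. This is self-contained\textemdash you never invoke the leaf lemma\textemdash and the degree-comparison trick in Case~C is a nice touch; but it is considerably longer than necessary, since Cases~A--C are precisely the situation that Lemma~\ref{lem-no-two-leaves-at-different-nodes} rules out in one line. In effect you are re-proving a specialized instance of that lemma inside the diameter argument.
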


\section{Dynamics \& Hardness}
After exploring the richness of (greedy) equilibria by presenting stable networks with diverse characteristics and highlighting structural properties, we now turn to analyzing best-response dynamics. However, we will show that computing best responses is NP-complete and that the finite improvement property does not hold. To start, we demonstrate that the 2-NMG fails to satisfy the finite improvement property as there exists an improving response cycle.
\begin{restatable}{proposition}{dynamics}
\label{prop:dynamics}
    For any $\alpha$, there exists a 2-NMG instance with cyclic sequence of strategy profiles of improving responses.
\end{restatable}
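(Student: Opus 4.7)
The plan is to exhibit, for an arbitrary fixed $\alpha$, a small 2-NMG instance together with an explicit cyclic sequence of strategy profiles in which each step is an improving response. The natural way to make this work uniformly in $\alpha$ is to design a small ``kernel'' of strategic agents whose moves form the cycle, padded with $\lceil\alpha\rceil$ (or a small multiple of it) duplicated satellite nodes. The satellites act as a ``reward bank'': including or excluding them from the mover's 2-neighborhood changes its benefit by a chunk strictly larger than $\alpha$, which is what turns each step into an improving response regardless of how large $\alpha$ is.

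Concretely, I would pick two or three strategic agents $u, v$ (possibly $w$) and attach a block $S_u, S_v$ of $\lceil\alpha\rceil+1$ satellite copies to each, so that swapping or adding a single edge can flip the whole block in or out of an agent's 2-neighborhood and thereby change the benefit term by at least $\lceil\alpha\rceil+1>\alpha$. I would then describe four strategy profiles $\s_0,\s_1,\s_2,\s_3$ differing by exactly one single-edge move each (add/delete/swap), for instance:
\begin{itemize}
    \item From $\s_0$ to $\s_1$: agent $u$ swaps an edge from some ``neutral'' target to a target that newly brings $S_v$ into $N_2^{G}(u)$; the gain is $\lceil\alpha\rceil+1$ while the edge cost change is $0$.
    \item From $\s_1$ to $\s_2$: agent $v$ reacts by deleting an edge that has become redundant (its contribution to $N_2^G(v)$ is now provided through $u$'s new configuration), saving $\alpha$ at a 2-neighborhood loss strictly less than $\alpha$.
    \item From $\s_2$ to $\s_3$: agent $u$ undoes its swap since $v$'s deletion removed the reason for it, regaining at least $\lceil\alpha\rceil+1$ in 2-neighborhood.
    \item From $\s_3$ back to $\s_0$: agent $v$ re-adds the previously deleted edge, which is again strictly worth it since $u$'s configuration no longer covers the lost satellites.
\end{itemize}
I would then verify each transition by computing $\Delta\cost = \alpha\cdot\Delta|S_{\cdot}| - \Delta|N_2^{G(\cdot)}(\cdot)|$ and checking that it is strictly negative, using the chosen satellite multiplicities.

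The main obstacle will be calibrating the instance so that \emph{all four} transitions are improving moves simultaneously, for every $\alpha>0$. The delicate point is that the two satellite blocks $S_u$ and $S_v$ must interact asymmetrically: $u$'s swap has to make $S_v$ visible to $u$ while simultaneously making some other part of $v$'s current purchase redundant, and vice versa. To handle this cleanly I would use a shared ``hub'' node $h$ that lies at distance exactly $2$ from both $u$ and $v$ in certain configurations but at distance $3$ in others, so that toggling a single edge at $u$ or $v$ flips whether $h$ (and through it, the opposite satellite block) falls inside the 2-neighborhood. Once the kernel's distance structure toggles consistently, multiplying the satellite blocks by $\lceil\alpha\rceil+1$ scales the argument uniformly in $\alpha$ and yields the claimed improving response cycle, establishing the failure of the finite improvement property.
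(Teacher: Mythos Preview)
Your proposal outlines a reasonable strategy but is not a proof: no concrete instance is built, and the four transitions are described only schematically. More importantly, the sketch has an internal tension. You call $u$'s pre-swap target ``neutral,'' suggesting it contributes little or nothing to $N_2^{G}(u)$; yet in $\s_2\to\s_3$ you claim $u$ regains at least $\lceil\alpha\rceil+1$ nodes by swapping \emph{back} to that same neutral target. These cannot both hold. Likewise, for the delete step $\s_1\to\s_2$ to be improving when $\alpha<1$, the 2-neighborhood loss must be exactly $0$, i.e.\ the deleted edge must be fully redundant for $v$ in that context; you do not explain how $u$'s single swap achieves this while simultaneously setting up the later re-add to recover more than $\alpha$ nodes. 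You flag this calibration as ``the main obstacle'' and then leave it unresolved.

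The paper's construction avoids all of this with one idea you are missing: make \emph{every} step a swap. Since a swap leaves $|S_u|$ unchanged, the cost difference reduces to $-\Delta|N_2^{G}(u)|$, so a step is improving iff the 2-neighborhood strictly grows---completely independently of $\alpha$. The paper then builds a single fixed-size instance (no $\lceil\alpha\rceil$ padding at all) with three active agents $a_1,a_2,b$ and constant-size auxiliary sets, and exhibits a six-step swap cycle in which each mover's 2-neighborhood increases by exactly one node. This gives one construction valid for every $\alpha>0$, whereas your add/delete scheme, even if it could be completed, would yield an $\alpha$-dependent instance and require separate case analysis for small versus large $\alpha$.
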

\noindent In Fig.~\ref{fig:IRC-Example} we illustrate an improving response cycle that holds for any edge price $\alpha$, as only edge-swaps are performed. 
Dashed edges represent the edges from $a_1$, $a_2$, and $b$, which change at each step. 
Full details of the construction and all six strategy changes appear in the appendix.

\begin{figure}[t]
	\centering
	\subfloat[$\s_0$]{
		\resizebox{.26\linewidth}{!}{
\includegraphics[]{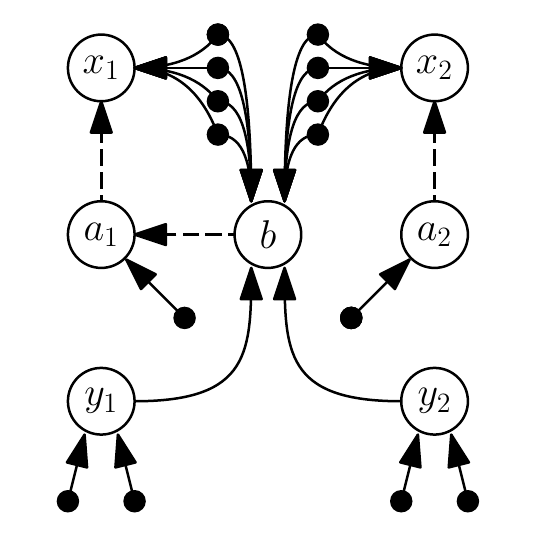}
		}
	}
	\hfil
	\subfloat[$\s_1$]{
		\resizebox{.26\linewidth}{!}{
			\includegraphics[]{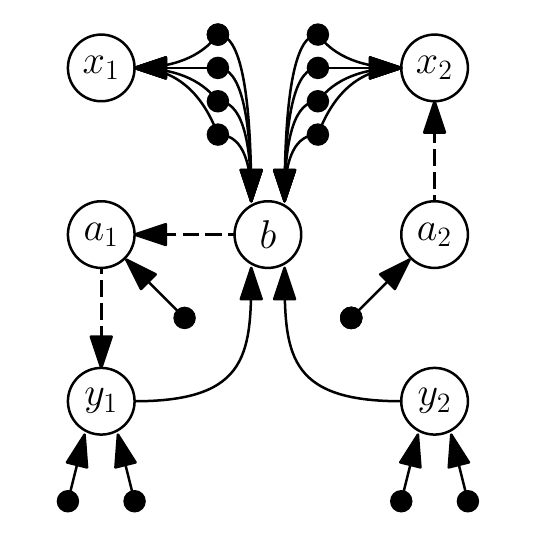}
		}
	}
	\hfil
	\subfloat[$\s_2$]{
		\resizebox{.26\linewidth}{!}{
            \includegraphics[]{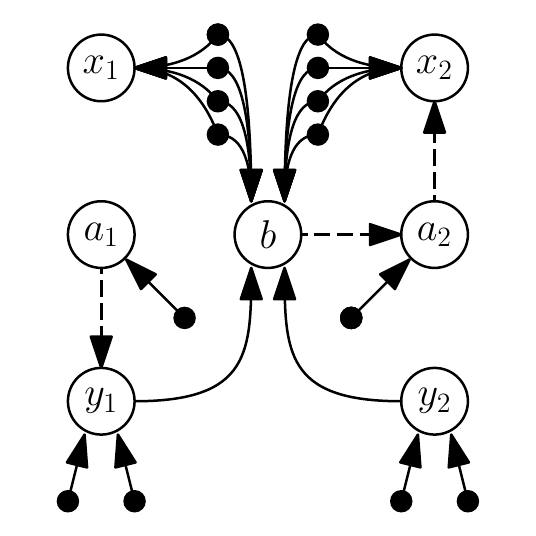}
		}
	}
	\hfil
	\subfloat[$\s_3$]{
		\resizebox{.26\linewidth}{!}{
			\includegraphics[]{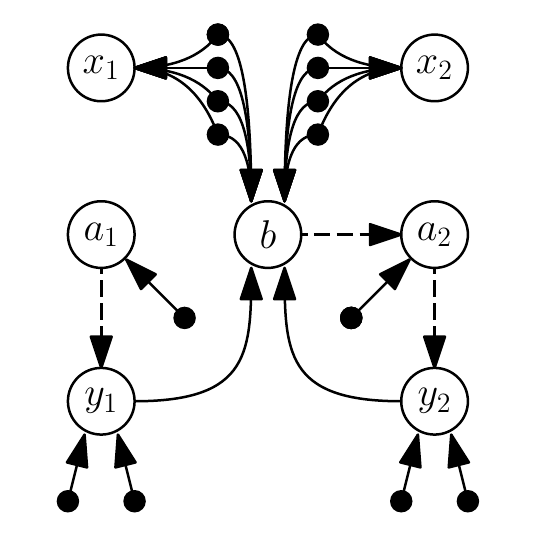}
		}
	}\hfil
	\subfloat[$\s_4$]{
		\resizebox{.26\linewidth}{!}{
			\includegraphics[]{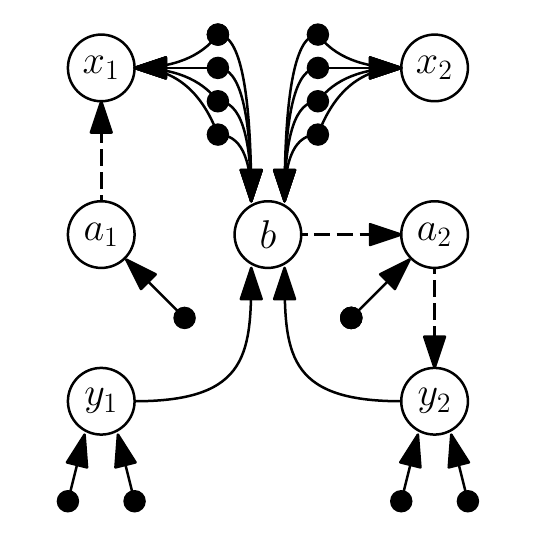}
		}
	}\hfil
	\subfloat[$\s_5$]{
		\resizebox{.26\linewidth}{!}{
			\includegraphics[]{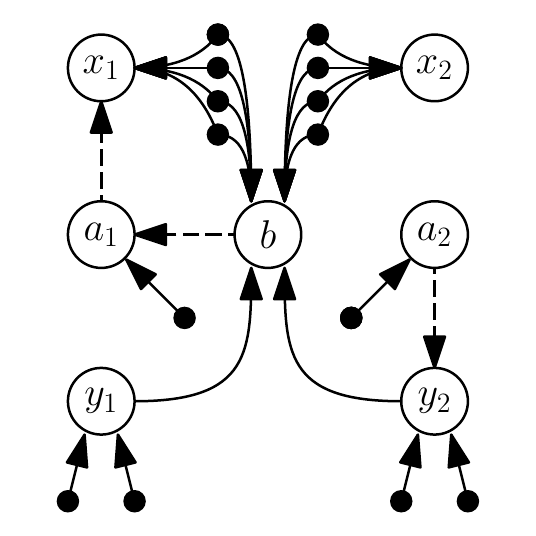}
		}
	}
	\caption{Improving response cycle for the 2-NMG for any edge price $\alpha$.}
    \label{fig:IRC-Example}
\end{figure}

Next, we prove that computing a best response in the 2-NMG is NP-complete, using a reduction from the well-known Dominating Set (DS) problem \cite{garey1979computers}. 
To show NP-hardness, we reframe the optimization problem of computing a best response as a corresponding decision problem, then reduce DS to it. Note that the reduction in \cite{ABDMS16celebrity} does not apply here, as it assumes weights of 2, while our setting implies weights of 1. This, along with the relaxation to $\alpha>0$, makes a new approach necessary. 
Therefore, we offer a reduction where we carefully modify the given graph by creating multiple copies of all nodes. Complete details of the proof and for NP-membership are provided in the appendix.

\begin{restatable}{theorem}{hardness}
\label{thm:best-Response-NPH}
Computing a best response strategy for a given agent $v$ is NP-complete for the 2-NMG with edge cost $\alpha>0$.
\end{restatable}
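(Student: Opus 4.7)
The plan is to formulate the decision version---``does $v$ have a strategy $S_v$ with $\cost_v((S_v,\s_{-v}))\le C$?''---and establish NP-completeness for this. Membership in NP is immediate: a strategy $S_v$ is a polynomial certificate, since we can construct $G((S_v,\s_{-v}))$ and run a depth-$2$ BFS from $v$ to evaluate $\cost_v((S_v,\s_{-v}))$ and compare it to $C$ in polynomial time.

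For NP-hardness, I would reduce from Dominating Set. Given a DS instance $(G=(V,E_G),k)$ with $|V|=m$, construct a 2-NMG instance whose underlying graph $H$ consists of a copy of $G$, the designated agent $v$ with no incoming edges under $\s_{-v}$, and, for every $u\in V$, a set $C_u$ of $p:=\lceil\alpha\rceil$ auxiliary \emph{copy} nodes, each of which is adjacent in $H$ to $u$ and to every vertex of $N_G(u)$. Which fixed agent actually bought these structural edges under $\s_{-v}$ is irrelevant to $v$'s cost. Set the target cost to $C:=\alpha k$. The crux of the reduction is the observation that, for $S_v\subseteq V$, both $u$ and each copy $u_i\in C_u$ lie in $N_2^H(v)$ iff $S_v\cap(\{u\}\cup N_G(u))\neq\emptyset$, i.e., iff $S_v$ \emph{dominates} $u$ in $G$. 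Letting $D(S_v)\subseteq V$ denote the set of vertices dominated by $S_v$, a short count then gives
\[
  \cost_v((S_v,\s_{-v})) \;=\; \alpha|S_v| \;+\; (p+1)\bigl(m - |D(S_v)|\bigr).
\]
Since $p+1>\alpha$ (because $p=\lceil\alpha\rceil$), adding any non-dominated vertex to $S_v$ strictly lowers this cost, so every cost-minimizing $S_v\subseteq V$ is a dominating set of $G$ with cost exactly $\alpha|S_v|$. Consequently, $G$ admits a dominating set of size at most $k$ iff $v$ has a strategy of cost at most $\alpha k$.

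The most delicate step, and the one I would verify most carefully, is the implicit restriction $S_v\subseteq V$. Since $N_H(u_i)=\{u\}\cup N_G(u)\subsetneq N_H(u)$ (the latter additionally contains the $p-1$ sibling copies), swapping any $u_i\in S_v$ for $u$ weakly enlarges $N_2^H(v)$ at unchanged $|S_v|$, and if $u$ is already in $S_v$, dropping $u_i$ strictly decreases the cost without reducing $N_2^H(v)$ (as $u_i$ remains at distance $2$ from $v$ via $u$). Hence an optimal strategy uses only vertices of $V$. The reduction is polynomial in the input size under the standard convention that $\alpha$ is encoded so that $\lceil\alpha\rceil$ is polynomially bounded, completing the NP-completeness argument.
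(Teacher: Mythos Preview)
Your reduction from Dominating Set is correct and close in spirit to the paper's, but the gadget differs. The paper attaches $\lfloor\alpha\rfloor$ \emph{leaf} copies to each original vertex $u$ (adjacent to $u$ only), so even when the strategy is a dominating set of size $k$, the copies of the $n-k$ non-chosen vertices remain at distance~$3$; this forces the more involved threshold $\alpha k + (n-k)\lfloor\alpha\rfloor$. You instead attach $\lceil\alpha\rceil$ copies to the entire \emph{closed neighborhood} $\{u\}\cup N_G(u)$, so that dominating $u$ automatically pulls all of $u$'s copies into the $2$-neighborhood as well, giving the cleaner threshold $\alpha k$ and the tidy cost formula $\alpha|S_v| + (\lceil\alpha\rceil+1)(m-|D(S_v)|)$. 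Both constructions rely on the same ``swap a copy for its original'' argument to restrict optimal strategies to original vertices, and both need $\lceil\alpha\rceil$ to be polynomially bounded for the reduction size to be polynomial (the paper does not flag this, but its construction has the same dependence).

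One small slip: the containment $N_H(u_i)\subsetneq N_H(u)$ is false as written, since $u\in N_H(u_i)\setminus N_H(u)$. What your swap argument actually needs, and what does hold, is the containment of \emph{closed} neighborhoods $\{u_i\}\cup N_H(u_i)\subseteq\{u\}\cup N_H(u)$ (because $u_i\in C_u\subseteq N_H(u)$); this is what determines the contribution to $N_2^H(v)$ when $v$ buys the edge.
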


\section{Price of Anarchy}

We now establish lower bounds for the (greedy) price of anarchy (g)PoA. The upper bound of $\frac{n}{\alpha}$ is derived from the work on celebrity games by {\`A}lvarez, Blesa, Duch, Messegu{\'e} and Serna~\cite{ABDMS16celebrity}. Their lower bound of $\bigOmega\left(\frac{n}{\alpha}\right)$ 
does not transfer to our setting, due to the need of different node weights. To determine the lower bounds, we note that in the 2-NMG, the star graph is optimal for $\alpha\leq n$, the empty graph is optimal for $\alpha\geq n$, and both are optimal when $\alpha = n$. This follows immediately from Proposition~2 in~\cite{ABDMS16celebrity}.

To analyze the (greedy) price of anarchy, we focus on connected networks, as stated earlier, given that the only non-connected equilibrium is the empty graph for any $\alpha$. Its social cost equals $n(n-1)$ independent of $\alpha$, yielding tight (g)PoA bounds of $\max\{\frac{n}{\alpha},1\}$ when compared to the respective socially optimal states.
For Nash equilibria, we provide a lower bound of $\log\left(\frac{n}{\alpha}\right)$ for any $\alpha$, whereas for greedy equilibria, we establish a tight bound of $\Theta(n)$ for $1 \leq \alpha \leq 2$.
\begin{theorem}\label{thm:PoA-NE-lowerbound-diam-2}
    For the 2-NMG with arbitrary $\alpha$, there are connected Nash equilibria with diameter 2 and cost of at least $\log\left(\frac{n}{\alpha}\right)$ times the cost of an optimal network. This results in a PoA of $\Omega\left(\log\left(\frac{n}{\alpha}\right)\right)$.
\end{theorem}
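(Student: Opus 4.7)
The plan is to reuse directly the stable network $G(\mathbf{s},k)$ constructed in the proof of Theorem~\ref{thm:regular-NE}, since it already comes with three convenient properties: it is Nash stable for every $\alpha$, it has diameter exactly $2$, and it has $\Omega(n\log(n/\alpha))$ edges. The rest is then essentially bookkeeping on the social cost of this NE versus the social optimum.

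First, I would note that whenever a created network has diameter at most $2$, every agent $u$ satisfies $N_2^{G(\s)}(u) = V$, so $n - |N_2^{G(\s)}(u)| = 0$ in $\cost_u(\s)$. Hence the social cost of such a strategy profile reduces to the edge-cost term only, namely $\cost(\s) = \alpha\sum_{u\in V}|S_u| = \alpha\,|E(\s)|$, using that each edge is bought by exactly one endpoint in the construction of Theorem~\ref{thm:regular-NE}. Applied to $G(\mathbf{s},k)$, this gives
\[
\cost(\mathbf{s}) \;=\; \alpha\,|E(\mathbf{s},k)| \;\in\; \Omega\!\left(\alpha\, n\log\!\left(\tfrac{n}{\alpha}\right)\right),
\]
where the last step invokes the edge-count bound of Theorem~\ref{thm:regular-NE}.

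Next I would compute the cost of a socially optimal strategy profile $\mathbf{s}^*_{n,\alpha}$. As noted just before the theorem (following Proposition~2 of~\cite{ABDMS16celebrity}), for $\alpha\leq n$ the star is socially optimal: the center buys $n-1$ edges and the resulting graph again has diameter $2$, so the total social cost is $\alpha(n-1) \in \Theta(\alpha n)$. For $\alpha\geq n$ the bound $\log(n/\alpha)\leq 0$ makes the statement vacuous (or we can appeal to the empty graph with cost $n(n-1)$), so only the regime $\alpha<n$ needs the above calculation. Taking the ratio yields
\[
\frac{\cost(\mathbf{s})}{\cost(\mathbf{s}^*_{n,\alpha})} \;\in\; \Omega\!\left(\frac{\alpha n \log(n/\alpha)}{\alpha n}\right) \;=\; \Omega\!\left(\log\!\left(\tfrac{n}{\alpha}\right)\right),
\]
which is exactly the claimed bound.

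There is really no serious obstacle here: the heavy lifting, namely producing a diameter-$2$ NE with a superlinear number of edges, has already been done in Theorem~\ref{thm:regular-NE}. The only point that warrants a careful word in the write-up is why the sum $\sum_u |S_u|$ equals $|E(\mathbf{s},k)|$ (no edge is bought twice in that construction) and why the optimum is $\Theta(\alpha n)$ rather than something smaller, which is handled by citing the star-optimality statement from~\cite{ABDMS16celebrity}. The regime $\alpha \geq n$, where $\log(n/\alpha)$ becomes non-positive, should be mentioned briefly so the theorem's "arbitrary $\alpha$" is formally justified.
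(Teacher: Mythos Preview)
Your proposal is correct and follows essentially the same approach as the paper: both invoke the diameter-$2$ Nash equilibrium from Theorem~\ref{thm:regular-NE}, reduce its social cost to $\alpha|E|$, compare against the star optimum $\alpha(n-1)$, and conclude via the edge-count bound $|E|\in\Omega(n\log(n/\alpha))$. The paper carries out the ratio computation explicitly in terms of $k$ and the binomial coefficients before invoking $k\in\Omega(\log(n/\alpha))$, whereas you pass directly through the asymptotic edge count, but this is only a cosmetic difference.
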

\begin{proof}
    Theorem~\ref{thm:regular-NE} shows the existence of a Nash stable network for any $\alpha$, which we will use to derive a lower bound of $\bigOmega\left(\log\left(\frac{n}{\alpha}\right)\right)$ for the PoA. Since the diameter is 2 and therefore all agents' 2-neighborhoods are complete, the social cost depends solely on the number of edges purchased.
    For a given $\alpha$ and $k$, the social cost is
\begin{equation*}
    cost(\s)=\alpha\cdot |E(k)| = \alpha\cdot\left( \left\lceil \alpha \right\rceil \cdot \binom{2k+1}{k+1} \cdot (k+1) + (2k+1)\cdot k\right)\,,
\end{equation*}
and, as earlier shown, the number of nodes is
\begin{equation*}
n=|V(k)|=\left\lceil \alpha \right\rceil \cdot \binom{2k+1}{k+1} + 2k+1\,.
\end{equation*}
Comparing the described network $G(\s,k)$ with the optimal network (a star with $n$ nodes), we obtain a lower bound on the PoA as follows:

\begin{align*}
    \text{PoA}(k,\alpha) &\geq \cfrac{cost(\s)}{cost(\s_{OPT})} \geq \cfrac{\alpha|E|}{\alpha(n-1)} = \cfrac{ \left\lceil \alpha \right\rceil \cdot \binom{2k+1}{k+1} \cdot (k+1) + (2k+1)\cdot k}{\left\lceil \alpha \right\rceil \cdot \binom{2k+1}{k+1} + 2k} \\
    &\geq k \left(\frac{\lceil \alpha \rceil \cdot \binom{2k+1}{k+1} + 2k + 1}{\lceil \alpha \rceil \cdot \binom{2k+1}{k+1} + 2k}\right) \geq k\,.
\end{align*}
From Equation~(\ref{eq:regularGraphOutDegree}) of Theorem~\ref{thm:regular-NE}, we know $k\in\Omega\left(\log\left(\frac{n}{\alpha}\right)\right)$, thus the PoA is in the order of $\bigOmega\left(\log\left(\frac{n}{\alpha}\right)\right)$ as well. 
\end{proof}
\noindent Note that, unlike the approach in~\cite{ABDMS16celebrity}, our construction does not involve inconsistent tie-breaking.
As Nash equilibria are greedy equilibria, this lower bound also holds for the gPoA.
The upper bound of $\frac{n}{\alpha}$ shown in Lemma 2 of {\`A}lvarez, Blesa, Duch, Messegu{\'e} and Serna~\cite{ABDMS16celebrity} also holds for greedy equilibria, as it only considers the potential benefit provided by each single edge.
We now enhance the lower bound presented in Theorem~\ref{thm:PoA-NE-lowerbound-diam-2} for $1\leq \alpha\leq 2$, improving it to a linear gPoA lower bound of $\bigOmega(n)$, which asymptotically matches the upper bound of~$\frac{n}{\alpha}$.

\begin{theorem}\label{thm:PoA-GE-lowerbound-diam-3}
     For the 2-NMG with $1\leq \alpha \leq 2$, connected greedy equilibria exist with a diameter of 3 and their cost is in the order of $\Theta(n)$ times the cost of an optimal network. 
\end{theorem}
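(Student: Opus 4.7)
The plan is to reuse the greedy stable network $G(\s,k)$ from Theorem~\ref{thm:GE-diam-3-quadratic-edges}, which is known to be greedy stable for $1 \leq \alpha \leq 2$, connected, and to have diameter $3$, and then to compare its social cost against the social optimum. Since the construction has $n = 3k$ nodes and $k(k+1) \in \Theta(n^2)$ edges, and the optimum (a star) has social cost $(n-1)\alpha \in \Theta(n)$ for this range of $\alpha$ (the star minimizes social cost for $\alpha \leq n$ by Proposition~2 in~\cite{ABDMS16celebrity}), I expect the ratio to be linear in $n$.

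First, I would compute the social cost of $G(\s,k)$ by looking at each of the three node classes separately. An agent $m_i$ buys two edges and, as a short case analysis shows, misses exactly the other $k-1$ middle nodes from her 2-neighborhood, because her only neighbors are $l_i$ and $r_i$, whose neighborhoods touch only $L \cup R$. The left agents $l_i$ each buy $k-1$ edges and, via the richly connected $R$-nodes, already see every node within two hops; the right agents $r_i$ buy no edges and analogously have a complete 2-neighborhood. Summing this up yields a social cost of order $\Theta(k^2) = \Theta(n^2)$ for $\alpha \in [1,2]$.

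Next, I would compare this to $\cost(\s^*_{n,\alpha}) = (n-1)\alpha \in \Theta(n)$, obtained from the star. The ratio then collapses to $\Theta(n)$, giving the lower bound $\mathrm{gPoA}(n,\alpha) \in \Omega(n)$ on the claimed order. For the matching upper bound, I would invoke the $\bigO(n/\alpha)$ bound of~\cite{ABDMS16celebrity}, which holds for greedy equilibria as well since it only uses the marginal benefit of a single edge; for $\alpha \geq 1$ this gives $\bigO(n)$.

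I do not expect a serious obstacle: the stability is already established in Theorem~\ref{thm:GE-diam-3-quadratic-edges}, and the diameter claim is immediate from the same construction (for example, $m_1$ and $m_2$ realize distance $3$). The only genuine bookkeeping is the careful count of which nodes lie outside each agent's 2-neighborhood, and this reduces to checking that middle nodes $m_i$ see each other only at distance $3$ while all left/right nodes have complete 2-neighborhoods; both facts follow directly from the edge set. Once these per-agent costs are in hand, the $\Theta(n)$ ratio is an elementary simplification.
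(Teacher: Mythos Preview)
Your proposal is correct and follows essentially the same approach as the paper: both use the $G(\s,k)$ construction from Theorem~\ref{thm:GE-diam-3-quadratic-edges}, observe that its $\Theta(n^2)$ edges (plus the $\Theta(n^2)$ missing-pair contributions from the middle column) yield social cost $\Theta(n^2)$, and compare against the star optimum of cost $\Theta(n)$. Your per-class breakdown of the 2-neighborhoods is slightly more explicit than the paper's, and your inclusion of the $\bigO(n/\alpha)$ upper bound to close the $\Theta(n)$ claim is a reasonable addition, but the core argument is identical.
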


\begin{proof}
    Theorem~\ref{thm:GE-diam-3-quadratic-edges} shows that for $1\leq \alpha \leq 2$ and $k\in \Na$ with $k\geq3$, greedy stable networks exist with $3k$ nodes and $k(k-1)+2k = k^2+k$ edges. In terms of $n=|V|$, we obtain
    \[|V|=n=3k\;\text{ and }\;|E|=k^2 + k =\left(\frac{n}{3}\right)^2+\frac{n}{3}\,.\]
    The number of edges alone results in a gPoA of order $\Theta(n)$ when compared to the $n-1$ edges in the star social optimum. 
    In fact, since the networks of Theorem~\ref{thm:GE-diam-3-quadratic-edges} also contain $|M|=\frac{n}{3}$ nodes in the middle column, each of which is at distance 3 from all other nodes in $M$, the actual social cost doubles. However, this does not affect the asymptotic social cost.
\end{proof}

\section{Future Work \& Conclusion}
We considered the 2-neighborhood maximization game, a model that naturally aligns with social networks, where an agent benefits typically stem from her direct neighbors and their connections, rather than from more distant or unrelated individuals. In this context, we focused on the structure of both Nash equilibria (NE) and greedy equilibria (GE), with particular emphasis on the quality of these equilibria.
While we established constant upper bounds on the diameter, independent of $\alpha$ and $n$, and present Nash and greedy equilibria of different qualities, the key open question remains whether the PoA of NE is indeed asymptotically smaller than the PoA of GE.

Also, in line with the celebrity games in~\cite{ABDMS16celebrity}, we believe the extension to \mbox{general $\beta>2$} to be an interesting and promising direction for future research. By carefully reviewing Lemma~\ref{lem:NE-at-most-diam-3} and Observation~\ref{obs:alpha<2_diam<=2} for NE resp. Lemma~\ref{lem:no-GE-alpha>=5} and Theorem \ref{thm:GE-diam-3} for GE, it can be shown that our diameter bounds extend to \mbox{general $\beta\geq2$}, following the same proof techniques. This yields the following:

\begin{theorem}
    \label{thm:allgemeineBeta_diamBounds}
    For $\beta\geq2$, all connected Nash or greedy stable networks $G(\s)$ fulfill \begin{equation*}
     \mathrm{diam}(G(\s)) \leq 
    \begin{cases}
    \beta & \text{for } \alpha<1\,, \\
    \beta +1 & \text{for } \alpha=1\,, \\
    2\beta-1 & \text{for } \alpha>1 \text{ and $\s$ is NE\,,} \\
    2\beta & \text{for } \alpha>1 \text{ and $\s$ is GE\,.} 
    \end{cases}
    \end{equation*}
\end{theorem}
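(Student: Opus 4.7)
The plan is to prove each of the four cases separately, each generalizing the corresponding $\beta=2$ proof in a straightforward manner. In every case I argue by contradiction: assume the diameter exceeds the stated bound by at least one, and exhibit a strategy change that strictly improves some agent's cost.

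For $\alpha<1$ I generalize Observation~\ref{obs:alpha<2_diam<=2}: if $d_G(u,v)>\beta$, then $u$ adds the single edge $\{u,v\}$, so $N_\beta(u)$ grows by at least one while the cost grows by $\alpha<1$. For $\alpha=1$ I generalize the $\alpha=1$ step of Lemma~\ref{lem:NE-at-most-diam-3}: if $d_G(u,w)\geq\beta+2$ along a shortest path $u=v_0,v_1,\dots,v_{\beta+2}=w$, then $u$ adds the edge $\{u,w\}$; after this both $w$ (distance~$1$) and $v_{\beta+1}$ (distance~$2\leq\beta$, using $\beta\geq2$) enter $N_\beta(u)$, so the gain is at least~$2$ against cost~$1$. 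Both are greedy single-edge additions, so the argument applies to NE and GE simultaneously.

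For $\alpha>1$ with $\s$ a NE I extend the copy-strategy argument of Lemma~\ref{lem:NE-at-most-diam-3}. Assume $d_G(u,w)=2\beta$. I first argue that neither $u$ nor $w$ has an incoming edge: if some $v$ bought an edge to $u$, its benefit is at most $|N_{\beta-1}^G(u)|+1$ and at least~$\alpha$; but $w$ at distance $2\beta$ from $u$ has none of $N_{\beta-1}^G(u)\cup\{u\}$ in her current $\beta$-neighborhood, so $w$ buying the same edge yields at least the same benefit, forcing an improving response (with the boundary case handled exactly as in the $\beta=2$ proof). WLOG $\cost_u(\s)\geq\cost_w(\s)$, and $u$ copies $w$'s strategy. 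Because $w$ has no incoming edges, every path of length at most $\beta$ leaving $w$ in $G$ begins with an edge in $S_w$, which $u$ now also possesses, so $N_\beta^{G'}(u)\supseteq N_\beta^G(w)$; moreover $u$ reaches $w$ in two hops via any $y\in S_w$, yielding $|N_\beta^{G'}(u)|\geq|N_\beta^G(w)|+1$ and $\cost_u(\s')\leq\cost_w(\s)-1<\cost_u(\s)$, contradicting NE.

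The main obstacle is the last case, $\alpha>1$ with $\s$ a GE, where I extend Lemma~\ref{lem:no-GE-alpha>=5} to show diameter at most $2\beta$. The copy move is no longer greedy, so I must find a single add, delete, or swap that strictly improves for arbitrarily large $\alpha$. The plan is to assume $d_G(u,w)=2\beta+1$ along a shortest path $u=v_0,\dots,v_{2\beta+1}=w$ and identify a vertex $v_i$, conjecturally near the middle and mirroring the index used in the $\beta=2$ proof, whose greedy edge-swap along the path strictly improves her cost. The technical heart is the bookkeeping comparison: the nodes added to $N_\beta(v_i)$ via the new endpoint must strictly outnumber those dropped by deleting the old edge, uniformly in $\alpha$. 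Since a plain edge addition cannot offset a large $\alpha$ on its own, this step relies on structural properties of the shortest path — in particular, that the removed and newly bought edges touch essentially disjoint portions of $v_i$'s $\beta$-neighborhood — rather than on numerical slack in~$\alpha$; confirming this disjointness through the shortest-path geometry is what I expect to require the most care.
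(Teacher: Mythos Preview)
Your first three cases are correct and match the paper's approach: the $\alpha<1$ case generalizes Observation~\ref{obs:alpha<2_diam<=2}, the $\alpha=1$ case and the $\alpha>1$ NE case both generalize Lemma~\ref{lem:NE-at-most-diam-3} in the natural way (the ``$+1$'' in the copy argument comes from $u$ itself, not from $w$, but your conclusion is right).

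The fourth case, however, has a real gap. Your premise that ``a plain edge addition cannot offset a large $\alpha$ on its own'' is false, and it leads you away from the correct argument and toward an underspecified swap by some middle vertex $v_i$. That swap plan is problematic: $v_i$ need not own either of the adjacent path edges, and the promised disjointness bookkeeping is never carried out. The paper's proof of Lemma~\ref{lem:no-GE-alpha>=5} (which is exactly the lemma cited for the GE bound) does use a single edge \emph{addition}, and it generalizes directly. Assume $d_G(v_1,v_{2\beta+2})=2\beta+1$ and look at the first edge $e=\{v_1,v_2\}$ on a shortest path. Let $x$ be its buyer and $y$ the other endpoint. Since deleting $e$ is a greedy move and $\s$ is a GE, the benefit of $e$ to $x$ is at least $\lceil\alpha\rceil$; this benefit lies in $N_{\beta-1}^G(y)\setminus\{x\}$, so $|N_{\beta-1}^G(y)|\geq \lceil\alpha\rceil+1$. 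Because $y\in\{v_1,v_2\}$ we have $d_G(y,v_{2\beta+2})\geq 2\beta$, hence every node of $N_{\beta-1}^G(y)$ is at distance at least $\beta+1$ from $v_{2\beta+2}$. Therefore $v_{2\beta+2}$ adding the single edge to $y$ gains all of $N_{\beta-1}^G(y)$, i.e.\ at least $\lceil\alpha\rceil+1>\alpha$ new nodes---a greedy improving response. Note this is precisely the mechanism you already used in your NE ``no incoming edge'' step, now applied to an edge on the path (whose buyer is one of $v_1,v_2$) rather than to a hypothetical incoming edge at the endpoint; the extra unit of diameter ($2\beta+1$ versus $2\beta$) is exactly what is needed to keep $N_{\beta-1}^G(y)$ disjoint from $N_\beta^G(v_{2\beta+2})$ when $y=v_2$.
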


\noindent With this, we refine the bound of $2\beta+1$ given by {\`A}lvarez, Blesa, Duch, Messegu{\'e} and Serna~\cite{ABDMS16celebrity}, improving it to $2\beta -1$, which is tight for $\beta=2$.

Furthermore, we find it worthwhile to investigate an analogue to the well-known tree conjecture \cite{fabrikantNetwork2003}. Specifically, what is the largest value of $\alpha$, parameterized by the number of nodes~$n$, for which an equilibrium exists that is neither a star nor the empty graph?

\paragraph{Acknowledgements.}This work was supported by the Federal Ministry of Education and Research (BMBF), grant number 03SF0687A, and by the DFG project GEONET under grant DFG 442003138.

\printbibliography

\newpage
\section*{Appendix}

\Zweierketten*
\begin{proof}
    \ref{obs:eine_zweierkette}: Since $\mathrm{diam}(G(\mathbf{s}))=2$, no agent has any incentive to establish a new connection. So the only options that can be considered are either deleting edges and/or swapping edges. It is clear that $v_4$ would not deviate from her strategy \mbox{$S_{v_4}=\emptyset$}. None of $v_1,v_2$ or $v_3$ has an incentive to delete their edge, as they would lose two, and in the case of $v_3$, even 3 nodes from their 2-neighborhood, but would only save $\alpha\leq2$. Finally, $v_5$ and $v_6$ have the same strategy. If $v_5$ deletes one edge and connects only to one other agent, then, regardless of the agent she chooses, she loses $2$ nodes from her 2-neighborhood, but saves only $\alpha\leq2$. The same holds for $v_6$ and even any additional copy of them.
    
    \ref{obs:zwei_zweierketten}:
    First, we observe that $v_1,v_4,v_5,$ and $v_8$ have all agents in their 2-neighborhood, while $\{v_2,v_3\}$ have all agents except $\{v_6,v_7\}$ and vice versa.
    Due to symmetries in the instance, there are three groups of agents with identical behavior: agents $v_1,v_4,v_5,$ and $v_8$; agents $v_2$ and $v_6$; and agents $v_3$ and $v_7$.
    Regardless of the group, no agent has an incentive to establish an additional connection, as none is missing 3 or more agents.
    Also, no agent wants to remove any edges either, as it would reduce their 2-neighborhood by at least $2$ agents. It remains to show that swaps aren't beneficial either and we start with the agents $v_2$ and $v_6$. To improve their strategy they need to connect in a way that increases their 2-neighborhood by $3$ or more agents. Since deletion and addition of an edge are not beneficial, the only other option would be to swap their one edge. Since there is no node that has at least three nodes of $v_3,v_4,v_6,$ and $v_7$ as direct neighbors, no beneficial swap is available and $v_2$ is stable. Through symmetry, the argument works analogous for $v_6$.

    Since the agents $v_3$ and $v_7$ do not establish connections themselves and are already stable, the only possible swaps are those initiated by agents $v_1,v_4,v_5,$ and $v_8$. As their 2-neighborhoods are already complete, it remains to show that the combination of deleting one edge and swapping the other is no improving response either.
    We will analyze agent $v_1$ as a representative of this group of agents. Because of the edge $(v_5,v_1)$ agents $v_4,v_5$ and $v_6$ are already in her 2-neighborhood. As there is no way to establish an edge in a way to gain 3 or more of the remaining agents into her 2-neighborhood, there is no improving response.
    Thus, the presented instance is indeed Nash stable.
\end{proof}

\Cycles*

\noindent The following proofs for the 4-cycle and 5-cycle, resp. 4-cycle and 5-cycle with a leaf work in similar ways, so we combined them. The argumentation in parentheses refers to the respective 5-cycle.

\begin{proof}
\ref{obs:4cycle},\ref{obs:5cycle}:
    Every agent establishes one edge.
    Additional connections are not beneficial, as the diameter of $G(\mathbf{s})$ is 2. Removing the edge is also not favorable, as one would lose $1$ (resp. $2$) other agents from her neighborhood but only save $\alpha\leq 1$ (resp. $\alpha\leq 2$). 

    \ref{obs:4cycleleaf},\ref{obs:5cycleleaf}:
    As all nodes are missing at most 1 (resp. 2) other nodes in their 2-neighborhoods, adding additional edges is not a strict improvement. All agents except $v_4$ (resp. $v_5$) construct at most one edge, and because of the symmetry of the underlying cycle there are no agents where swapping to would increase the agent's 2-neighborhoods. 
    It remains to show that $v_4$ (resp. $v_5$), the only agent that establishes two edges, cannot improve her strategy. Since with both edges she exclusively gains at least one (resp. two) other agents into her 2-neighborhood, she has no incentive to deviate.
\end{proof}

\GEdiamthreequadraticedges*
\begin{proof}[Proof (continued)]
    To verify greedy stability, it suffices to ensure the following: each edge built by an agent connects them to at least $\lceil\alpha\rceil$ nodes, meaning deleting it is not beneficial; no agent can gain access to at least$\lceil\alpha\rceil$ nodes by building a new edge; and swapping an edge does not increase the agent's 2-neighborhood.
    As all triples $(l_i,m_i,r_i)$ in $G(\mathbf{s},k)$ are constructed with analogous edges, we only need to check stability for the nodes in one triple.
    \paragraph{Nodes in $L$:} Without loss of generality consider node $l_1$. She builds $k-1$ edges and has one incoming edge $(m_1,l_1)$. 
    Each of the $k-1$ edges $(l_1,r_j)$ with $2\leq j \leq k$ allows an exclusive connection to $r_j$ and $m_j$.
    Moreover, any of these edges $(l_1,r_j)$ allow two hop connections to all nodes in $L$ besides $l_j$.
    The incoming edge $(m_1,l_1)$ establishes a two hop connection from $l_1$ to $r_1$.
    Thus, $l_1$ has all other nodes in her 2-neighborhood. Adding or swapping a single edge is not beneficial. Neither is deleting a single edge $\{l_1,r_j\}$, as it would remove $2\geq\alpha$ nodes, i.e.\ $r_j$ and $m_j$, from the 2-neighborhood.
    
    \paragraph{Nodes in $M$:} The nodes in this set, such as $m_1$, build two edges and have no incoming edges. 
    The two edges are built to nodes in the same triple, i.e.\ $l_1$ and $r_1$. Over these edges, two hop connections to all nodes in $L$ and $R$ are ensured, but all other nodes in $M$ remain at distance 3.
    Due to the construction of $G(\mathbf{s},k)$, adding a single edge is not beneficial for $\alpha \geq 1$, as it would only add one additional node from $M$ to her 2-neighborhood since no node is a direct neighbor of more than one node of $M$.
    Deleting a single edge is also not beneficial, as it would remove two-hop connections to at least $k \geq 3 > \alpha$ nodes, e.g.\ $r_1$ and all nodes in $L$ but $l_1$, if edge $\{m_1,r_1\}$ was deleted.
    Swapping a single edge is not beneficial either, as there is no other node that would allow $m_1$ to gain more than $k$ new two hop connections: No node has more than $k$ direct neighbors, and following the graph construction, the other edge (the one that is not swapped) would cover at least one of them already. 
    Thus, all nodes in $M$ are playing a greedy best response.
    \paragraph{Nodes in $R$:} Nodes in this set build no edges and are connected to all other nodes through incoming edges, thus they are already playing their best greedy response.
    
    Thus, the graph $G(\mathbf{s},k)$ is greedy stable for $k>2\geq \alpha \geq 1$.
\end{proof}

\noLongPaths*
\begin{proof}
        We prove the theorem by showing that there cannot be 3 (Lemma~\ref{lem:keine3deg2Knoten}) or more (Lemma~\ref{keine_k+deg2Ketten}) consecutive nodes of a degree of 2, except for the few cases described in the theorem.
\end{proof}

\keinezweidegzweiketten*
\begin{proof}
    Let $u,v_1,v_2,w\in V$ and $\{u,v_1\},\{v_1,v_2\},\{v_2,w\} \in E(\mathbf{s})$, i.e.\ the path \mbox{$u$---$v_1$---$v_2$---$w$} be a subgraph of $G(\mathbf{s})$, with $\mathrm{deg}(v_1),\mathrm{deg}(v_2)=2$, \mbox{$\mathrm{deg}(u)\neq 2$} and \mbox{$\mathrm{deg}(w)\neq2$}. Analogue, \mbox{$u'$---$v_1'$---$v_2'$---$w'$} is a similar subgraph. Note that the existence of the edges $\{v_1,v_2\}$ resp. $\{v_1',v_2'\}$ means that $u$ and $w'$ resp. $u'$ and $w'$ have to be distinct nodes.
    We prove the statement through a case analysis for different ranges of $\alpha$, considering edge directions only when necessary.
    
    For $\alpha>2$, no path of two consecutive nodes of a degree of 2 is stable, as the edge $\{v_1,v_2\}$ cannot be beneficial to its buyer.
    
    For $\alpha<2$, we distinguish the following cases:
    \begin{itemize}
    \item \emph{Case 1 $(u \neq u')$:} Due to the construction, we directly see $v_1',v_2' \notin N_2^G(v_1)$ and therefore agent $v_1$ has an improving response by connecting to either $v_1'$ or $v_2'$. (This works analogue if $w\neq w'$.)
            
    \item \emph{Case 2 $(u = u' \wedge w = w')$:} W.l.o.g.\ assume $v_2\in S_{v_1}$. Since $v_2' \notin N_2^G(v_1)$, agent $v_1$ can improve by swapping her edge $(v_1,v_2)$ to $(v_1,w)$, thereby enlarging her $2$-neighborhood by at least $v_2'$.   
    \end{itemize}
    
    For $\alpha=2$, we observe that $u$ and $w$ cannot be directly connected, as $\{v_1,v_2\}\in E(\s)$ (analogue for $u',w'$). 
    Moreover, the edges $\{u,u'\}$ resp. $\{u,w'\}$ need to exist as otherwise $v_1$ would have an incentive to connect with $v_1'$ resp. $v_2'$. Analogously, $v_2$ would have an incentive to connect with $v_1'$ resp. $v_2'$ if $\{w,u'\}\notin E(\s)$ resp. $\{w,w'\}\notin E(\s)$.
    Next, we need to analyze whether there exist other nodes that are connected to $u,w,u'$ or $w'$. For this, we introduce the notation $\Tilde{N_1}(u):=\{v\in V\setminus{\{u,v_1,v_2,w,u',v_1',v_2',w'\}} \,|\, d_G(u,v)=1\}$ representing the direct neighbors of $u$ that are not among the eight previously defined vertices.
    
    Because of symmetry, we can assume w.l.o.g.\ that $v_2\in S_{v_1}$ and $v_2'\in S_{v_1'}$.
    The first statement directly implies that $w$ has no direct neighbor in $\Tilde{N_1}(w)$ that is not part of the 2-neighborhood of $v_1$. Otherwise $v_1$ would have an incentive to swap her edge from $v_2$ to $w$.     
    In particular it means, that every node $v\in \Tilde{N_1}(w)$ needs to be a direct neighbor of $u$, i.e.\ $\Tilde{N_1}(w)\subseteq \Tilde{N_1}(u)$.
    With the same reasoning it follows from $v_2'\in S_{v_1'}$ that $\Tilde{N_1}(w')\subseteq \Tilde{N_1}(u')$.
    Moreover, we can derive $\Tilde{N_1}(u)\subseteq \Tilde{N_1}(u')$ as otherwise agent $v_1'$ would have an incentive to swap her edge $(v_1',v_2')$ to $(v_1',u)$. Analogously, we get $\Tilde{N_1}(u')\subseteq \Tilde{N_1}(u)$ as otherwise $v_1$ would benefit from swapping her edge $(v_1,v_2)$ to $(v_1,u')$.
    Thus, $\Tilde{N_1}(u')= \Tilde{N_1}(u)$ and with $\{u,u'\}\in E(\s)$ it follows directly that $\Tilde{N_1}(u')= \Tilde{N_1}(u)=\emptyset$ as otherwise there would be no incentive for establishing edge $\{u,u'\}$.
    Finally, from $\Tilde{N_1}(w)\subseteq \Tilde{N_1}(u)= \emptyset$ and $\Tilde{N_1}(w')\subseteq \Tilde{N_1}(u') = \emptyset$, we get that $\Tilde{N_1}(w)=\emptyset$ and $\Tilde{N_1}(w')=\emptyset$. This implies that the instance with $V=\{u,v_1,v_2,w,u',v_1',v_2',w'\}$ is the only possible graph structure. We have already seen an stable example in Observation~\ref*{obs:zweierketten}\ref{obs:zwei_zweierketten}.  
\end{proof}

\blaetterlemma*
\begin{proof}
    Let $u,v\in L$ be two leaves connecting to two different nodes $x,y\in V$. Since $G(\mathbf{s})$ is stable and $\alpha>1$ holds, agents~$u$ and $v$ buy their respective incident edge $(u,x)$ and $(v,y)$. As these edges connect to different nodes, $u$ and $v$ are not contained in each other's 2-neighborhood. Let w.l.o.g.\ $N_2^G(u) \geq N_2^G(v)$, then $S_v'=\{x\}$ would be an improving response to her former strategy $S_v=\{y\}$, as agent~$v$ gains at least one node ($u$) more by this swap, as it would lose. 
\end{proof}

\lemNEdiamthreeonealphathree*
\begin{proof}[Proof (continued)]
        It remains to show that none of the nodes of the sets $A$, $B$ or $V_1$ can improve their strategies. Because of symmetry in the construction and the regularity of the edges, considering one agent of each set is enough, and agents for nodes in $V_2,\ldots,V_5$ behave analogous to agents in $V_1$.
    \paragraph{Nodes in $A$:} For $1\leq i \leq 5$, let $a_i$ be an arbitrary node of $A$. 
    Agent $a_i$ only builds one edge to $a_{i+1\bmod 5}$ and has all nodes except $b_{1+3i\bmod5}$ in her 2-neighborhood. Her edge to $a_{i+1\bmod 5}$ is ensuring two-hop connections to $a_{i+1\bmod 5}$ and especially to all nodes in $V_{1+3i\bmod5}$. Deleting that edge is not beneficial, as $a_i$ would lose two-hop connections to $\lceil\alpha\rceil+1$ nodes.
   
    Thus, adding an edge does not result in improvement, as at most one node can be gained.
    Swapping an edge is not beneficial either, as there is no node that has $a_{i+1\bmod 5}$, $V_{1+3i\bmod5}$, and $b_{1+3i\bmod5}$ in their $1$-neighborhood.
    Therefore, all $a_i$ are stable.
    \paragraph{Nodes in $B$:} 
    The key insight is that the graph $G(\s)$ exhibits a symmetry arising from the way the two 5-cycles are connected through the nodes of $V_i$.
    This means if we match each node $a_i$ with a specific node $b_{1+3i\bmod5}$, e.g.\ $a_1$ and $b_4$, we can see by going over all connections, that their 2-neighborhoods are identical (only missing each other) and their possible moves can be mapped to each other.
    Since we have shown before that all $a_i$ are stable, it transfers directly that all corresponding $b_j$ are stable as well.
   
    \paragraph{Nodes in $V_1$:} All $\lceil\alpha\rceil$ agents in $V_1$ have the same strategy, so analyzing only one of them is sufficient. 
    Agent $v\in V_1$ builds four edges, to $a_{2i-1\bmod5}=a_1$, $a_{2i+1\bmod5}=a_3$, $b_{i\bmod5}=b_1$, and $b_{i+2\bmod5}=b_3$, and has no incoming edges.
    With any of these edges, $v$ gains all other nodes in $V_1$ into her 2-neighborhood. Moreover, each edge adds exclusive connections to two distinct nodes from $A$ or $B$ and one other  group $V_i$ of size $\lceil\alpha\rceil$.
    The two edges of the 5-cycle in $A$ resp. $B$ offer a common connection to $a_2$ resp. $b_2$.
    With these four edges, $v$ gets all other nodes into her 2-neighborhood, thus adding or swapping a single edge is not beneficial.

    Deleting a single edge is not beneficial either, as it would remove at least $\lceil\alpha\rceil+2$ nodes -- two on a 5-cycle and one other group $V_i$.
    Thus, the only improving response could be to delete at least one edge and swap one or more other edges.
    Following from the construction and the common endpoints of the edges from different $V_i$ nodes, at least three edges are needed to get all $V_i$ nodes into her 2-neighborhood. 
    As each $V_i$ contains more than $\alpha$ nodes, losing one of these groups would be unacceptable.
    With three edges, e.g.\ to $a_3$, $b_1$, $b_5$, all nodes but three can be covered, which is the best possible due to the way the different $V_i$ connect to both cycles. 
    For $\alpha\leq3$, this multi-swap move is non-improving, meaning $v$ and thus by symmetry all nodes in all $V_i$ are Nash stable. 
    Since the only improving moves for $\alpha > 3$ are moves where multiple edge changes are involved, the network remains greedy stable for higher $\alpha$.  
\end{proof}

\GEdiamFour*
\begin{proof}
We prove the theorem by providing an upper bound in Lemma~\ref{lem:no-GE-alpha>=5} and a lower bound example in Lemma~\ref{lem:GE_diam4_a>=3}.
\end{proof}

\GEdiamThree*
\begin{proof} A lower bound instance has already been provided in Lemma~\ref{lem-NE-diam3-1alpha3}.
For showing the upper bound, let $G(\mathbf{s})=(V,E(\mathbf{s}))$ be a stable network for a fixed $\alpha<3$. Assume $\mathrm{diam}(G(\mathbf{s}))=4$ holds, thus there is a path 
\mbox{$v_1$---$v_2$---$v_3$---$v_4$---$v_5$}
with \mbox{$d_G(v_1,v_5)=4$}. For $\alpha=1$, $v_1$ has an improving response in connecting to one of the two, since $v_4,v_5\notin N_2^G(v_1)$. It remains to prove the statement for $\alpha>1$.
As shown in Lemma~\ref{lem-no-two-leaves-at-different-nodes}, $v_1$ and $v_5$ cannot both be leaves, and thus w.l.o.g.\ $\mathrm{deg}(v_1)\geq 2$ holds. Since $d_G(v_1,v_5)=4$, we know that for all direct neighbors $w\in N_1^G(v_1)$, $d_G(w,v_5)\geq3$ holds, too.
Thus, $v_5$ can improve by building an edge to $v_1$, as this adds at least $3>\alpha$ nodes to her 2-neighborhood.
This shows for $1 \leq \alpha<3$, a stable network $G(\s)$ can have at most a diameter of 3.  
\end{proof}

\dynamics*

\begin{proof}
We start by describing a game with a sequence of strategy profiles $\mathcal{C} = \s_0,\s_1,\s_2,\s_3,\s_4,\s_5,\s_0$ (see Fig.~\ref{fig:IRC-Example}) and then show that this sequence indeed forms an improving response cycle. The game consists of the following agents: 
\begin{itemize}
    \item nodes $a_1$, $a_2$, and $b$, whose improving responses will form the cycle,
    \item nodes $x_i$ and $y_i$, for $i\in \{1,2\}$, between which $a_i$ will swap her edge,
    \item three sets of nodes $V_{a}^i, V_{y}^i, V_{xb}^i$, which will incentivize $a_i$'s and $b$'s strategy changes. The sizes of these sets are $|V_{a}^i|=1$, $|V_{y}^i|=2$ and \mbox{$|V_{xb}^i| = 4$}.
\end{itemize}
The cycle $\mathcal{C}$ starts with $\s_0$, which is constructed as follows (see Fig.~\ref{fig:IRC-Example}(a)). Agents $a_i$ builds an edge to the respective node $x_i$; agent~$b$ builds to node~$a_1$; the agents~$x_i$ build no edges; the agents $y_i$ build to $b$; all nodes in $V_{xb}^i$ build to the respective node~$x_i$ and to node~$b$; all nodes in $V_{a}^i$ build to the respective node~$a_i$; and all nodes in $V_{y}^i$ build to the respective node~$y_i$.

The following changes happen between the strategy profiles of $\mathcal{C}$, all other strategies stay the same:
\begin{itemize}[leftmargin=1.5cm]
    \item[$\s_0\to\s_1$:] $S_{a_1}'=\left(S_{a_1}\setminus\{x_1\}\right)\cup\{y_1\}=\{y_1\}$, i.e., $a_1$ swaps her edge \mbox{from $x_1$ to $y_1$}, 
    \item[$\s_1\to\s_2$:] $S_b'\phantom{_1}=\left(S_{b}\setminus\{a_1\}\right)\cup\{a_2\}=\{a_2\}$, i.e., $b$ swaps her edge \mbox{from $a_1$ to $a_2$,}
    \item[$\s_2\to\s_3$:] $S_{a_2}'=\left(S_{a_2}\setminus\{x_2\}\right)\cup\{y_2\}=\{y_2\}$, i.e., $a_2$ swaps her edge \mbox{from $x_2$ to $y_2$,} 
    \item[$\s_3\to\s_4$:] $S_{a_1}'=\left(S_{a_1}\setminus\{y_1\}\right)\cup\{x_1\}=\{x_1\}$, i.e., $a_1$ swaps back from $y_1$ to $x_1$, 
    \item[$\s_4\to\s_5$:] $S_b'\phantom{_1}=\left(S_{b}\setminus\{a_2\}\right)\cup\{a_1\}=\{a_1\}$, i.e., $b$ swaps back from $a_2$ to $a_1$,
    \item[$\s_5\to\s_0$:] $S_{a_2}'=\left(S_{a_2}\setminus\{y_2\}\right)\cup\{x_2\}=\{x_2\}$, i.e., $a_2$ swaps back from $y_2$ to $x_2$.
\end{itemize}
Fig.~\ref{fig:IRC-Example} depicts the improving response cycle that holds for any edge price $\alpha$, since only edge-swaps are performed. 
Dashed edges indicate edges from $a_1$, $a_2$, and $b$, i.e., the changing edges.
We will now discuss all six strategy changes and show there that each is an improving response for the respective node for any $\alpha$. 

{$\s_0\to\s_1:$} Agent $a_1$ does not need her edge $(a_1,x_1)$ to get all nodes in $V_{xb}^1$ into her 2-neighborhood. 
Since the two nodes in $V_{y}^1$ are not in her 2-neighborhood, she swaps her edge from $x_1$ to $y_1$.
This is beneficial, since agent~$a_1$ adds the two nodes in $V_{y}^1$ to her 2-neighborhood while keeping her two-hop connection to the nodes in $V_{xb}^1$. She only loses node~$x_1$ from her 2-neighborhood. Since the number of built edges does not change, overall her cost decreases by 1. 

{$\s_1\to\s_2:$} Agent $b$ has all nodes except $a_2$ and all nodes in $V_{a}^2$ in her 2-neighborhood.
Via the new edge $(a_1,y_1)$, agent $b$ gets another way to connect to $a_1$, i.e.\ in two hops over $y_1$. 
Thus, swapping her edge to $a_2$ is beneficial, as agent~$b$ adds $|V_{a}^2\cup\{a_2\}| = 2$ nodes to her 2-neighborhood, while she only loses the single node in $V_{a}^1$, as $a_1$ is still reachable in two hops via $y_1$. Overall, since she still has the same number of edges, her cost decreases by 1.

{$\s_2\to\s_3:$} It is the same as $\s_0\to\s_1$, but from agent $a_2$'s perspective.

{$\s_3\to\s_4:$} After the deletion of edge $(b,a_1)$, agent $a_1$ loses her two-hop connections to all nodes in $V_{xb}^1$.
Together with $x_1$, these are one node more than $|V_{y}^1\cup\{b,y_1\}|$. 
Thus, a swap back to $x_1$ is beneficial for agent~$a_1$.

{$\s_4\to\s_5:$} It is the same as $\s_1\to\s_2$, but with the roles of $a_1$ and $a_2$ swapped.

{$\s_5\to\s_6=\s_0:$} It is the same as $\s_3\to\s_4$, but from agent $a_2$'s perspective.

The resulting state is the same as $\s_0$, as over the six strategy changes three edges were swapped and swapped back.
Thus, the sequence $\mathcal{C}$ is indeed an improving response cycle.
\end{proof}

\hardness*

\begin{proof}
The formal definition of the decision problem is as follows: 
Given a graph $G(\mathbf{s})=(V,E(\mathbf{s}))$, an isolated node $x\in V$, an edge cost $\alpha\in \mathbb{R}_{>0}$, and a maximum strategy size $k\in\mathbb{N}$. The objective is to decide whether it is possible for $x$ to choose a strategy $S_x$ with $|S_x|\leq k$, so that $cost_x((S_x, \s_{-x}))\leq \alpha \cdot k + (n - k) \lfloor\alpha\rfloor$.

The membership to NP can be shown with a polynomial time verification \cite{AroraBarak2009} by encoding a strategy $S_x$ in a bit-string certificate $c$ of length $n$. Hereby, $c[i]=1$ means $x$ builds an edge to $v_i \in V$, 0 means $x$ does not.
Given an instance $(G,x,\alpha,k)$ and a certificate $c$, the verifier builds all edges encoded in $c$ and calculates $cost_x((S_x,\s_{-x}))$. 
For that, the strategies of $\s_{-x}$ can be deduced from $G$ in $\bigO(n+m)$ by going over all edges. The cost of $x$ can be calculated by using a Breadth-First Search from $x$ and collecting all neighbors that are not in its 2-neighborhood. Finally, the verifier accepts the instance if and only if $cost_x((S_x, \s_{-x}))\leq \alpha \cdot k + (n - k) \lfloor\alpha\rfloor$ holds.

If $(G,x,\alpha,k)$ is a "Yes"-instance, there is a feasible strategy that can be encoded as a certificate, so that the verifier accepts it.
If the verifier accepts an instance with a certificate, the certificate encodes a feasible strategy so that $cost_x$ fulfills the requirement and the instance is indeed a "Yes"-instance.
Together, we can conclude the decision problem belongs to NP.

The NP-hardness reduction we construct as follows. Given a DS instance $(G,k)$, where $G$ is a graph and $k$ the size of a dominating set. We construct the state $\s$ by describing the corresponding graph $G(\mathbf{s})$. The orientation of the edges is arbitrary. First, we take a copy of $G$ including all nodes and edges. For each node $v \in V$, we add $\lfloor\alpha\rfloor$ copies, i.e., nodes $v_i$ with edges $(v_i,v)$ for $i \in \left\{1, \dots, v_{\lfloor\alpha\rfloor}\right\}$. Finally, we add an isolated node $x$ for which we will calculate the best response.
For the edge budget, we transfer $k$ directly.
It remains to prove correctness of the construction.

First, we show that given a DS of size $k$ in $G$, there is a best response for $x$ to $G(\mathbf{s})$ with $cost_x((S_x, \s_{-x}))\leq \alpha \cdot k + (n - k) \lfloor\alpha\rfloor$. Let $D$ be the DS with $|D|=k$ for the original graph $G$. In our transformed instance, buying all $k$ edges from $x$ to $v\in D$ brings all original nodes from $G$ as well as all the copies of the $k$ chosen nodes into $N_2^{G(\s)}(x)$. Thus, there is a strategy $S_x$ with $cost_x(\s) = k \alpha + (n - k) \lfloor\alpha\rfloor$ and the best response is at most as costly.

Second, suppose there is a best response $S_x$ with $cost_x((S_x,\s_{-x}))\leq \alpha\cdot k + (n - k) \lfloor\alpha\rfloor$ in 2-NMG. We show that this implies a DS of size $k$ in $G$. As strategy~$S_x$ is a best response we can assume that $S_x$ does not contain any of the copied vertices as they have degree 1 and swapping the edge from a copy $v_i$ to the original node $v$ would never increase cost. 
Furthermore, if there is an original agent $v\notin N_2^{G(\s)}(x)$, all her copies are also not in $N_2^{G(\s)}(x)$. 
In this case agent~$x$ could add an edge with cost $\alpha$ to $v$ and increase $\left|N_2^{G(\s)}(x)\right|$ by $\lfloor \alpha \rfloor + 1 > \alpha$. 
Because of $|S_x| \leq k$, it holds that $cost_x((S_x,\s_{-x})) = \alpha \left|S_x\right| + \left(n-|S_x|\right) \lfloor\alpha\rfloor \leq \alpha\cdot k + (n - k) \lfloor\alpha\rfloor$. Thus, agent~$x$ has a best response with at most $k$ nodes, all nodes are original nodes, and all original nodes are in $N_2^{G(\s)}(x)$. Thus, $S_x$ forms a dominating set with $|S_x|\leq k$ in $G$. 
\end{proof}

\begin{restatable}{lemma}{keinethreedegtwoKnoten}
\label{lem:keine3deg2Knoten}
    Let $G(\mathbf{s})=(V,E(\mathbf{s}))$ be a stable network for any $\alpha$. Then there exists no path with exact length 3 as a subgraph, i.e.
    \begin{equation*}
    \begin{split}
        \nexists\, u, v_1,v_2,v_3, w \in V: \mathrm{deg}(v_{j\in[1,3]})=2\, \wedge \mathrm{deg}(u),\mathrm{deg}(w)\neq 2\, \\ \wedge \,d_G(u,v_1)=d_G(v_1,v_2)=d_G(v_2,v_3)=d_G(v_3,w)=1
    \end{split}
    \end{equation*}
    or $\alpha=1$ and $G(\mathbf{s})$ is a $4$-cycle with a leaf.
\end{restatable}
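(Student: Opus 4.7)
The plan is a proof by contradiction: suppose a stable $G(\mathbf{s})$ contains a path $u$---$v_1$---$v_2$---$v_3$---$w$ with $\mathrm{deg}(v_1)=\mathrm{deg}(v_2)=\mathrm{deg}(v_3)=2$ and $\mathrm{deg}(u),\mathrm{deg}(w)\neq 2$. The degree constraints pin the neighborhoods exactly: $N_1(v_1)=\{u,v_2\}$, $N_1(v_2)=\{v_1,v_3\}$, $N_1(v_3)=\{v_2,w\}$. Hence each internal edge is bought by a degree-$2$ endpoint, and one can read off what its buyer gains from it. I would split the argument by $\alpha$.

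For $\alpha>2$ the claim is immediate: the owner of $\{v_1,v_2\}$ is $v_1$ or $v_2$, and that owner gains at most two nodes from the edge (the other endpoint together with its unique remaining neighbor). Deletion saves $\alpha>2$ while costing at most $2$, an improving response, contradicting stability. For $1<\alpha\leq 2$ I would proceed by swap and addition arguments combined with the symmetry between $v_1$ and $v_3$. The key observation is $N_2(v_1)=\{v_1,v_2,v_3,u\}\cup N_1(u)$, so $w\notin N_2(v_1)$ unless $w=u$ or $w\in N_1(u)$. In the generic subcase $u\neq w$ and $u\not\sim w$, I would show that the buyer of $\{v_1,v_2\}$ can either swap that edge to $w$ or add a new edge to $w$, gaining $w$ together with enough further neighbors of $w$ to strictly improve; the symmetric analysis from the $v_3$ side plus the addition criterion (no stable agent gains $\geq\lceil\alpha\rceil+1$ new nodes from a fresh edge) rules out the degenerate configurations in which $\mathrm{deg}(w)$ is too small. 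If instead $u=w$ or $u\sim w$, the resulting short cycle through $v_1,v_2,v_3$ forces $\mathrm{deg}(u),\mathrm{deg}(w)=2$ unless $u$ carries extra incident edges; but then a deletion on one of the internal edges costs only a single node from the buyer's $2$-neighborhood, giving improvement for any $\alpha>1$.

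For $\alpha\leq 1$ I would invoke Observation~\ref{obs:alpha<2_diam<=2}: if $\alpha<1$ every stable network has diameter at most $2$, but the rigid neighborhood structure gives $d_G(v_1,w)\geq 3$ unless $u=w$ or $u\sim w$. The two possibilities yield either a $5$-cycle (forcing $\mathrm{deg}(u)=\mathrm{deg}(w)=2$, contradiction) or a $4$-cycle on $\{u,v_1,v_2,v_3\}$ with extra incident edges at $u=w$; the latter's stability at $\alpha<1$ again fails by a direct deletion check, and at $\alpha=1$ a small case analysis pins down the unique surviving structure as the $4$-cycle with a single leaf attached to $u=w$, namely the equilibrium of Observation~\ref*{obs:NEs}\ref{obs:4cycleleaf}. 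The main obstacle will be the middle range $1<\alpha\leq 2$: one-step deletions are no longer improving, so the argument must carefully combine swap moves with the non-improving-addition constraint and exploit the symmetry between $v_1$ and $v_3$, which is the most delicate part of the case split.
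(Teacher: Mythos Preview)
Your overall decomposition by $\alpha$ and your treatment of $\alpha>2$ are fine and match the paper. The gap is in the short-cycle subcase for $1<\alpha\leq 2$. You lump $u=w$ and $u\sim w$ together and claim that ``a deletion on one of the internal edges costs only a single node from the buyer's $2$-neighborhood.'' This is true for $u=w$ (the $4$-cycle: deleting $\{v_1,v_2\}$ leaves $v_3$ reachable from $v_1$ via $u$, so only $v_2$ is lost), but it is \emph{false} for $u\sim w$. In the $5$-cycle $u$---$v_1$---$v_2$---$v_3$---$w$---$u$, if $v_1$ owns $\{v_1,v_2\}$ then deleting it removes both $v_2$ and $v_3$ from $N_2(v_1)$ (neither is a neighbor of $u$), and if $v_2$ owns it then deleting removes both $v_1$ and $u$. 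Either way the loss is $2$, so for $\alpha\leq 2$ this deletion is not improving and your argument collapses precisely in the case that requires the most work.

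The paper handles $d_G(u,w)=1$ by addition arguments rather than deletion: for $\alpha<2$, either $\max\{\deg(u),\deg(w)\}\geq 4$ and $v_2$ profitably adds an edge there, or $\deg(u)=\deg(w)=3$ and one looks at the extra neighbor $u'$ of $u$, who is missing $v_2,v_3$ and can profitably add. The case $\alpha=2$ is treated separately (gaining exactly $2$ nodes at cost $2$ is not strict), with a further split on edge orientations and on whether additional vertices exist. Your plan also glosses over this boundary: the addition moves you sketch for the generic case (``add a new edge to $w$'') gain exactly $2$ nodes when $\deg(w)=3$ and hence are not improving at $\alpha=2$. Finally, note that Observation~\ref{obs:alpha<2_diam<=2} gives $\mathrm{diam}\leq 2$ only for $\alpha<1$, so it cannot carry the $\alpha=1$ case; the $4$-cycle-with-leaf exception genuinely needs its own small analysis, which you defer but do not supply.
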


\begin{proof}
    Let $v_1,v_2,v_3 \in V$, $ d_G(v_1,v_2)=d_G(v_2,v_3)=1$ and $\mathrm{deg}(v_1)=\mathrm{deg}(v_2)=\mathrm{deg}(v_3)=2$. 
    For $u,w\in V$ consider the subgraph \mbox{$u$---$v_1$---$v_2$---$v_3$---$w$} with $\mathrm{deg}(u), \mathrm{deg}(w)\neq 2$ (for longer paths see Lemma~\ref{keine_k+deg2Ketten}). First, we consider the case $\alpha<2$.
    \begin{itemize}
    \item \emph{Case 1 $(u=w)$:} The nodes $u,v_1,v_2,v_3 \in V$ form a $4$-cycle and we have to differentiate cases depending on the direction of the edges adjacent to agent $v_2$. 
    If $v_2$ has at least one outgoing edge, she can improve her strategy by swapping her edge from either $v_1$ or $v_3$ to $u$. 
    Because of $\mathrm{deg}(u)\geq 3$ she would add at least one more node into her neighborhood while still keeping all the other agents. If $v_2$ does not have any outgoing edges and $1<\alpha<2$, then either $u$, $v_1$ or $v_3$ is purchasing two edges and can improve by deleting one. If $\alpha<1$, then $v_2$ can improve her strategy since there exists at least one neighbor $u'$ of $u$ with $d_G(v_2,u')=3$. For $\alpha=1$, there exists a particular $4$-cycle with a leaf that is stable (see Observation~\ref*{obs:NEs}\ref{obs:4cycleleaf}).
    \item \emph{Case 2.1 ($d_G(u,w)=1)$:} In case either $\mathrm{deg}(u)\geq 4$ or $\mathrm{deg}(w)\geq 4$, agent $v_2$ can improve her strategy by connecting to this agent and thus gain at least two new neighbors into her 2-neighborhood. 
    Because of $\alpha<2$ this is a beneficial deviation. 
    Whereas if $\mathrm{deg}(u)=\mathrm{deg}(w)=3$ and $u',w'\in V$ with $u'\neq w'$ and $d_G(u',u)=d_G(w',w)=1$, both agents $v_2,v_3 \notin N_2^G(u')$ and thus agent $u'$ has an improving response by connecting to $v_2$. 
    The case where $u'=w'$ can be neglected for $1<\alpha<2$ since the connection between $u$ and $w$ would not be beneficial. 
    For $\alpha\leq1$, we have to distinguish two cases: \begin{itemize}
        \item 2.1.1: If $u$ purchases an edge to $u'$, it implies that $u'$ has at least one neighbor $u''$ that $u$ gains into her 2-neighborhood with that edge. For this agent $u''$ it holds by construction that $d_G(u'',v_1)=d_G(u'',v_3)=3$ and $d_G(u'',v_2)=4$. Thus, the network is not stable.
        \item 2.1.2: If $u'$ purchases an edge either to $u$ or $w$, she can improve her strategy by swapping her edge one node further and gaining additionally $v_2$ into her 2-neighborhood, as the original node ($u$ or $w$) has no neighbors $u'$ would lose, since $\mathrm{deg}(u)=\mathrm{deg}(w)=3$.
    \end{itemize}
    \item \emph{Case 2.2 $(d_G(u,w)\geq 2)$:} If w.l.o.g.\ $\mathrm{deg}(u)\geq 3$, agent $u$ has at least two neighbors that are not part of $N_2^G(v_2)$, and therefore agent $v_2$ can improve her strategy by connecting to $u$. Whereas if w.l.o.g.\ $\mathrm{deg}(u)=1$, it is straightforward to see that $v_3,w \notin N_2^G(u)$ and therefore an additional edge is beneficial for $u$. 
    \end{itemize}
    If $\alpha> 2$ it is easy to see that there always exist at least one edge between two nodes with a degree of two that can be deleted since it increases the respective 2-neighborhood by only 2, e.g.\ edge $(v_1,v_2)$.
    
    For $\alpha =2$ we observe that if $v_2$ constructs the edge to $v_1$, it is easy to see that swapping $(v_2,v_1)$ to $(v_2,u)$ increases $v_2$'s 2-neighborhood, if $\mathrm{deg}(u)\geq 3$. Thus, $v_1$ constructs the edge $(v_1,v_2)$. Analogously, $v_3$ constructs the edge $(v_3,v_2)$. We distinguish the following cases.  The instance where $\mathrm{deg}(u)=1$ is dealt with in the last case of this proof. 
    \begin{itemize}
        \item \emph{Case 1 ($u=w$):} Regardless of the direction, deleting the edge $\{v_1,v_2\}$ is an improving response, since it decreases the respective 2-neighborhoods by only 1 node.
        \item \emph{Case 2 ($d_G(u,w) = 1$):} If $\max\{\mathrm{deg}(u), \mathrm{deg}(w)\} \geq 4$, we assume w.l.o.g.\  \mbox{$\mathrm{deg}(w) \geq 4$}. Swapping the edge $(v_1,v_2)$ to $(v_1,w)$ increases $|N_2^G(v_1)|$ by one. Thus, $\mathrm{deg}(u)= \mathrm{deg}(w)= 3$. Let $u'$ and $w'$ denote the neighbors of $u$ and $w$ that are not in the \mbox{cycle $u$---$v_1$---$v_2$---$v_3$---$w$---$u$}, respectively.
        \begin{itemize}
            \item \emph{Case 2.1 (there are no further nodes, i.e.\ $V=\{u,v_1,v_2,v_3,w,u',w'\}$):} Either $u'$ and $w'$ are both leaves which is an immediate contradiction to Lemma~\ref{lem-no-two-leaves-at-different-nodes}. Otherwise, there is an edge $\{u',w'\}$. Observe that buying this edge is not beneficial both for $u'$ and $w'$. 
            \item \emph{Case 2.2 (there is an additional node $v$):} By construction $v$ cannot have edges to $u, v_1, v_2, v_3$, and $w$. Thus, $v_1,v_2,v_3 \notin N_2^G(v)$, therefore $v$ has an improving response by buying to $v_2$.
        \end{itemize}
        \item \emph{Case 3 ($d_G(u,w) \geq 2$):} If $\mathrm{deg}(u),\mathrm{deg}(w)\geq3$, the agent who purchased the edge $(v_1,v_2)$ can improve her strategy by swapping her connection one node further. If w.l.o.g.\ $\mathrm{deg}(u)=1$, then $u$ can improve her strategy by swapping her edge from agent $v_1$ to agent $v_2$ gaining $v_3$ into her 2-neighborhood. 
    \end{itemize}
\end{proof}

\begin{restatable}{lemma}{keinekpluszweierketten}

\label{keine_k+deg2Ketten}
    Let $G(\mathbf{s})=(V,E(\mathbf{s}))$ be a stable network for any $\alpha$. Then, either
    \begin{equation*}
        \begin{split}
        \nexists\, v_1,\dots, v_k\in V, k\in \mathbb{N}_{>3}: \mathrm{deg}(v_{j\in[1,k]})=2
         \wedge \forall i \in [1,k-1]\colon d_G(v_i,v_{i+1})=1
        \end{split}
    \end{equation*}
    or  $\alpha =1$ and $G(\s)$ is a 4-cycle, $\alpha =2$ and $G(\s)$ is a 5-cycle with an additional leaf or $\alpha\leq 2$ and $G(\mathbf{s})$ is a $5$-cycle.
\end{restatable}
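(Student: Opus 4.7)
\emph{Sketch.} I argue by contradiction: suppose $G(\s)$ contains a path $v_1$---$\cdots$---$v_k$ of $k\geq 4$ consecutive degree-$2$ vertices and $G(\s)$ is none of the listed exceptional graphs. I extend the path maximally on both sides while keeping every new vertex of degree $2$; this yields either (A) a simple cycle $C_m$ of degree-$2$ vertices, or (B) a maximal degree-$2$ run $u$---$v_1$---$\cdots$---$v_m$---$w$ with $m\geq 4$ and $\mathrm{deg}(u),\mathrm{deg}(w)\neq 2$, where possibly $u=w$, closing the configuration into an $(m+1)$-cycle with a hub.

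\emph{Cycle case (A).} For $m\in\{4,5\}$, Observations~\ref*{obs:NEs}\ref{obs:4cycle} and \ref*{obs:NEs}\ref{obs:5cycle} yield precisely the listed exceptions. For $m\geq 6$ some node $v$ has bought a cycle edge $v\to v_{+1}$ (since $\sum_w|S_w|=m>0$); swapping it to $v\to v_{+2}$ preserves every element of $v$'s old $2$-neighborhood (the other incident cycle edge still reaches $v_{-1},v_{-2}$, and $v_{+1}$ is now at distance $2$ through $v_{+2}$) and additionally pulls in $v_{+3}$, a strict improvement at unchanged edge count.

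\emph{Open/hub case (B).} WLOG $v_2$ bought the middle edge $\{v_2,v_3\}$. The swap $v_2\to v_3 \mapsto v_2\to v_4$ changes her $2$-neighborhood from $\{v_2,v_1,v_3,u,v_4\}$ to $\{v_2,v_1,v_4,u,v_3,z\}$, where $z$ is $v_4$'s other neighbor ($v_5$ if $m\geq 5$, else $w$); in particular $v_3$ is retained via $v_4$, so the swap strictly improves unless $z$ already lies in the old set. Using $\mathrm{deg}(v_i)=2$ to exclude $z\in\{v_1,v_2,v_3,v_4\}$ leaves only $z=u$, which forces $u=w$ and $m=4$ and collapses the structure into a $5$-cycle $u$---$v_1$---$v_2$---$v_3$---$v_4$---$u$ with $\mathrm{deg}(u)\geq 3$. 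For $m\geq 5$ an entirely analogous swap by a middle degree-$2$ node on the $(m+1)$-cycle additionally pulls $u$ into her $2$-neighborhood at no loss and is again strictly improving; only the $m=4$ hub case then remains. There, if $u$'s single extra neighbor is a leaf and $\alpha=2$, we are in the exception of Observation~\ref*{obs:NEs}\ref{obs:5cycleleaf}; otherwise either $u$ has a non-leaf extra neighbor $u'$---and some node at distance $\geq 2$ from $u$ beyond $u'$ misses $v_2,v_3$ from her $2$-neighborhood and profits by adding a single edge to a middle cycle node, gaining three nodes for cost $\alpha\leq 2$---or $\alpha\neq 2$, in which case the $5$-cycle-with-leaf itself is already unstable (a leaf adds a cycle chord for $\alpha<1$; a cycle node deletes her bought edge for $\alpha>2$). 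For $\alpha>2$ the edge $\{v_2,v_3\}$ itself cannot exist at all, since its buyer saves $\alpha>2$ on deletion while losing at most two nodes.

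\emph{Main obstacle.} The most delicate step is the $m=4$ hub case near $\alpha=2$: many moves inside the $5$-cycle are only neutral, so the strictly improving response has to be located outside the cycle via the neighbors of $u$'s extra neighbor, and one must carefully rule out all ``nearly-exceptional'' structures that resemble the $5$-cycle-with-leaf but differ in the hub's attachments.
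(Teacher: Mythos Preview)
Your overall decomposition (pure cycle versus a maximal degree-$2$ run capped by $u,w$ with $\deg(u),\deg(w)\neq 2$, then isolating the $m=4$, $u=w$ ``hub'' case) and your main swap move $v_2\to v_4$ match the paper's approach closely, and the cycle case and the open-path case with $u\neq w$ are handled correctly.

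The genuine gap is in the $m=4$ hub case at $\alpha=2$. Your dichotomy ``either $u$ has a non-leaf extra neighbor $u'$, or $\alpha\neq 2$'' does not exhaust the ``otherwise'' branch: take $\deg(u)\geq 4$ with \emph{all} extra neighbors of $u$ being leaves and $\alpha=2$. Then neither clause applies, and your ``distant node buys an edge'' argument cannot run because every vertex off the $5$-cycle is at distance $1$ from $u$. The same obstruction can arise even when $u$ does have a non-leaf extra neighbour $u'$: if $u'$'s other neighbour $u''$ is also adjacent to $u$ (forcing $\deg(u)\geq 4$), there is again no vertex at distance $\geq 2$ from $u$, so the promised ``node beyond $u'$'' need not exist. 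The paper closes exactly this hole with a different move: when $\deg(u)\geq 4$, the buyer of $\{v_2,v_3\}$ swaps that edge to $u$, gaining at least two of $u$'s extra neighbours while losing only $v_3$, a strict improvement at $\alpha=2$. Adding this swap-to-$u$ argument for $\deg(u)\geq 4$ repairs both failures simultaneously and reduces the remaining analysis to $\deg(u)=3$, where your ``distant node'' argument is valid (since then $u''$ cannot be adjacent to $u$).

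A second, smaller slip: in your $\alpha\neq 2$ branch you write that the leaf ``adds a cycle chord for $\alpha<1$''. In fact the leaf, whose $2$-neighbourhood is $\{\ell,u,v_1,v_4\}$, can buy a single edge to $v_2$ (or $v_3$) and gain both $v_2$ and $v_3$; this is strictly improving for all $\alpha<2$, not just $\alpha<1$. Without this correction the range $1\leq\alpha<2$ for the $5$-cycle-with-leaf is left uncovered.
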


\begin{proof}
    Let  $v_1,\dots, v_k\in V, k\in \mathbb{N}_{>3}$ with $\mathrm{deg}(v_{j\in[1,k]})=2
         $ and \mbox{$ \, \forall i \in [1,k-1] :$} $d_G(v_i,v_{i+1})=1 $, i.e.\ $G$ contains a chain of $k$ many nodes of a degree of 2.
         Further, consider the subgraph
         \mbox{$u$---$v_1$---$v_2$---$v_3$---$\dots$---$v_k$---$w$} with $u,w\in V$ and $\mathrm{deg}(u),\mathrm{deg}(w)\neq 2$.
         Without loss of generality we can assume $v_3\in S_{v_2}$. First we analyse the case $\mathrm{deg}(u),\mathrm{deg}(w)\geq 3$ and start with $\alpha<2$.
         \begin{itemize}
         \item \emph{Case 1 $(u=w)$:} Because of $\mathrm{deg}(u)\neq 2$ we know that at least one node $u'\in V$ exists with $d_G(u,u')=1$ and $u'\notin \{v_1,\dots, v_k\}$. It follows from the construction that $v_2,v_3 \notin N_2^G(u')$ and therefore agent $u'$ can improve her strategy by connecting to $v_2$ or $v_3$. 
         
         \item \emph{Case 2 $(u\neq w)$:} It is clear that $w \notin N_2^G(v_2)$ and thus agent $v_2$ can improve her strategy by deleting her edge to $v_3$ and connecting instead with $w$. 
         \end{itemize} 
         If $\alpha> 2$, analogously to the proof of Lemma~\ref{lem:keine3deg2Knoten} it is easy to see that there always exist at least one edge between two nodes with a degree of two that can be deleted.
         
         For $\alpha=2$ we distinguish the cases as above.
         \begin{itemize}
         \item \emph{Case 1 $(u=w)$:} Either $\mathrm{deg}(u)\geq 4$, then the player constructing edge $\{v_2,v_3\}$ would swap the edge to $u$. 
         If $\mathrm{deg}(u)=3$, thus let $u$---$v_1$---$v_2$---$v_3$---$\dots$---$v_k$---$u$ be at least a 6-cycle with an additional connection at agent $u$. Let w.l.o.g.\ $v_2$ construct the edge $(v_2,v_3)$. Swapping to $(v_2,v_4)$ strictly decreases $v_2$'s cost. The only remaining case is the 5-cycle with a single additional leaf. This is exactly the network analyzed in Observation~\ref*{obs:NEs}\ref{obs:5cycleleaf} and stable for $\alpha = 2$.
         
         \item \emph{Case 2 $(u\neq w)$:} It is clear that $w \notin N_2^G(v_{k-2})$ and thus agent $v_{k-2}$ can improve her strategy by deleting her edge to $v_{k-1}$ and connecting instead with $v_k$. This is an improving response since she gains $w$ into her 2-neighborhood.
         \end{itemize}
         If w.l.o.g $\mathrm{deg}(u)=1$, it is straightforward to see that agent $u$ can improve her strategy by either shifting her edge from agent $v_2$ one hop further to agent $v_3$ for $\alpha\leq 2$ or by removing her edge $(v_1,v_2)$ for $\alpha>2$.
         
         Finally, we have to analyse the special case of $u=w$ and $\mathrm{deg}(u)=2$, namely cycles. For $\alpha\leq 1$, we know that a $4$-cycle is stable (see Observation~\ref*{obs:NEs}\ref{obs:4cycle}) and for $\alpha\leq2$, the $5$-cycle is stable (see Observation~\ref*{obs:NEs}\ref{obs:5cycle}). For larger $\alpha$, these networks become instable, as the benefit of an edge in a cycle cannot exceed $2$. Moreover, the $4$-cycle is the smallest cycle that can be stabilized since in a triangle there is no incentive for the third edge. We will now show that bigger cycles are never stable. Let $G=(V,E(\s))$ be a $l$-cycle with $l\geq 6$ and $u,w\in V$ and w.l.o.g $(u,w)\in E(\s)$. The agent that purchases this edge can improve her strategy by swapping the edge one node further in the cycle increasing her neighborhood by one agent.   
\end{proof}

\begin{lemma}\label{lem:no-GE-alpha>=5}
    All connected greedy stable networks $G(\mathbf{s})$ fulfill $\mathrm{diam}(G(\mathbf{s}))\leq4$.

\end{lemma}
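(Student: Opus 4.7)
The plan is to argue by contradiction.  Suppose $G(\mathbf{s})$ is a connected greedy equilibrium with $\mathrm{diam}(G(\mathbf{s})) \geq 5$, and fix a shortest path $v_1$---$v_2$---$v_3$---$v_4$---$v_5$---$v_6$ of length~$5$.  By the shortest-path property, every node in $\{v_5\} \cup N_1^G(v_5)$ lies at distance at least $3$ from $v_1$, and symmetrically at the other end; in particular, none of $v_5$'s neighbors sits in $N_2^G(v_1)$, and none of $v_1$'s neighbors sits in $N_2^G(v_6)$.

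First I would dispose of the small-$\alpha$ regime.  For $\alpha < 1$, Observation~\ref{obs:alpha<2_diam<=2} already gives $\mathrm{diam}(G(\mathbf{s})) \leq 2$, an immediate contradiction.  For $\alpha \geq 1$, I would invoke the single-edge greedy add $S_{v_1} \to S_{v_1} \cup \{v_5\}$, whose gain equals $1 + \deg(v_5)$ since every neighbor of $v_5$ is new to $N_2^G(v_1)$.  Non-improvability forces $\deg(v_5) \leq \alpha - 1$, and the three symmetric add-moves (from $v_1$ to $v_6$, and from $v_6$ to $v_1$ or $v_2$) yield the analogous bounds $\deg(v_6), \deg(v_1), \deg(v_2) \leq \alpha - 1$.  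Because $v_5$ has at least $v_4$ and $v_6$ as neighbors, $\deg(v_5) \geq 2$, so $\alpha \geq 3$; for $\alpha \in [1,3)$ the contradiction is already complete.

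Second, for $\alpha \geq 3$ I would combine Lemma~\ref{lem-no-two-leaves-at-different-nodes} with a swap analysis.  The lemma rules out the case where both $v_1$ and $v_6$ are leaves (they would attach to the distinct nodes $v_2$ and $v_5$), so without loss of generality $\deg(v_1) \geq 2$, and the add-bound $1 + \deg(v_1) \leq \alpha$ then becomes tight when $\alpha = 3$.  The key observation I would push is that every bought edge incident to $v_5$, $v_6$, $v_2$, or $v_1$ has marginal benefit at least $\alpha$ (otherwise its owner would drop it, a greedy improving response) and at most $1 + \deg(\text{other endpoint}) \leq \alpha$ by the degree bounds, hence exactly $\alpha$.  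In particular, for whichever endpoint owns the edge $\{v_5, v_6\}$, the swap that replaces it by an edge reaching across the path (into $v_1$ or $v_2$) has loss exactly $\alpha$ and gain $1 + \deg(v_1)$ (respectively $1 + \deg(v_2)$).

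The main obstacle will be this last step for general $\alpha \geq 3$: the add-bounds individually are compatible with a diameter-$5$ greedy equilibrium, and the natural single swap comes out exactly break-even because of the tight equalities just derived.  Closing the case will require identifying, among the two possible owners of $\{v_5, v_6\}$ (and similarly for $\{v_1, v_2\}$), a specific swap that strictly beats break-even, or deducing a forced common neighbor that would shorten the claimed length-$5$ path and contradict its shortest-path property.  I expect this sub-case enumeration, modeled on the swap-style argument used to establish $\mathrm{diam} \leq 3$ in Theorem~\ref{thm:GE-diam-3} but applied to the longer configuration, to carry most of the technical weight of the proof.
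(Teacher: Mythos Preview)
Your degree bounds and the handling of $\alpha<3$ are correct, but for $\alpha\geq 3$ you overcomplicate matters because of an off-by-one slip in the delete bound. You write that the marginal benefit to the owner $x$ of a bought edge $\{x,y\}$ is at most $1+\deg(y)$; in fact it is at most $\deg(y)$: the nodes $x$ can lose from $N_2^G(x)$ upon deleting this edge lie in $\{y\}\cup\bigl(N_1^G(y)\setminus\{x\}\bigr)$, a set of exactly $\deg(y)$ elements, since $x$ itself is a neighbor of $y$ but can never drop out of its own $2$-neighborhood. Applying this to the edge $\{v_1,v_2\}$ (call its non-owner $y\in\{v_1,v_2\}$) yields $\alpha\leq\deg(y)$, while your add-bound from $v_6$ already gives $1+\deg(y)\leq\alpha$. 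This is an immediate contradiction for every $\alpha$; the swap analysis, the case split on $\alpha$, and the invocation of Lemma~\ref{lem-no-two-leaves-at-different-nodes} are all unnecessary.

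The paper's proof is exactly this two-move combination, phrased the other way round: from non-deletion of $\{v_1,v_2\}$ by its owner it deduces that the closed neighborhood $\{y\}\cup N_1^G(y)$ has size at least $1+\lceil\alpha\rceil$; since $d_G(v_6,y)\geq4$, all of this set lies outside $N_2^G(v_6)$, so the single greedy add $v_6\to y$ gains strictly more than $\alpha$ nodes and is an improving response.
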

\begin{proof}
    Let $G(\mathbf{s})=(V,E(\mathbf{s}))$ be a stable network with diameter at least 5. 
    Then there are $v_1,v_6\in V$ with $d_G(v_1,v_6)=5$. 
    This means $N_2^G(v_1) \cap N_2^G(v_6)=\emptyset$. 
    Also, it exists $v_2$ with $d_G(v_2,v_6)=4$ and $d_G(v_2,v_1)=1$ and an edge $e=\{v_1,v_2\}$. Let $x$ denote the buyer of $e$, and $y$ the node the edge is built to. Since $G$ is stable, building $e$ increases the size of $N_2^G(x)$ by at least $\lceil\alpha\rceil$ nodes. All these nodes are contained in $N_1^G(y)$. Together with $x$ we have $N_1^G(y)\geq 1+ \lceil\alpha\rceil$. Since $d_G(v_6,y)\geq 4$, $N_1^G(y)$ and $N_2^G(v_6)$ are disjoint. Thus $v_6$ has an improving response by buying the edge to $y$ as well.   
\end{proof}

\noindent To prove the tightness of our bound, we show the following lemma.

\begin{restatable}{lemma}{GEdiamfour}
\label{lem:GE_diam4_a>=3}
   
   There is a greedy stable network $G$ with $\mathrm{diam(G)}=4$. 
\end{restatable}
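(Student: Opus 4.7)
My plan is to prove the lemma by explicitly constructing a graph $G$, verifying its diameter equals $4$, and then checking greedy stability by exhausting the three kinds of single-edge moves for each vertex type. Since the remark after the theorem indicates the construction works precisely for $\alpha \geq 3$, I would fix such an $\alpha$ and tailor the sizes of the gadgets to it.

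For the construction, I would take two structurally symmetric ``clusters'' $C_L$ and $C_R$, each of internal diameter~$2$, joined by a single intermediate vertex $m$ that has exactly one edge into each cluster (a length-$2$ bridge). The clusters can be patterned on the diameter-$3$ network of Lemma~\ref{lem-NE-diam3-1alpha3} (two coupled 5-cycles with satellite groups) and their parameters calibrated so that every purchased edge contributes at least $\lceil\alpha\rceil$ vertices to its buyer's 2-neighborhood, ensuring no deletion is improving. Adding $m$ as a minimal bridge is what turns the diameter-$3$ pattern into a diameter-$4$ pattern while preserving the local stability inside each cluster.

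Next, I would verify $\mathrm{diam}(G) = 4$ in two short steps. First, exhibit a pair of vertices, one deep inside $C_L$ and one deep inside $C_R$, whose only shortest path must traverse the bridge and therefore has length $4$. Second, observe that every other pair is at distance at most $4$, since inside each cluster the diameter is~$2$ and the bridge adds only two hops. For greedy stability, I would group vertices into their symmetry classes (hubs, cycle nodes, satellites, and $m$) and, for each class, rule out each of the three single-edge moves separately: additions fail because any single new edge reaches strictly fewer than $\lceil\alpha\rceil$ previously missing $2$-neighbors (the ``other side'' is too small once the cluster size is correctly chosen); deletions fail by construction, since each purchased edge is priced into a worthwhile $2$-neighborhood gain; swaps fail by a matching argument that pairs each potential gain to a compensating loss.

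The main obstacle, and the reason $\alpha \geq 3$ is essential, is the swap analysis for the ``distant'' vertices that actually witness diameter~$4$: such a vertex can redirect its unique edge across the bridge and threaten to pull in the whole opposite cluster, while still retaining $m$ as a common $2$-neighbor between the old and new target. The numerical balance between the two clusters must be tight enough that the resulting gain never strictly exceeds the loss. Getting this accounting right is the technical heart of the proof, and it is exactly where $\alpha \geq 3$ becomes necessary, matching the $\alpha < 3$ non-existence direction of Theorem~\ref{thm:GE-diam-3}.
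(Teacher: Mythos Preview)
Your plan has a concrete gap in the diameter step. If each cluster has internal diameter~$2$ and the bridge through $m$ has length~$2$, then a vertex $x\in C_L$ at distance~$2$ from the bridge endpoint $u$ and a vertex $y\in C_R$ at distance~$2$ from $v$ satisfy $d(x,y)=2+2+2=6$, not~$4$; the sentence ``the bridge adds only two hops'' forgets the hops needed to reach the bridge endpoints. Forcing diameter~$4$ with a single bridge vertex requires $u$ and $v$ to be universal in their clusters, which is incompatible with ``patterning them on the diameter-$3$ network of Lemma~\ref{lem-NE-diam3-1alpha3}'' (that network has diameter~$3$, not~$2$, so the two stated properties of your clusters already contradict each other).

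Even after collapsing the clusters to stars centered at $u$ and $v$, the single-bridge architecture is unstable under swaps. A leaf $x\in C_L$ whose only edge goes to $u$ has $N_2(x)=C_L\cup\{m\}$; swapping to $v$ gives $N_2(x)=\{x\}\cup C_R\cup\{m\}$, which is strictly larger whenever $|C_L|=|C_R|$ (the swap ``gains itself'' as an extra node outside $C_R$). Orienting edges the other way only relocates the problem to $u$ or $v$. The paper sidesteps this by abandoning the single bridge entirely: it takes two small sets $A,B$, each with two hub nodes $a_t,a_b$ (resp.\ $b_t,b_b$) and $\lfloor\alpha\rfloor-2$ interior nodes, and connects them through \emph{four} groups $M_{tt},M_{tb},M_{bt},M_{bb}$ of $\lfloor\alpha\rfloor$ nodes apiece, each $M_{xy}$-node buying one edge to $a_x$ and one to $b_y$. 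The diameter-$4$ witnesses are the interior nodes $a_i,b_j$, and the multiplicity and sizing of the bridging groups is precisely what makes every single-edge deletion lose at least $2\lfloor\alpha\rfloor$ nodes while every addition gains at most $\lfloor\alpha\rfloor$, with swaps handled by the same counts.
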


\begin{proof}
\begin{figure}[t]
\centering
\includegraphics[scale=0.4]{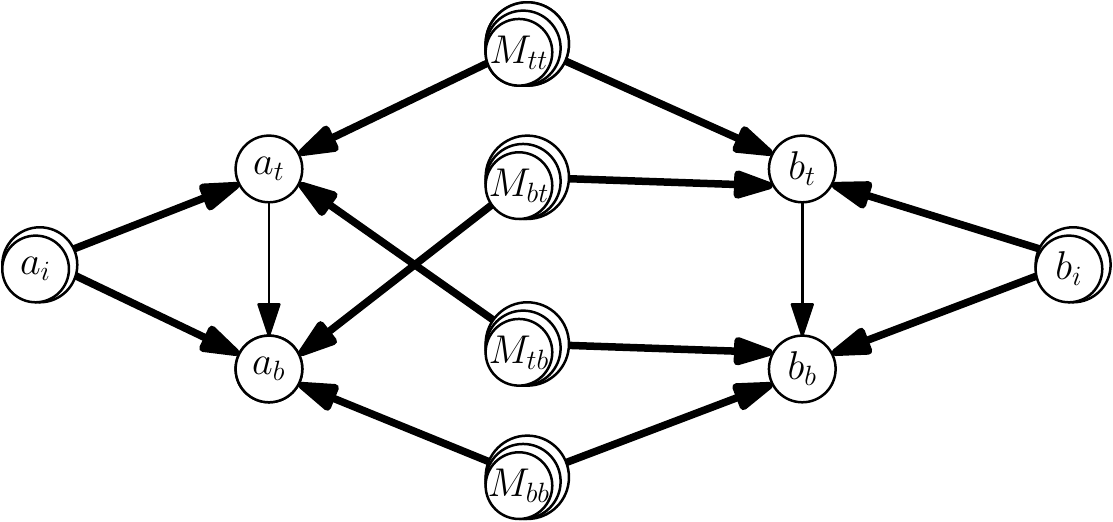}
\caption{Illustration of the construction for GE with a diameter of 4 (Lemma~\ref{lem:GE_diam4_a>=3}). Nodes with multiple borders and bold edges represent multiple identical nodes.}\label{fig-GE-Diam4-Example}
\end{figure}

We start by explaining how the networks are constructed, followed by a proof of greedy stability.
For a given $\alpha\geq3$ the network consists of the following nodes ($t$ for top, $b$ for bottom, see Fig.~\ref{fig-GE-Diam4-Example}):
\begin{itemize}
    \item Set $A$ containing $a_t$, $a_b$, and $a_i$ with $i\in \mathbb{N}$ and $1\leq i \leq \lfloor\alpha\rfloor-2$
    \item Set $B$ analogue to $A$
    \item Four Sets $M_{tt}$ $M_{tb}$ $M_{bt}$ $M_{bb}$ of exactly $\lfloor\alpha\rfloor$ nodes each
\end{itemize}
These nodes are connected through the following strategies:
\begin{itemize}
    \item $S_{a_t}=\{a_b\}$ and $S_{b_t}=\{b_b\}$
    \item $S_{a_i}=\{a_t,a_b\} $ and $S_{b_i}=\{b_t,b_b\}$ for all $i\in \mathbb{N}$ and $1 \leq i \leq \lfloor\alpha\rfloor-2$
    \item $S_{v}=\{a_x,b_y\}$ for every node $v\in M_{xy}$ and $x,y \in \{t,b\}$
\end{itemize}
This network has diameter 4. 

    To prove stability of the construction, it is sufficient to analyze the possibility of improving responses only for nodes of the set $A$ and $M_{tt}$ since nodes in $B$ resp. $M_{tb}$, $M_{bt}$, or $M_{bb}$ behave in a analogue way. 
    
    \paragraph{Node $a_t$:} This node builds one edge to $a_b$ and has all nodes except all $b_i$ in her 2-neighborhood. 
    Her edge to $a_b$ is needed for two-hop connections to the nodes of $M_{bt}$ and $M_{bb}$. 
    Connections to all nodes in $A$, $M_{tt}$, $M_{tb}$, and $b_t$, $b_b$ are established via incoming edges. 
    All nodes $b_i$ are at distance 3.
    
    This means establishing an additional edge is not beneficial, as at most $\lfloor\alpha\rfloor-2 < \alpha$ nodes are gained into her 2-neighborhood, i.e.\ all nodes $b_i$.
    Deleting the edge to $a_b$ is also not profitable, as $a_t$ would lose all two-hop connections to all nodes of $M_{bt}$ and $M_{bb}$, which are $2\lfloor\alpha\rfloor>\alpha$ many.
    Swapping an edge is no improving response either, as $a_t$ would again lose all nodes of $M_{bt}$ and $M_{bb}$, and there is no node that has more than these nodes in her direct neighborhood, i.e.\ nodes from $M_{bt}$, $M_{bb}$ and any $b_i$. 
    Thus $a_t$ has no incentive to deviate from her strategy.
    \paragraph{Node $a_b$:} This node builds no edges and has every node except all $b_i$ in her 2-neighborhood. 
    Additional edges cannot be improving responses, as at most all $b_i$ nodes ($\lfloor\alpha\rfloor-2 < \alpha$) are gained.
    Therefore $a_b$ has no incentive to deviate from her strategy, too.
    \paragraph{Nodes $a_i$:} All nodes $a_i$ follow the same strategy, so we analyze a representative $a_1$.
    This node builds two edges (to $a_t$ and $a_b$) and her 2-neighborhood contains all nodes of $A$ and of $M_{xy}$. The nodes of $B$ are at distance 3 or 4.
    Each of her edges allows two-hop connections to all other $a_i$, $a_t$, and $a_b$. The edge to $a_t$ exclusively adds the nodes of $M_{tt}$ and $M_{tb}$ to $N_2^G(a_1)$, and the edge to $a_b$ exclusively adds $M_{bt}$ and $M_{bb}$ to $N_2^G(a_1)$.
    Adding an edge is not an improvement for $a_1$, as only $|B|=\lfloor \alpha \rfloor$ nodes are not in $N_2^G(a_1)$.
    Deleting or swapping one edge is not beneficial for $a_1$, as this would imply analogous changes to her 2-neighborhood as previously discussed for $a_t$ and her edge to $a_b$, due to the symmetrical nature of the network.
    Thus all nodes $a_i$ are stable. As this covers all nodes in $A$, all nodes in $B$ are stable as well.
    \paragraph{Nodes of $M_{tt}$:} All nodes in $M_{tt}$ behave identical, so we analyze one $m\in M_{tt}$ only. 
    This node builds two edges (to $a_t$ and $b_t$), her 2-neighborhood contains all nodes in $A$, $B$, $M_{tt}$, $M_{tb}$, and $M_{bt}$. 
    All nodes in $M_{bb}$, i.e.\ nodes that build to neither $a_t$ nor $b_t$, are at distance 3.
    Either of her edges allows two-hop connections to all other nodes in $M_{tt}$.
    Only the edge to $a_t$ adds all nodes in $A$ and $M_{tb}$ to $N_2^G(m)$, and the edge to $b_t$ does the same for nodes in $B$ and $M_{bt}$.
    Adding an edge is not an improvement for $m$, as only $|M_{bb}|=\lfloor \alpha \rfloor$ nodes are not in $N_2^G(m)$.
    Deleting one edge is not beneficial, as $m$ would lose more than $\alpha$ nodes from her 2-neighborhood, either all nodes in $A$ and $M_{tb}$, or all nodes in $B$ and $M_{bt}$.
    Swapping one edge is not beneficial, as there is no way to get more than $|A|+|M_{tb}|= 2 \lfloor\alpha\rfloor$ nodes with one edge, while keeping the other edge where it is.
    Thus, all nodes in $M_{tt}$ are stable, and by symmetry all other $M_{xy}$ sets.
    This shows that this network is greedy stable for $\alpha\geq3$.  
\end{proof}
\end{document}